\documentclass[draftclsnofoot,onecolumn,12pt,romanappendics]{IEEEtran}
\usepackage{amsmath}
\usepackage{mathrsfs}
\usepackage{pifont}
\usepackage{bbding}
\usepackage{amsmath,epsfig}
\usepackage{stmaryrd}
\usepackage{amssymb}
\usepackage{amsfonts}
\usepackage{epic}
\usepackage{graphicx}
\usepackage{curves}
\usepackage{cite}

\usepackage{algorithm}

\usepackage{algpseudocode}

\usepackage{stfloats}
\usepackage{latexsym}
\usepackage{epstopdf}
\usepackage{epic}
\usepackage{bm}
\usepackage{multirow}
\usepackage{xcolor}
%\usepackage[numbers,sort&compress]{natbib}
%%  paper 引用：  Rui Zhang(5)，Derrick Wing(5),  Yingjun Zhang(2), Kaibin Huang(2),  Xin Wang, Yu Zhang, Xiqi Gao, Guowang Miao(3), jianhua Lu, Yangchen Yang , Jianwei Huang(2)//
\begin{document}
\newcommand{\reffig}[1]{Fig. \ref{#1}}
\newcommand{\figref}[1]{\figurename~\ref{#1}}

\newtheorem{definition}{Definition}
\newtheorem{theorem}{Theorem}
\newtheorem{corollary}{Corollary}
\newtheorem{proposition}{Proposition}
\newtheorem{lemma}{Lemma}
\newtheorem{property}{Property}
\newtheorem{remark}{Remark}
\renewcommand{\algorithmicrequire}{\textbf{Input:}}  %Use Input in the format of Algorithm
\renewcommand{\algorithmicensure}{\textbf{Output:}}  %UseOutput in the format of Algorithm

\title{Energy-Efficient Resource Allocation for  Wireless Powered Communication Networks   }

\author{\IEEEauthorblockN{Qingqing Wu, \emph{Student Member, IEEE},  Meixia Tao, \emph{Senior Member, IEEE}, \IEEEauthorblockN{Derrick Wing Kwan Ng},  \emph{Member, IEEE}, Wen Chen, \emph{Senior Member, IEEE}, and Robert Schober, \emph{Fellow, IEEE}
\thanks{Qingqing Wu, Meixia Tao, and Wen Chen  are with Department of Electronic Engineering, Shanghai Jiao Tong University, email: \{wu.qq, mxtao, wenchen\}@sjtu.edu.cn. Derrick Wing Kwan Ng and Robert Schober are with the Institute for Digital Communications,
Friedrich-Alexander-University Erlangen-N¨urnberg, email: \{wingn, rschober\}@ece.ubc.ca. This paper has been presented in part at IEEE ICC 2015 \cite{qing15_wpcn}. }}  }

\maketitle
\begin{abstract}
This paper considers a wireless powered communication network (WPCN), where multiple users harvest energy from a dedicated power station and then communicate with an information receiving station. Our goal is to investigate the maximum achievable energy efficiency (EE) of the network via joint time allocation and power control while taking into account the initial battery energy of each user.
 We first study the EE maximization problem in the WPCN without any system throughput requirement. We show that the EE maximization problem for the WPCN can be cast into EE maximization problems for two simplified networks via exploiting its special structure. For each problem,  we derive the optimal solution and provide the corresponding physical interpretation,  despite the non-convexity of the problems.  Subsequently, we study the EE maximization problem under a minimum system throughput constraint. Exploiting fractional programming theory, we  transform the resulting non-convex problem into a standard convex optimization problem.  This allows us to characterize the optimal solution structure of joint time allocation and power control and to derive an efficient iterative algorithm for obtaining the optimal solution.
 %Moreover, we reveal the connection between the user EE and the system EE: if the available transmission time is  sufficiently long, the EE optimal strategy is to let each user achieve its own maximum user EE.
% It is found that the when the available transmission time is  long enough, the most energy-efficient strategy is to let each user achieve its own maximum user EE. Otherwise, the system throughput requirement has to be met at the cost of scarifying the EE. For higher throughput requirement, the system tends to reduce the energy transfer time as well as the transmission time of energy-lacking users while increase the transmission time of the  energy-rich user.
 Simulation results verify our theoretical findings and demonstrate  the effectiveness of the proposed joint time and power optimization.

\end{abstract}

\begin{keywords}
Energy efficiency, wireless powered networks, time allocation, power control.
\end{keywords}
%\newpage
\section{Introduction}
%The increasing number of new wireless access devices and various  services lead to a significant increase in the demand for higher network capacity, higher user data rates. Although a lot attention has been put on providing  faster and efficient access to the networks with limited resources such as frequency, time, it comes at the high cost of higher energy consumption. The rising energy costs and carbon footprint of operating the various networks have led to an emerging trend of addressing the EE among different scenarios.
%
%Nowadays, the conventional multiuser mobile networks and sensor networks are still based on the battery powered devices which has to be replaced once the energy is depleted. However, the battery technology has slowly developed and does not seem to have much improvement at this stage. On the other hand,    frequently replacing batteries usually leads to high operation cost and is not convenient.  Particularly in some special circumstances such as toxic environments and body embedded scenarios,  battery replacing would result  in many underlying dangers and sometimes seems impossible.  Therefore, how to prolong the lifetime of the network based on wireless technologies is becoming a critical issue in battery limited networks.

 Energy harvesting allows devices to harvest energy  from ambient sources, and has attracted considerable attention in both academia and industry \cite{qing15_wpcn,krikidis2014simultaneous}.  {Energy harvesting from natural renewable sources, such as solar and wind, can provide a green and renewable energy supply for wireless communication systems.
 However, due to the intermittent nature of renewable energy sources, the energy collected at the receiver is not controllable, and the communication devices may not always be able to harvest sufficient energy.
 %the quality of service (QoS) may not be guaranteed  due to the intermittent  nature of the  renewable energy.
 On the other hand, it has been shown that wireless receivers can also harvest  energy from radio frequency (RF) signals, which is known as wireless energy transfer (WET) \cite{qing15_wpcn,krikidis2014simultaneous}. Since the RF signals are generated by dedicated devices, this type of energy source is more stable than natural renewable sources.}

% xiang2012robust,zhangrui13_mimo,derrick_wipt,xunzhou14_ofdm,kwan2013robust,NgLS14,ju14_throughput,ju2014user,ju2014optimal
Two different lines of research can be identified in WET. %\cite{grover,zhangrui13_mimo,xiang2012robust,derrick_wipt,xunzhou14_ofdm,kwan2013robust,NgLS14,ju14_throughput,
%ju2014user,ju2014optimal,huang14enabling}.characterizing the fundamental trade-offs between achievable rate and harvested energy based on the same signal. This concept is known as
{The first line focuses on simultaneous  wireless information and power transfer (SWIPT), where the wireless devices are able to split the received signal into two parts, one for information decoding and the other one for energy harvesting \cite{zhangrui13_mimo,derrick_wipt,xunzhou14_ofdm,kwan2013robust,NgLS14}.  SWIPT has been studied for example for multiple-input multiple-output (MIMO) \cite{zhangrui13_mimo}, multiuser orthogonal frequency division multiplexing access (OFDMA) \cite{derrick_wipt,xunzhou14_ofdm}, multiuser multiple-input single-output  (MISO) \cite{kwan2013robust},  and cognitive radio \cite{NgLS14}. These works generally consider the power splitting ratio at the receiver side to study the fundamental tradeoff between the achievable throughput and the harvested energy.
The second line of research in WET pursues wireless powered communication networks (WPCNs), where the wireless devices are first powered by WET and then use the harvested energy to transmit data signals
\cite{huang14enabling,ju14_throughput,ju2014user,ju2014optimal}.}
% constitutes a paradigm shift for wireless communication system design
 In \cite{ju14_throughput},  the downlink (DL) WET time and the uplink (UL) wireless information transmission (WIT) time are jointly optimized to maximize the system throughput. \textcolor{blue}{Then, WPCNs with user cooperation and full-duplex, relay,  multi-antenna, massive MIMO, and cognitive techniques are further studied in \cite{ju2014user,ju2014optimal,chen2015harvest,liuliang14_wpcn,yang2014throughput,lee2015cognitive,che2014spatial}, respectively.}
Moreover, the authors in \cite{ding2014power} investigate how an energy harvesting relay can distribute its harvested energy to support the communication of  multiple source-destination pairs.
However, most existing works on WET aim to  improve the system throughput while neglecting the energy utilization efficiency which is also a critical issue for next generation communication systems, especially for energy harvesting based systems \cite{wu2012green,li2014energy,xiaomingchen_13,miao2013energy1,qing1,huang2013simultaneous}.

 %Few works has been made to investigate the EE of energy transfer.

%Recently, green technology has  become one of the important design components of future communication systems due to
Because of the rapidly rising energy costs and the tremendous carbon footprints of existing systems \cite{wu2012green}, energy efficiency (EE), measured in bits per joule, is gradually accepted as an important design criterion for future communication systems \cite{qing15_scheduling,xiong2012,li2014energy}.
   The authors in \cite{derrick_wipt} study the resource allocation for EE maximization in SWIPT for OFDMA systems requiring minimum harvested energy guarantees for multiple receivers. However, the conclusions and proposed methods in \cite{derrick_wipt}  are not applicable to the WPCN scenario due to the fundamentally different system architecture.  Energy-efficient power allocation for large-scale MIMO systems is  investigated in \cite{xiaomingchen_13}. Yet, the resource allocation is optimized only for the single-receiver scenario and  cannot be directly extended to the multiuser case due to the coupling between  time allocation and power control.  Moreover,  the circuit power consumption of the user terminals is ignored in \cite{zhangrui13_mimo,derrick_wipt,xunzhou14_ofdm,kwan2013robust,NgLS14,ju14_throughput,ju2014user,ju2014optimal}.  However, as pointed out in \cite{kim2010leveraging}, the circuit power consumption is non-negligible compared to the power consumed for data transmission,  especially for  small scale and short range applications. { Furthermore, in the WPCN,  energy is not only consumed in the UL WIT stage but also in the DL WET stage during which no data is transmitted. In fact, a significant amount of energy may be consumed during DL WET in order to combat the wireless channel attenuation. Therefore, EE optimization is even more important in WPCN than in traditional wireless communication networks.}

%   The EE for WPCN configured with large scale antennas is investigated in \cite{xiaomingchen_13} employing energy beamforming. Although large scale antennas can improve the  harvested  energy at the receiver side, it also incurs  more additional power consumption which almost linearly scales  with the number of antennas \cite{massive_mimo14}.  Therefore, the effectiveness of large-scale antennas is  still unclear in \cite{xiaomingchen_13} regarding to the  system EE. %Moreover, since only the single user is considered in the system, the resource allocation for multiuser scenario can not be directly extended from  \cite{xiaomingchen_13}.

 In this paper,  we consider the WPCN where multiple users first harvest energy from a power station and then use the harvested energy to transmit signals to an information receiving station. %The differences between the system model in this work and that in \cite{ju14_throughput} are three-fold.
 The considered system model is most closely related to that in \cite{ju14_throughput}. However, there are three important differences.
 {First, a hybrid station is employed in \cite{ju14_throughput}, i.e., the power station for WET and the information receiving station for WIT are co-located. Hence, a user near the hybrid station enjoys not only higher WET channel gain in the DL but also higher WIT channel gain in the UL compared with users that are far from the hybrid station. This phenomenon is referred to as ``doubly near-far'' problem in \cite{ju14_throughput}. To avoid this problem, in this paper, the information receiving station is not restricted to be co-located with the power station. Hence, a user far from the power station can be near the information receiving station and visa versa.}
  Second, in contrast to \cite{ju14_throughput}, each user is equipped with a certain amount of initial energy and can store the harvested energy from the current transmission block for future use. This generalization provides users a higher degree of flexibility in utilizing the harvested energy and improves thereby the EE of practical communication systems.
 %Note that in \cite{liuliang14_wpcn}, the authors assume that all the users have zero initial energy, but the result therein is that the wireless powered users may not use all of their harvested energy. Therefore, the proposed algorithm in \cite{liuliang14_wpcn}  may be not consistent with real communication systems.
 Third, unlike \cite{ju14_throughput},  we focus on maximizing the system EE while guaranteeing a minimum required system throughput instead of maximizing the system throughput.
The main contributions  and results of this paper can be summarized as follows:
\begin{itemize}
\item {We formulate the EE maximization problem for multiuser WPCN with joint time allocation and power control. Thereby, we explicitly take into account the circuit energy consumption of the power station and the user terminals. In the first step, we investigate the system EE of WPCN providing best-effort communication, i.e., WPCN that do not provide any system throughput guarantee.
  Subsequently, to meet the QoS requirements of practical systems, the EE maximization problem is studied for the case with a minimum required system throughput.}

\item { For the case of best-effort communication, we reveal that the energy-efficient WPCN are equivalent to either the network in which the users are only powered by the initial energy, i.e., no WET is exploited, or the network in which the users are only powered by WET, i.e., no initial energy is used. We refer to the former type of network as ``initial energy limited communication network'' (IELCN) and to the latter type of networks as ``purely wireless powered communication network'' (PWPCN). }
    %Then, we address both problems and study their properties directly from a fractional-form perspective.
For the IELCN, we show that the most energy-efficient transmission strategy is to schedule only the user who has the highest user EE. In contrast, for the PWPCN, we find that:  1) in the WET stage, the power station always transmits with its  maximum power;  2) it is not necessary for  all users  to transmit signals in the WIT stage, but all scheduled users will deplete all of their energy; 3) the maximum system EE  can always be achieved by occupying all available time. Based on these observations, we derive a closed-form expression for the system EE based on the user EEs,  which transforms the original problem into a user scheduling problem that can be solved efficiently.
  \item For the case of throughput-constrained WPCN, exploiting fractional programming theory, we  transform the original problem into  a standard convex optimization problem. Through the analysis of the Karush-Kuhn-Tucker (KKT) conditions, we characterize the optimal structure of time allocation and power control, and propose an efficient iterative algorithm to obtain the optimal solution. We show that for a sufficiently long transmission time, the system EE is maximized by letting each user achieve its own maximum user EE. For a short transmission time, users can only meet the minimum system throughput requirement at the cost of sacrificing system EE.
      %Moreover, in a system where the power transfer is active, if some scheduled user has not used up all of its available energy, then the transmit powers of all scheduled users remain constant until its available energy is used up. As the minimum throughput requirement becomes more stringent, the system tends to reduce the energy transfer time while increasing this user's transmission time for UL WIT.
 \end{itemize}

The  remainder of this paper is organized as follows. Section II  introduces some preliminaries regarding WPCN. In Section III, we study best-effort communication in energy-efficient WPCN. In Section IV, we investigate the EE maximization problem in the presence of a minimum system throughput requirement. Section V provides extensive simulation results to verify our analytical findings and the paper is concluded in Section VI.
\section{System Model and Preliminaries}
%In this section, we introduce the system and power consumption models for  WPCN as well as the concept of user EE.
\subsection{System Model}
\begin{figure}[!t]
\centering
\includegraphics[width=4in]{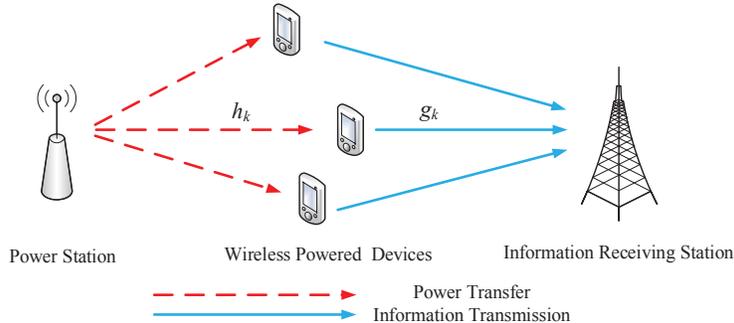}
\caption{The system model  of a multiuser wireless powered communication network. }\label{convergence}
\end{figure}
 We consider a  WPCN, which consists of one power station,  $K$ wireless-powered users, denoted by $U_k$, for $k=1, ..., K$,  and one information receiving station,  that is not necessarily co-located with the power station, as illustrated in Fig. 1. {As a special case,  the information receiving station and the power station may be integrated into one hybrid station as suggested in \cite{ju14_throughput}, which leads to lower hardware complexity but gives rise to the ``doublely near-far'' problem.}
  The ``harvest  and then transmit" protocol is employed for the WPCN. Namely, all users first harvest energy from the RF signal broadcasted  by the power station in the DL, and then transmit the information signal  to the information receiving
station in the UL \cite{ju14_throughput}.   For simplicity of implementation,  the power station, the information receiving station, and all users are equipped with a single antenna and operate in the time division mode over the same frequency band \cite{derrick_wipt,ju14_throughput}.  To be more general,  we assume that user $k$, for $k=1, ..., K$, is equipped with a rechargeable built-in battery  with an initial energy of $Q_k$ (Joule). The initial energy may be the energy harvested and stored in previous transmission blocks. This energy can be used  for WIT in the current block.

%Since all user devices have no additional energy source, the harvested energy will be first stored in a rechargeable battery built-in and then used to supply the energy cost for information transmission and basic circuit processing.
Assume that both the DL and the UL channels are quasi-static block fading channels.
%, where the channel coefficients remain constant during each block, but can vary from one block to the next.
 The DL channel power gain between the power station  and user terminal $k$ and the UL channel power gain between  user terminal $k$ and the information receiving  station are denoted as $h_k$ and $g_k$, respectively.  Note that both $h_k$ and $g_k$ capture the joint effect of  path loss,  shadowing, and multipath fading.  We also assume that the channel state information (CSI) is perfectly known at the power station as we are interested in obtaining an EE upper bound for practical WPCN \cite{ju14_throughput}.   Once calculated, the resource allocation policy is  then sent to the users to perform energy-efficient transmission. We assume that the energy consumed for estimating and exchanging CSI can be drawn from a dedicated battery which does not rely on the harvested energy \cite{huang2013simultaneous}.
%We also assume that the perfect channel state information (CSI) is available at the power station so that the time and power are  adaptively adjusted  in accordance with the channel variation. Alternatively, this allocation procedure can also be performed at the base station, where the allocation results are sent to the power station through a dedicated channel.
 { We note that signaling overhead and imperfect CSI will result in a performance degradation but the study of  their impact on the system EE  is beyond the scope of this paper. For a detailed treatment of CSI acquisition in WPCN, we refer to \cite{xu14,zeng15}.}

%We consider a block based transmission and the time of each block is normalized to be unit.  During each block,
%the assigned time for power station sending energy signal is $\tau_0 \in [0,1] $ while the time for $U_k$ sending independent information signal is $\tau_k \in [0,1]$.  Clearly, the total time for WET and WIT should follow that
%\begin{align}\label{eq1}
%\tau_0 + \sum_{k=1}^{K} \tau_k \leq 1.
%\end{align}

During the WET stage, the power station broadcasts an RF signal for a time duration $\tau_0$ at a transmit power $P_0$.
%The corresponding transmit power is denoted as $P_0$, i.e., $\mathbb{E}[|x_0|^2]=P_0$,  where $\mathbb{E}[\cdot]$ denotes statistical
%expectation, and the corresponding transmission time is $\tau_0$.  %Generally, the energy harvesting receiver at each user terminal harvests energy from the whole signal.
{The energy harvested from channel noise and the received UL WIT signals
from other users is assumed to be negligible,  since the noise power is generally much smaller than the received signal power and the transmit powers of the users are much smaller than the transmit power of the power station in practice \cite{ju14_throughput,ju2014user,ju2014optimal,shi2013joint}.}
Thus, the amount of energy harvested at  $U_k$ can be modeled as
\begin{align}\label{eq3}
E^h_k=\eta \tau_0 P_0 h_k,  ~~~~~k=1,\cdots, K,
\end{align}
where $\eta \in (0,1]$ is the energy conversion efficiency %from harvested energy to electrical energy with its value
which depends on the type of receivers \cite{ju14_throughput}.

During the WIT stage, each user $k$ transmits an independent information signal to the receiving station in a time division manner at a transmit power $p_k$.
%Note that although all users will deplete their harvested energy in a throughput  oriented system, this may no longer  holds for the EE oriented system. Since if the  UL channel of a certain user  is severely degraded, using all or part  of its energy  may not be cost effective any more for the system  EE. Therefore, the fundamental constraint for the energy harvesting and utilization can be expressed as
%\begin{align}\label{eq4}
%p_k\tau_k +p_c\tau_k \leq E^h_k,
%\end{align}
%where $p_c$ denote the circuit power consumption when the user is in the transmission mode. Without loss of generality, we assume that the  circuit power consumption for users are the same and the differentiated  scenario of circuit power can be directly extended from this paper.
  Denote  the information transmission time of user $k$ as $\tau_k$. Then, the achievable throughput of $U_k$  can be expressed as
\begin{align}\label{eq6}
B_k=\tau_k W \log_2\left(1+\frac{p_kg_k}{\Gamma\sigma^2}\right),
\end{align}
where $W$ is the bandwidth of the considered system, $\sigma^2$ denotes the noise variance, and $\Gamma$ characterizes the gap between the achievable rate and the channel capacity due to the use of practical modulation and coding schemes. In the sequel, we use $\gamma_k= \frac{g_k}{\Gamma\sigma^2}$ to denote the equivalent channel to noise ratio for WIT.
%The total system throughput is defined as the weighted sum rate of all the users, where the weight of $U_k$ is denoted as $\omega_k$.  Specifically, the controller  can vary the  weights of different users to offer different priorities and  enforce certain concepts of fairness  fairness among the users \cite{derrick_harvest13,ng2012energy3}.
Thus,  the total throughput of the WPCN, denoted as $B_{\rm{\rm{tot}}}$, is given by
\begin{align}\label{eq7}
B_{\rm{\rm{tot}}}=\sum_{k=1}^{K}B_k=\sum_{k=1}^{K} \tau_k W \log_2(1+{p_k\gamma_k}).
\end{align}

\subsection{Power Consumption Model}
The total energy consumption of the considered WPCN consists of two parts: the energy consumed during WET and WIT, respectively. For each part, we adopt the  energy consumption model in \cite{derrick_wipt,miao2013energy1,qing1,huang2013simultaneous}, namely,   the power consumption of a transmitter includes not only the over-the-air transmit power but also the circuit power consumed for hardware processing. On the other hand, according to \cite{xu2014throughput,kim2010leveraging}, the energy consumption when users do not transmit, i.e., when they are in the idle mode as opposed to the active mode, is negligible.

During the WET stage, the system energy  consumption, denoted as $E_{\rm{WET}}$,  is modeled as
\begin{align} \label{eq8}
E_{\rm{WET}} =\frac{P_0}{\xi}\tau_{0}- \sum_{k=1}^{K}E^h_k+P_c\tau_0,
\end{align}
{where $\xi  \in (0,1]$ is the power amplifier (PA) efficiency and  $P_c$ is the constant circuit power consumption of the power station accounting for antenna circuits, transmit filter, mixer, frequency synthesizer, and digital-to-analog
converter, etc.}
In (\ref{eq8}), $P_c\tau_0$ represents the circuit energy consumed by the power station during DL WET.
 %In (\ref{eq8}),  $P_0\tau_{0}$ is the transmit energy consumed by the power station while $\sum_{k=1}^{K}\eta P_0\tau_{0}h_k$ is the energy harvested by all users.
Note that ${P_0}\tau_{0}- \sum_{k=1}^{K}E^h_k$ is the energy loss due to wireless channel propagation, i.e., the amount of energy that is emitted by the power station but not harvested by the users. {In practice, ${P_0}\tau_{0}- \sum_{k=1}^{K}E^h_k=  P_0\tau_{0}\left( {1}- \sum_{k=1}^{K}\eta h_k\right)$ is always positive due to the law of energy conservation and $0<\eta\leq1$ \cite{derrick_wipt,xiaomingchen_13}.}
%\begin{remark}
%Mathematically, it is possible that $E_{\rm{WET}}$  takes a negative value. However, $E_{\rm{WET}}\geq 0$ always holds true for the considered systems due to the following three reasons. First, $P_0\tau_{0}\geq \sum_{k=1}^{K}P_0\tau_{0}h_k$ always holds true due to the  law of energy conservation. Second, the wireless harvesting receivers are  passive  devices, thus the energy conversion efficiency  $0<\eta\leq1$. Third, the non-negligible circuit power of the power station is considered.
%\end{remark}

During the WIT stage, each user independently transmits its own signal with transmit power $p_k$ during time $\tau_k$. Thus, the energy consumed by  $U_k$ can be modeled as
\begin{align} \label{eq8_1}
E_k= \frac{p_k}{\varsigma}\tau_k+p_{c}\tau_k,
\end{align}
{where $\varsigma$ and $p_{\rm{c}}$ are the PA efficiency and the circuit power consumption of the user terminals, respectively, which are assumed to be identical for all users without loss of generality.} In practice,  $E_k$ has to satisfy $E_k\leq E^h_k+Q_k$, which is known as the energy causality constraint in energy harvesting systems \cite{derrick_wipt,xiaomingchen_13}.
%From (\ref{eq4}), we know that all users may not deplete their energy from the perspective of EE, especially when the circuit power is also considered, which suggests that there may be some energy would be stored for future use due to current degraded channel condition and time limitation. Thus,  we denote the energy saved at $U_k$ as $E^s_k$, i.e.,
%\begin{align}
%E^s_k=E^h_k - p_k\tau_k - p_c\tau_k.
%\end{align}

Therefore, the total energy consumption of the whole system, denoted as $E_{\rm{\rm{tot}}}$, is given by
\begin{align}\label{eq8_2}
E_{\rm{\rm{tot}}}=E_{\rm{WET}}+ \sum_{k=1}^{K}E_k.
\end{align}
\subsection{User Energy Efficiency}
In our previous work \cite{qing15_scheduling}, we  introduced the concept of user EE and it was shown to be directly connected to the system EE. In this subsection, we review the definition of user EE in the context of WPCN.

 \emph{Definition 1}  (User Energy Efficiency): The EE  of  user $k$,  $k=1,\cdots,K$,  is defined as the ratio of its  achievable throughput and its consumed energy in the WIT stage, i.e.,
\begin{eqnarray}\label{eq11}
ee_{k}= \frac{B_k}{E_k}=\frac{\tau_kW\log_{2}\left(1+p_k\gamma_k\right)}{\tau_k\frac{p_{k}}{\varsigma}+\tau_kp_c}=
\frac{W\log_{2}\left(1+p_k\gamma_k\right)}{\frac{p_{k}}{\varsigma}+p_c},
\end{eqnarray}
where the energy consumption includes the energy consumed in both the PA and the electronic circuits.
{Hence, $ee_k$ represents the energy utilization efficiency of user $k$ in WPCN.}

 {It can be shown that $ee_{k}$ is a strictly quasiconcave function of $p_{k}$ and has a unique stationary point which is also the maximum point \cite{Boyd}.
Therefore, by setting the derivative of $ee_{k}$ with respect to $p_{k}$ to zero, we obtain
\begin{align}
\frac{d ee_k}{d p_k}=\frac{       \frac{W\gamma_k}{(1+p_k\gamma_k)\ln2}(\frac{p_k}{\varsigma}+p_c)  - W\log_2(1+p_k\gamma_k)\frac{1}{\varsigma}       }{      \left( \frac{p_k}{\varsigma}+p_c\right)^2                }=0.
\end{align}
After some straightforward manipulations, the optimal transmit power can be expressed as
\begin{align}\label{eq12}
p^{\star}_{k}=\left[\frac{W\varsigma}{ ee^{\star}_{k}\ln2}-\frac{1}{\gamma_{k}}\right]^+, \forall\, k,
\end{align}
where  $[x]^+\triangleq \max\{x,0\}$ and $ee_k^{\star}$ is  the maximum EE of user $k$ in (\ref{eq11}).
Based on (\ref{eq11}) and (\ref{eq12}), the numerical values of  ${ee}^{\star}_{k}$ and $p_{k}^{\star}$  can be easily obtained using the bisection method \cite{Boyd}.} {As shown in the sequel, the user EE plays an important role in deriving an analytical expression for the maximum system EE as well as for interpreting the obtained expression.}

\section{Energy-Efficient Resource Allocation For Best-Effort WPCN}
In this section, we study the resource allocation in best-effort WPCN with the objective to maximize the system EE, which is defined as the ratio of the achieved system throughput to the consumed  system energy, i.e.,
%  \begin{equation}\label{eq9}
$EE=\frac{B_{\rm{tot}}}{E_{\rm{tot}}}.$
%  \end{equation}
 Specifically,  our goal  is to jointly optimize the \emph{time allocation} and \emph{power control} in the \emph{DL} and the \emph{UL}  for maximizing the system EE.
The system EE maximization can be formulated  as
\begin{align}\label{eq10}
EE^*= \mathop {\max }\limits_{P_{0},\tau_0,\{p_{k}\}, \{\tau_{k}\}, \forall\, k}~~ &\frac{\sum_{k=1}^{K} \tau_kW\log_2\left(1+p_k\gamma_k\right)}
{{P_{0}}\tau_0(\frac{1}{\xi}-\sum_{k=1}^{K}\eta h_k )+P_c\tau_0 + \sum_{k=1}^{K}(\frac{p_k}{\varsigma}\tau_k+p_{c}\tau_k)} \nonumber \\
\text{s.t.} ~~~~~~~&  \text{C1:}~~ P_{0}\leq P_{\mathop{\max}}, \nonumber \\
& \text{C2:}~~\frac{p_{k}}{\varsigma}\tau_{k}+p_{c}\tau_k \leq \eta P_{0}\tau_{0}h_k+Q_k, ~ \forall\, k,  \nonumber \\
&\text{C3:}~~\tau_{0}+\sum_{k=1}^{K}\tau_k\leq T_{\mathop{\max}}, \nonumber \\
%& \text{C4:}~~\sum_{k=1}^{K}\tau_kW\log_{2}\left(1+p_{k}\gamma_{k}\right) \geq R_{\mathop{\min}},\nonumber\\
& \text{C4:}~~\tau_{0}\geq0, ~  \tau_k\geq  0, ~\forall\, k,    \nonumber\\
& \text{C5:}~~P_{0}\geq0, ~ p_k\geq  0, ~\forall\, k.
\end{align}
where $EE^*$ is the maximum system EE of WPCN.
In problem (\ref{eq10}),  constraint C1 limits the DL transmit power of the power station to $P_{\mathop{\max}}$.
C2 ensures that the energy consumed for WIT in the UL does not exceed the  total available  energy which is comprised of both the harvested energy $\eta P_{0}\tau_{0}h_k$ and the initial energy $Q_k$.
In C3, $T_{\mathop{\max}}$  is the total available transmission time for the considered time block.
  %and  it can  not be  normalized to one as what in \cite{ju14_throughput}.  This is because  the total available time may not be used up in the energy-efficient WPCN due to the additional circuit energy consumption and the limited initial energy.
 C4 and C5 are non-negativity constraints on the time allocation and power control  variables, respectively.
Note that problem (\ref{eq10}) is neither convex nor  quasi-convex  due to the fractional-form objective function and the coupled optimization variables.
In general, there is no standard method for solving non-convex optimization problems efficiently. Nevertheless, in the following, we show that the considered problem can be efficiently solved by exploiting the fractional structure of the objective function in (\ref{eq10}).

\subsection{Equivalent Optimization Problems}
First, we  show that the EE maximization problem for WPCN is equivalent to two optimization problems for two simplified sub-systems. %both of which have clear physical interpretations.
To facilitate the presentation, we define   $\Phi_{\mathcal{P}}$ and $\Phi_{\mathcal{I}}$ as the set of users whose initial energy levels are zero and strictly positive, respectively, i.e., $Q_k=0$  for $k\in \Phi_{\mathcal{P}}$ and $Q_k>0$ for $k\in \Phi_{\mathcal{I}}$. $\Phi_{\mathcal{P}}=\{k |Q_k=0\}$
%we first transform the original fractional-form objective function into an equivalent one which exhibits the subtractive form, and with variable substitution, we further show that the transformed problem can be cast into the standard convex optimization problem. Then we investigate the structure of the optimal solution and reveal several basic properties of energy efficient transmission in WPCN. At last, we propose an efficient approach to obtain the optimal solution.

\begin{theorem}\label{theorem0}
Problem (\ref{eq10}) is equivalent to one of the following two problems:

  {1)  The EE maximization in the pure WPCN (PWPCN) (i.e., the system where DL WET is used and only the users in $\Phi_{\mathcal{P}}$ are present for UL WIT):
   \begin{align}\label{eq132}
EE^*_{\rm{PWPCN}} \triangleq \mathop {\max }\limits_{P_{0},\tau_0,\{p_{k}\}, \{\tau_{k}\}, k\in \Phi_{\mathcal{P}} } ~~&  \frac{\sum_{k\in \Phi_{\mathcal{P}}} \tau_kW\log_2\left(1+p_k\gamma_k\right)}
{{P_{0}}\tau_0(\frac{1}{\xi}-\sum_{k=1}^{K}\eta h_k )+P_c\tau_0 + \sum_{k\in \Phi_{\mathcal{P}}}(\frac{p_k}{\varsigma}\tau_k+p_{c}\tau_k)}  \nonumber \\
\text{s.t.} ~~~~ &\text{C1:}~~ P_{0}\leq P_{\mathop{\max}},  \nonumber\\
~~& \text{C2:}~~ \frac{p_{k}}{\varsigma}\tau_{k}+p_{c}\tau_k \leq \eta P_{0}\tau_{0}h_k, ~~ k\in \Phi_{\mathcal{P}},  \nonumber\\
~~&  \text{C3:}~~\tau_{0}+\sum_{ k\in \Phi_{\mathcal{P}}}\tau_k\leq T_{\max},  \nonumber \\
~~& \text{C4:}~~\tau_{0}\geq0, ~  \tau_k\geq  0, ~k\in \Phi_{\mathcal{P}},    \nonumber\\
~~& \text{C5:}~~P_{0}\geq0, ~ p_k\geq  0, ~k\in \Phi_{\mathcal{P}},
\end{align}
where $EE^*_{\rm{PWPCN}}$ is used to denote the maximum system EE  of the PWPCN.}

{ 2) The EE maximization in the initial energy limited communication network (IELCN) (i.e., the system where DL WET is not used and only the users in $\Phi_{\mathcal{I}}$ are present for UL WIT):
  \begin{align}\label{eq133}
EE^*_{\rm{IELCN}}\triangleq\mathop {\max }\limits_{\{p_{k}\},\{\tau_{k}\}, k\in\Phi_{\mathcal{I}}}~~ &  \frac{ \sum_{ k\in\Phi_{\mathcal{I}}}\tau_kW\log_2\left(1+p_k\gamma_k\right)}
{\sum_{ k\in\Phi_{\mathcal{I}}}(\frac{p_k}{\varsigma}\tau_k+p_{c}\tau_k)}  \nonumber \\
\text{s.t.}  ~~& \text{C2:}~~ \frac{p_{k}}{\varsigma}\tau_{k}+p_{c}\tau_k \leq Q_k, ~~  k\in\Phi_{\mathcal{I}},  \nonumber\\
~~& \text{C3:}~~ \sum_{k\in\Phi_{\mathcal{I}}}\tau_k\leq T_{\rm{\max}}, \nonumber \\
~~& \text{C4:}~~ \tau_k\geq  0, ~k\in \Phi_{\mathcal{I}},    \nonumber\\
~~& \text{C5:}~~ p_k\geq  0, ~k\in \Phi_{\mathcal{I}},
\end{align}
where $EE^*_{\rm{IELCN}}$ is used to denote the maximum system EE of the IELCN.}

%which  can be  regard as the EE maximization problem of the energy limited communication networks (IELCN) since the WET is not employed. The optimal objective value is denoted as $EE^*_{\rm{IELCN}}$.
{If $EE^*_{\rm{PWPCN}} \geq EE^*_{\rm{IELCN}}$, then $EE^*=EE^*_{\rm{PWPCN}}$; otherwise $EE^*=EE^*_{\rm{IELCN}}$, i.e., either problem (\ref{eq132}) or problem (\ref{eq133}) provides the optimal solution for problem (\ref{eq10}).}
 \end{theorem}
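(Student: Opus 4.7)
The plan is to prove the two inequalities $EE^* \geq \max\{EE^*_{\rm{PWPCN}}, EE^*_{\rm{IELCN}}\}$ and $EE^* \leq \max\{EE^*_{\rm{PWPCN}}, EE^*_{\rm{IELCN}}\}$ separately. For the easy direction, any feasible point of (\ref{eq132}) extends to a feasible point of (\ref{eq10}) by setting $p_k = \tau_k = 0$ for every $k \in \Phi_{\mathcal{I}}$; constraint C2 of (\ref{eq10}) for those users then reads $0 \leq \eta P_0\tau_0 h_k + Q_k$, which is trivially true, and the objectives of (\ref{eq10}) and (\ref{eq132}) coincide on this point. Analogously, any feasible point of (\ref{eq133}) extends to (\ref{eq10}) by choosing $P_0 = \tau_0 = 0$ and $p_k = \tau_k = 0$ for $k \in \Phi_{\mathcal{P}}$, which collapses C2 to $\frac{p_k}{\varsigma}\tau_k + p_c \tau_k \leq Q_k$ exactly as in (\ref{eq133}). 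Hence $EE^* \geq EE^*_{\rm{PWPCN}}$ and $EE^* \geq EE^*_{\rm{IELCN}}$.

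For the upper bound, my main tool is the elementary inequality $\frac{a_1+a_2}{b_1+b_2} \leq \max\{\tfrac{a_1}{b_1}, \tfrac{a_2}{b_2}\}$ for $b_1, b_2 > 0$. Given any feasible solution of (\ref{eq10}), I would split the objective by user class: let $(B_P, D_P) = \bigl(\sum_{k\in\Phi_{\mathcal{P}}} B_k,\; E_{\rm{WET}} + \sum_{k\in\Phi_{\mathcal{P}}} E_k\bigr)$ be the ``PWPCN side,'' absorbing the entire WET cost, and let $(B_I, D_I) = \bigl(\sum_{k\in\Phi_{\mathcal{I}}} B_k,\; \sum_{k\in\Phi_{\mathcal{I}}} E_k\bigr)$ be the ``IELCN side,'' containing only the WIT energy of $\Phi_{\mathcal{I}}$ users. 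By construction $B_{\rm{tot}} = B_P + B_I$ and $E_{\rm{tot}} = D_P + D_I$, so the ratio inequality yields $EE \leq \max\{B_P/D_P,\; B_I/D_I\}$. I then bound each piece. The PWPCN piece is the EE of the restricted tuple $(P_0, \tau_0, \{p_k, \tau_k\}_{k\in\Phi_{\mathcal{P}}})$, which is feasible for (\ref{eq132}) since $Q_k = 0$ for $k\in\Phi_{\mathcal{P}}$ makes C2 of (\ref{eq132}) identical to C2 of (\ref{eq10}); thus $B_P/D_P \leq EE^*_{\rm{PWPCN}}$. For the IELCN piece, I use the identity $B_k/E_k = ee_k(p_k)$ from (\ref{eq11}) to write $B_I/D_I = \sum_{k\in\Phi_{\mathcal{I}}} \frac{E_k}{D_I}\, ee_k(p_k)$, a convex combination of user EEs; since each $ee_k(p_k) \leq ee_k^{\star}$ by quasiconcavity, we obtain $B_I/D_I \leq \max_{k\in\Phi_{\mathcal{I}}} ee_k^{\star}$, and this maximum is itself at most $EE^*_{\rm{IELCN}}$ because scheduling a single user $k$ at $p_k = p_k^{\star}$ with any $\tau_k$ satisfying $E_k \leq Q_k$ is feasible for (\ref{eq133}) and achieves EE exactly $ee_k^{\star}$.

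The main obstacle, and the subtle point of the argument, is that a $\Phi_{\mathcal{I}}$ user might consume harvested energy in the optimal solution of (\ref{eq10}), so the restricted tuple $\{p_k, \tau_k\}_{k\in\Phi_{\mathcal{I}}}$ need not be feasible for (\ref{eq133}); my bound sidesteps this by controlling $B_I/D_I$ purely through user EEs, never invoking feasibility of that restriction in (\ref{eq133}). Edge cases where one denominator vanishes must also be handled: if $D_I = 0$, the original solution already fits (\ref{eq132}) and is bounded by $EE^*_{\rm{PWPCN}}$, while if $D_P = 0$ the positivity of $P_c$ and of $\tfrac{1}{\xi} - \sum_k \eta h_k$ forces $P_0\tau_0 = 0$, which both silences $\Phi_{\mathcal{P}}$ users and reduces C2 for $\Phi_{\mathcal{I}}$ to $E_k \leq Q_k$, making the solution feasible for (\ref{eq133}) with EE bounded by $EE^*_{\rm{IELCN}}$. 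Combining the two directions gives $EE^* = \max\{EE^*_{\rm{PWPCN}}, EE^*_{\rm{IELCN}}\}$, so one of (\ref{eq132}) or (\ref{eq133}) delivers the optimum of (\ref{eq10}).
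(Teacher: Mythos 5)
Your proposal is correct, and its skeleton coincides with the paper's: both rest on the elementary bound $\frac{a+c}{b+d}\leq\max\{\frac{a}{b},\frac{c}{d}\}$ applied to the split in which the PWPCN side absorbs the entire WET energy and the IELCN side carries only the WIT energy of the $\Phi_{\mathcal{I}}$ users. Where you genuinely diverge is in handling the one non-trivial obstacle, namely that the restriction of an optimal solution of (\ref{eq10}) to $\Phi_{\mathcal{I}}$ need not be feasible for (\ref{eq133}) because those users may have spent harvested energy. The paper repairs feasibility directly: it rescales the $\Phi_{\mathcal{I}}$ transmission times by $\alpha=\min_{k\in\Phi_{\mathcal{I}}}\frac{Q_k}{Q_k+\eta P_0^*\tau_0^*h_k}\leq 1$, which restores $E_k\leq Q_k$ while leaving the ratio unchanged, and then invokes optimality of (\ref{eq133}). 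You instead bypass feasibility altogether by writing $B_I/D_I$ as a convex combination $\sum_k\frac{E_k}{D_I}\,ee_k(p_k)$ of user EEs, bounding it by $\max_{k\in\Phi_{\mathcal{I}}}ee_k^{\star}$, and observing that this maximum is achievable within (\ref{eq133}) by scheduling a single user at $p_k^{\star}$ with a small enough $\tau_k$. Both arguments are sound; the paper's scaling is more self-contained (it does not presuppose anything about the structure of $EE^*_{\rm{IELCN}}$), while yours is arguably cleaner and foreshadows Theorem 3's conclusion that $EE^*_{\rm{IELCN}}=\max_k ee_k^{\star}$. You are also more careful than the paper in two respects: you prove the achievability direction $EE^*\geq\max\{EE^*_{\rm{PWPCN}},EE^*_{\rm{IELCN}}\}$ explicitly by embedding feasible points, and you treat the degenerate cases $D_P=0$ and $D_I=0$ (which correspond to the paper's cases 1 and 2) with an explicit argument that $P_c>0$ forces $\tau_0=0$ when $D_P=0$. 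One cosmetic remark: the inequality $ee_k(p_k)\leq ee_k^{\star}$ follows from the definition of $ee_k^{\star}$ as a maximum, not from quasiconcavity, which is only needed to compute the maximizer.
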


 \begin{proof}
 Please refer to  Appendix A.
 \end{proof}

 Theorem \ref{theorem0} reveals that the  EE maximization problem in WPCN with initial stored energy can be cast into the EE maximization in one of the  two simplified systems, i.e., PWPCN or IELCN. In the following, we study the EE  and characterize its properties for each of the systems independently. {Note that for the special case that $EE^*_{\rm{PWPCN}} = EE^*_{\rm{IELCN}}$, without loss of generality, we assume that the system EE of problem (\ref{eq10}) is achieved by PWPCN in order to preserve the initial energy of users belonging to $\Phi_{\mathcal{I}}$. }  In the following, we study the EE as well
as characterizing the properties of each system independently.

%\begin{remark}
%Note that for the IELCN, the energy-efficient transmission results in a single user scheduling strategy where the user has the highest user EE.
%\end{remark}
\subsection{Properties of Energy-Efficient PWPCN}
The following lemma characterizes the operation of the power station for energy-efficient transmission.
\begin{lemma}\label{theorem1}
In energy-efficient PWPCN,  the power station always transmits with its maximum allowed power, i.e., $P_0 =P_{\mathop{\max}}$, for DL WET.
\end{lemma}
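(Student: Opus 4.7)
The plan is to argue by contradiction: assume an optimal solution with $P_0^\star < P_{\max}$ exists and construct a strictly better (or at least equally good) feasible solution with $P_0 = P_{\max}$. The key observation that makes this work is that in the objective and all constraints, the WET-stage variables $P_0$ and $\tau_0$ enter the denominator and the energy-causality constraint C2 only through the product $P_0\tau_0$, except for the single term $P_c\tau_0$ in the denominator and the time-budget constraint C3. So scaling $P_0$ up and $\tau_0$ down proportionally preserves harvested energy while shrinking the ``circuit'' overhead paid during WET.

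Concretely, I would let $(P_0^\star, \tau_0^\star, \{p_k^\star\}, \{\tau_k^\star\})$ be optimal with $P_0^\star < P_{\max}$, and define the candidate solution
\begin{equation*}
\tilde{P}_0 = P_{\max}, \qquad \tilde{\tau}_0 = \frac{P_0^\star}{P_{\max}}\,\tau_0^\star, \qquad \tilde{p}_k = p_k^\star, \qquad \tilde{\tau}_k = \tau_k^\star.
\end{equation*}
Then I verify feasibility step by step: C1 holds with equality; C2 is preserved because $\tilde{P}_0\tilde{\tau}_0 = P_0^\star\tau_0^\star$ keeps the RHS $\eta \tilde{P}_0 \tilde{\tau}_0 h_k$ unchanged while the LHS is unchanged; C3 holds since $\tilde{\tau}_0 < \tau_0^\star$; and C4, C5 are obvious. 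Next, I compare the objective. The numerator depends only on $\{p_k,\tau_k\}$ and so is unchanged. In the denominator, the term $P_0\tau_0(1/\xi - \sum_k \eta h_k)$ is unchanged, the WIT terms $\sum_k(p_k\tau_k/\varsigma + p_c\tau_k)$ are unchanged, while the term $P_c\tau_0$ strictly decreases. Hence the new objective is strictly larger, contradicting the optimality of $P_0^\star$.

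A few edge cases should be handled for completeness. If $\tau_0^\star = 0$, then no energy is harvested, so C2 forces $p_k^\star\tau_k^\star = 0$ for all $k\in\Phi_{\mathcal{P}}$, giving zero throughput and thus EE $=0$; this case is trivially dominated by any solution with $\tau_0 > 0$ and $P_0 = P_{\max}$ that yields positive throughput, and we may WLOG take $P_0 = P_{\max}$. In the borderline situation $P_c = 0$ the denominator is merely non-increasing rather than strictly decreasing, in which case the constructed solution is still optimal and shows that $P_0 = P_{\max}$ is achievable at an optimum (consistent with the wording ``always transmits with its maximum''). The only nontrivial step is the feasibility check for C2 under the rescaling, which I expect to be the main technical point to pin down carefully; everything else follows from the linear dependence of the relevant quantities on $P_0\tau_0$.
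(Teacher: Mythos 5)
Your proof is correct and follows essentially the same route as the paper's Appendix B: both argue by contradiction, keeping the product $P_0\tau_0$ (the emitted WET energy) fixed while raising $P_0$ to $P_{\max}$ and shrinking $\tau_0$, so that only the circuit term $P_c\tau_0$ in the denominator strictly decreases. Your additional treatment of the edge cases $\tau_0^\star=0$ and $P_c=0$ goes slightly beyond what the paper writes down, but the core argument is identical.
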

\begin{proof}
Please refer to  Appendix B.
\end{proof}
\begin{remark}
{This lemma seems contradictory to intuition at first. In conventional non-WPCN systems, since only the transmit power is optimized, the EE is generally first increasing and then decreasing with the transmit power when the circuit power is taken into account \cite{derrick_wipt,wu2012green,li2014energy,xiaomingchen_13,miao2013energy1,qing1,qing15_scheduling}.  Yet, in
 PWPCN, where the transmission  time can also be optimized, letting the power station transmit with the maximum allowed power reduces the time  needed for WET  in the DL, and thereby reduces the energy consumed by the circuits of the power station. Moreover, it also gives the users more time to improve the system throughput for WIT in the UL.} %Actually, only optimizing the transmit power is a special case that users are allocated with equal transmission time.
\end{remark}

The following lemma characterizes the time utilization for energy-efficient transmission.
\begin{lemma}\label{theorem2}
In energy-efficient PWPCN,
 the maximum system EE can always be  achieved by using up all the available transmission time, i.e.,
%\begin{align}\label{eq13}
$\tau_0+\sum_{k\in \Phi_{\mathcal{P}}}\tau_k=T_{\max}.$
%\end{align}
\end{lemma}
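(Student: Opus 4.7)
The natural approach is a scaling argument: I would show that, starting from any optimal solution that leaves some time unused, we can stretch every time variable by a common factor to exhaust the budget while leaving both the system EE and the feasibility of all constraints unchanged.

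Concretely, suppose for contradiction that an optimal tuple $(P_0^*,\tau_0^*,\{p_k^*\},\{\tau_k^*\})$ for problem (\ref{eq132}) achieves $EE^*_{\mathrm{PWPCN}}$ but satisfies $\tau_0^*+\sum_{k\in\Phi_{\mathcal{P}}}\tau_k^* = T'<T_{\max}$. I would then define the scaling factor $\alpha = T_{\max}/T' > 1$ and construct a new candidate solution by setting $\tilde{\tau}_0 = \alpha\tau_0^*$, $\tilde{\tau}_k = \alpha\tau_k^*$ for all $k\in\Phi_{\mathcal{P}}$, while keeping the powers unchanged, i.e., $\tilde{P}_0 = P_0^*$ and $\tilde{p}_k = p_k^*$. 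Constraint C1 is trivially preserved, C3 becomes tight by construction, and C4, C5 still hold. The key check is the energy causality constraint C2: both sides of $\frac{p_k}{\varsigma}\tau_k + p_c \tau_k \le \eta P_0 \tau_0 h_k$ are homogeneous of degree one in the time variables, so multiplying $\tau_0$ and $\tau_k$ by the same $\alpha$ preserves the inequality.

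Next I would observe that the objective in (\ref{eq132}) is a ratio in which every term of the numerator is proportional to some $\tau_k$, and every term of the denominator is proportional to either $\tau_0$ or some $\tau_k$. Hence scaling all time variables by $\alpha$ multiplies both the numerator and the denominator by $\alpha$, leaving the ratio unchanged. Therefore the new solution is feasible, attains exactly $EE^*_{\mathrm{PWPCN}}$, and satisfies $\tilde{\tau}_0+\sum_{k\in\Phi_{\mathcal{P}}}\tilde{\tau}_k = T_{\max}$, establishing the claim.

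I do not anticipate a genuine obstacle here; the only place a reader could object is the homogeneity of C2, so I would spell that out explicitly. It is worth noting that this argument crucially exploits the PWPCN structure in which the harvested energy $\eta P_0 \tau_0 h_k$ scales with $\tau_0$; in the IELCN case the initial energy $Q_k$ would not scale, which is precisely why the analogous statement is not invoked there.
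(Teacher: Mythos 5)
Your proposal is correct and follows essentially the same scaling argument as the paper's proof in Appendix C: scale all time variables by $\alpha = T_{\max}/(\tau_0^*+\sum_{k\in\Phi_{\mathcal{P}}}\tau_k^*)>1$, keep the powers fixed, and observe that feasibility is preserved and the EE ratio is invariant since numerator and denominator are both homogeneous of degree one in the time variables. Your explicit check of the homogeneity of C2 and the remark about why the argument fails for IELCN are welcome clarifications, but the underlying proof is the same.
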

\begin{proof}
Please refer to  Appendix C.
\end{proof}

\begin{remark}{Lemma \ref{theorem2} indicates that, in PWPCN, using up the entire available transmission time is optimal.}
%there is  a special relationship between the energy transfer time and the information transmission time of each user.
In fact, if the total available  time is not completely used up, increasing the  time for both DL WET and UL WIT by the same factor maintains the system EE at least at the same level, while improving the system throughput.
\end{remark}

Next, we study how the wireless powered users are scheduled for utilizing their harvested energy for energy-efficient transmission. %Let $ee^*_k$ denote the corresponding user EE in achieving $EE^*_{\rm{PWPCN}}$.
\begin{lemma}\label{theorem3}
%Assume that the optimal  EE of the system and user $k$ are $EE^*$ and $ee_k^*$, respectively.
In energy-efficient PWPCN, the following scheduling strategy is optimal:\\
%if user $k$ is scheduled, then it will use all of its harvested energy, i.e., $\tau_k(p_k+p_c)=\eta P_{\max}\tau_0h_k$; otherwise, it  preserves all of its harvested energy.
1)  If $EE^*_{\rm{PWPCN}}<ee^{\star}_k$, $\forall\, k\in \Phi_{\mathcal{P}}$,  then  user $k$ is scheduled, i.e., $\tau^*_k>0$,  and  it will use up all of its energy, i.e., $\tau^*_k(\frac{p^*_k}{\varsigma}+p_c)=\eta P_{\max}\tau^*_0h_k$.\\
%%\begin{align}
%$\tau_k=\frac{\eta P_{\rm{\max}}h_k\tau_0}{p_k+p_c}$; \\
%%\end{align}
2)  If $EE^*_{\rm{PWPCN}}=ee^{\star}_k$, $\forall\, k\in \Phi_{\mathcal{P}}$, scheduling user $k$ or not does not affect the maximum system EE, i.e., $0\leq \tau^*_k(\frac{p^*_k}{\varsigma}+p_c)\leq  \eta P_{\rm{\max}}\tau^*_0h_k$.\\
3)  If $EE^*_{\rm{PWPCN}}>ee^{\star}_k$, $\forall\, k\in \Phi_{\mathcal{P}}$, then  user $k$ is not scheduled, i.e., $\tau^*_k=0$, and it preserves all of its energy for the next transmission slot.
\end{lemma}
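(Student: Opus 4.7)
The plan is to exploit the fractional structure of the objective via a Dinkelbach-style parametric reformulation combined with a per-user sign analysis. By standard nonlinear fractional programming, $\lambda^{\star}\triangleq EE^{\star}_{\rm{PWPCN}}$ is characterized by
\begin{equation*}
\max\bigl\{B_{\rm{tot}}-\lambda^{\star}E_{\rm{tot}}\bigr\}=0
\end{equation*}
over the feasible set of problem (\ref{eq132}), and every maximizer of this parametric problem is also optimal for (\ref{eq132}). This converts the ratio into a linear form whose per-user structure can be examined directly.

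First, I would invoke Lemmas \ref{theorem1} and \ref{theorem2} to restrict attention to $P_0=P_{\max}$ and $\tau_0+\sum_{k\in\Phi_{\mathcal{P}}}\tau_k=T_{\max}$. Writing $E_k=(\tfrac{p_k}{\varsigma}+p_c)\tau_k$ for each user's WIT energy and using $B_k=ee_k(p_k)\,E_k$ with $ee_k(p_k)\le ee_k^{\star}$, the parametric objective rewrites as
\begin{equation*}
\sum_{k\in\Phi_{\mathcal{P}}}\bigl(ee_k(p_k)-\lambda^{\star}\bigr)\,E_k \;-\;\lambda^{\star}\,E_{\rm{WET}}.
\end{equation*}
Then I would perform a nested optimization: inner over $p_k$, where selecting $p_k=p_k^{\star}$ maximizes each coefficient to $ee_k^{\star}-\lambda^{\star}$ (and the choice is immaterial once the outer stage drives $E_k$ to zero); outer over $E_k\in[0,\eta P_{\max}\tau_0^{\star}h_k]$, which is linear in $E_k$ with slope $ee_k^{\star}-\lambda^{\star}$.

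The three cases then follow directly from the sign of this slope. When $\lambda^{\star}<ee_k^{\star}$, the slope is strictly positive and the linear maximization drives $E_k$ to its upper bound $\eta P_{\max}\tau_0^{\star}h_k$; since $p_c>0$, this in turn forces $\tau_k^{\star}>0$ and saturates C2, yielding case 1. When $\lambda^{\star}>ee_k^{\star}$, every feasible $p_k$ produces a strictly negative slope (because $ee_k(p_k)\le ee_k^{\star}<\lambda^{\star}$), so the maximizer sets $E_k=0$, hence $\tau_k^{\star}=0$, yielding case 3. When $\lambda^{\star}=ee_k^{\star}$ and $p_k=p_k^{\star}$, user $k$'s contribution to the Dinkelbach functional vanishes identically for every admissible $E_k$, yielding case 2.

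The main obstacle I anticipate is that the per-user decomposition is not entirely separable: the time budget C3 and the shared variable $\tau_0$ (which simultaneously fixes both the residual WIT time $T_{\max}-\tau_0$ and the per-user harvested-energy cap $\eta P_{\max}\tau_0 h_k$) couple the problem. I would handle this by first deriving the sign-based scheduling rule at fixed $\tau_0=\tau_0^{\star}$, and then showing that Lemma \ref{theorem2} pins $\tau_0^{\star}$ to the unique value at which both C3 and the C2 constraints of all scheduled users are simultaneously tight, thereby delivering the ``uses up all harvested energy'' conclusion in case 1. Equivalent reasoning can be obtained by inspecting the KKT multipliers of the Dinkelbach subproblem, where the multiplier attached to C2 is strictly positive exactly when $ee_k^{\star}>\lambda^{\star}$ and zero when $ee_k^{\star}\le\lambda^{\star}$.
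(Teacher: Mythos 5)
Your proposal is correct in substance but takes a genuinely different route from the paper. The paper proves the lemma by contradiction directly on the fractional objective: assuming a scheduled user $m$ does not exhaust its harvested energy, it constructs a rescaled feasible solution ($\tau_0$ and the other users' times shrunk by a factor $\beta<1$, $\tau_m$ stretched by $\alpha>1$ until C2 is tight) and invokes a mediant-type inequality (its Lemma \ref{lma1}) to show the system EE strictly increases; the claim that user $k$ is scheduled at all in case 1 is imported from prior work, and cases 2 and 3 are asserted to follow analogously. Your Dinkelbach reformulation instead makes the objective linear in the per-user energies $E_k$ with slopes $ee_k(p_k)-\lambda^{\star}$, so all three cases drop out of a single sign analysis; this unifies the cases, dispenses with the auxiliary mediant inequality, and makes the interpretation (marginal user EE versus system EE) explicit. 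The one step you must shore up is exactly the obstacle you flag: when the slope is positive you cannot simply ``drive $E_k$ to its upper bound,'' because raising $E_k$ at fixed $p_k$ raises $\tau_k$ and may violate C3, and your stated fix --- that Lemma \ref{theorem2} pins $\tau_0^{\star}$ to the value at which C2 is tight for all scheduled users --- is circular, since C2-tightness for scheduled users is precisely what is being proven. The clean repair is a rescaling perturbation inside the parametric problem: if $E_k<\eta P_{\max}\tau_0^{\star}h_k$ for a user with $ee_k^{\star}>\lambda^{\star}$ and $E_k>0$, scale $\tau_0$ and all other allocations by $\beta<1$ and user $k$'s by $\alpha>1$ (with $\beta$ small enough to respect C3); since the optimal parametric value is $0$, the remaining terms sum to $-(ee_k^{\star}-\lambda^{\star})E_k<0$, so the rescaled objective equals $(\alpha-\beta)(ee_k^{\star}-\lambda^{\star})E_k>0$, contradicting the Dinkelbach optimality condition (a separate $\epsilon$-perturbation handles $E_k=0$). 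This is the paper's construction transplanted into the subtractive domain, where linearity does the work of its Lemma \ref{lma1}; with that patch your argument is complete.
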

\begin{proof}
Please refer to  Appendix D.
\end{proof}

 Lemma \ref{theorem3}  reveals an important property related  to user scheduling and the corresponding energy utilization:   users that are scheduled should have a better or at least the same EE as the overall system, and for users with a strictly better EE, utilizing all of their energy always benefits the system EE.
%\begin{remark}
%Note that for the user  with the  UL channel gain $g_k$ satisfying
%\begin{align}\label{eq14}
%EE^*=ee^*_k= \frac{W\log_{2}\left(1+\frac{p^*_{k}g_{k}}{\Gamma\sigma^2}\right)}{p^*_{k}+p_c},
%\end{align}
%it plays a very important role in improving the system throughput as well as balancing the system power consumption. Since $ee^*_k$ is an increasing function of $g_k$, larger $g_k$ or smaller $g_k$ would lead to larger $ee_k^*$ or smaller $ee_k^*$,  and thereby increase or decrease the system EE, respectively.
%\end{remark}
\begin{remark}
In \cite{ju14_throughput}, the authors focus on the throughput maximization problem for PWPCN.  For that problem, the optimal transmission time of each user increases  linearly  with the equivalent channel gain. In other words, all users are scheduled no matter how severely their channel conditions are degraded. However, for EE oriented systems, it is not cost effective to schedule all users, especially if their channels are weak,  since each user introduces additional circuit power consumption.
\end{remark}
% no matter how poor their UL channels are

In Lemmas \ref{theorem1}, \ref{theorem2}, and \ref{theorem3}, we have revealed several basic properties of EE optimal PWPCN.   In the following, we derive an  expression for the maximum  EE and also the optimal solution based on the above  properties.
\begin{theorem}\label{theorem4}
The optimal system EE of PWPCN can be expressed as
\begin{align}\label{eq15}
EE_{\rm{PWPCN}}^*=\frac{\sum_{k\in S^*} ee^{\star}_kh_k}{\frac{1}{\eta} \left(\frac{P_{c}}{ P_{\mathop{\max}}}+\frac{1}{\xi}-\sum_{k=1}^{K}\eta h_k\right)+ \sum_{k\in S^*}h_k},
\end{align}
where $S^*\subseteq \Phi_{\mathcal{P}}$ is the optimal scheduled user set.  The optimal power and time allocation can be expressed as
% and $ee^{\star}_k$ is user EE of user $k$ defined in Section II-C.
\begin{align}
p^*_{k}&=\left[\frac{ W\varsigma }{ ee^{\star}_{k}\ln2}-\frac{1}{\gamma_{k}}\right]^+, \label{eq15_2}\\
\tau^*_0&\in\left(0, \frac{T_{\max}}{1+\eta P_{\max}\sum_{k\in S^*}\frac{ h_k ee^{\star}_k}{W\log_2(\frac{W\varsigma\gamma_k}{ee^{\star}_k\ln2})}}\right], \label{eq15_1}\\
\tau^*_k &= \eta P_{\max} \tau_0\frac{ h_k ee^{\star}_k}{W\log_2(\frac{W\varsigma\gamma_k}{ee^{\star}_k\ln2})}.  \label{eq15_3}
\end{align}
%Moreover, under the condition of achieving the highest system EE,  the maximum throughput supported of  (\ref{eq10}), denoted as $R^*_{\max}$, is given by
%\begin{align}
%R^*_{\max} =\left\{
%\begin{array}{lcl}
% \sum_{k\in S^*}\frac{P_{\max}h_k\tau_0}{p_k+p_c}\log_2(1+p_kg_k),&  \tau_0>0,\\
% \min \left\{\frac{Q_k}{p_k+p_c},T_{\max}\right\}\log_2(1+p_kg_k),&      \tau_0=0.
%\end{array}\right.
%\end{align}
\end{theorem}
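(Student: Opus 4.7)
The plan is to combine Lemmas~\ref{theorem1}--\ref{theorem3} to collapse problem (\ref{eq132}) onto a selection problem over a scheduled set $S^*\subseteq\Phi_{\mathcal{P}}$, and then reduce each scheduled user's contribution to a scalar user-EE optimization.

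First, I would fix $P_0=P_{\max}$ by Lemma~\ref{theorem1} and declare C3 active by Lemma~\ref{theorem2}. Let $S^*$ denote the optimal scheduled set. By Lemma~\ref{theorem3}, every $k\in S^*$ satisfies $ee_k^\star\ge EE^*_{\rm PWPCN}$ and exhausts its harvested energy, so C2 becomes the equality $\tau_k(p_k/\varsigma+p_c)=\eta P_{\max}\tau_0 h_k$, while $\tau_k^*=0$ for $k\notin S^*$.

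Second, solving this equality for $\tau_k$ and substituting into $B_k=\tau_k W\log_2(1+p_k\gamma_k)$ gives $B_k=\eta P_{\max}\tau_0 h_k\,ee_k(p_k)$, which is strictly increasing in $ee_k(p_k)$. Hence the optimal transmit power for each scheduled user is the unique maximizer of $ee_k$ supplied by (\ref{eq12}), establishing (\ref{eq15_2}). Summing $B_k$ over $k\in S^*$ and plugging into the objective of (\ref{eq132}), the common factor $\eta P_{\max}\tau_0$ appears in both numerator and denominator and cancels, producing the closed form (\ref{eq15}).

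Third, I would recover $\tau_k^*$ by substituting $p_k^\star$ into the binding energy constraint. Using $1+p_k^\star\gamma_k=W\varsigma\gamma_k/(ee_k^\star\ln 2)$, valid whenever $p_k^\star>0$, this gives (\ref{eq15_3}). Equation (\ref{eq15_1}) then follows from the time budget: combining $\tau_0+\sum_{k\in S^*}\tau_k^*\le T_{\max}$ with (\ref{eq15_3}) yields the stated interval for $\tau_0^*$. The fact that the optimum is an interval rather than a point reflects the scale invariance of the system EE under a joint rescaling of $(\tau_0,\{\tau_k\}_{k\in S^*})$ once $p_k^\star$ has been fixed.

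The main obstacle I anticipate is that $S^*$ is characterized only implicitly through Lemma~\ref{theorem3}, yielding a circular dependence with $EE^*_{\rm PWPCN}$. To close the loop, I would observe that (\ref{eq15}), viewed as a function of a candidate set $S$, has the form $EE(S)=\sum_{k\in S}h_k ee_k^\star/(c+\sum_{k\in S}h_k)$ with $c$ the constant in the denominator of (\ref{eq15}). For such a ratio, adjoining user $k$ strictly increases $EE(S)$ iff $ee_k^\star>EE(S)$, which matches Lemma~\ref{theorem3} verbatim. This equivalence both justifies the expression (\ref{eq15}) for the true optimum and exposes a simple greedy procedure: sort users in decreasing order of $ee_k^\star$ and adjoin them until inclusion would decrease the ratio.
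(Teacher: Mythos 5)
Your proof is correct and follows essentially the same route as the paper's Appendix E: fix $P_0=P_{\max}$ and the energy-exhaustion equality from Lemmas~\ref{theorem1}--\ref{theorem3}, substitute $\tau_k=\eta P_{\max}h_k\tau_0/(p_k/\varsigma+p_c)$ so that $\tau_0$ cancels and the objective collapses to a weighted sum of user EEs, then maximize each $ee_k$ separately to obtain (\ref{eq15})--(\ref{eq15_3}). Your closing observation that adjoining user $k$ raises $EE(S)$ iff $ee_k^\star>EE(S)$ is a nice explicit resolution of the circularity in determining $S^*$, which the paper instead delegates to the scheduling algorithm cited from \cite{qing1}.
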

\begin{proof}
Please refer to  Appendix E.
\end{proof}

Theorem \ref{theorem4} provides a simple expression for the system EE in terms of the user EE and other system parameters. In (\ref{eq15}), since $P_{\rm{\max}}$ and  $P_c$ are  the maximum allowed transmit power and the circuit power, respectively,  their ratio $\frac{P_{c}}{P_{\rm{\max}}}$ can be interpreted as the inefficiency of the power station. The term $\frac{1}{\xi}-\sum_{k=1}^{K}\eta h_k$ represents the energy loss per unit transmit energy due to the  wireless channels, non-ideal energy harvesting devices, and a non-ideal PA at the power station.

Note that $\frac{1}{\eta} \left(\frac{P_{c}}{ P_{\mathop{\max}}}+\frac{1}{\xi}-\sum_{k=1}^{K}\eta h_k\right)$ involves only fixed system parameters and  is therefore a constant. This  means that  once $S^*$ is determined, the optimal solution can be obtained from (\ref{eq15}). Therefore, the problem is simplified to finding the optimal user set $S^*$. {In \cite{qing1}, we have proposed a linear-complexity algorithm for solving a scheduling problem with a similar structure as (\ref{eq15}).}
%, and its complexity is only linearly scaling with the number of users.}
The details of this algorithm are omitted here and we refer the readers to \cite{qing1} for more information.
%From (\ref{eq15_3}), we can see that the ratio of two users' transmission time is  not simply determined by their equivalent channel gains. Moreover, it involves the user EE $ee_k^{\star}$ which is highly related to the circuit power $p_c$, and thus the time allocation in \cite{ju14_throughput} is only a suboptimal solution for the EE maximization.

 Another interesting observation for PWPCN  is the relationship between the number of scheduled users and the physical system parameters, which has been summarized in the following corollary.
\begin{corollary}\label{corollary1}
{1) For energy-efficient PWPCN, the number of scheduled user increases with the ratio $\frac{P_c}{P_{\mathop{\max}}}$;
2) For energy-efficient PWPCN, the number of scheduled user decreases with the energy conversion efficiency $\eta$.}
\end{corollary}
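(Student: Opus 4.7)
The plan is to derive Corollary \ref{corollary1} directly from Theorem \ref{theorem4} and Lemma \ref{theorem3}. The structural observation I will exploit is that the user EE $ee^\star_k$, defined in (\ref{eq11})--(\ref{eq12}), depends only on $\gamma_k$, $W$, $\varsigma$, and $p_c$, and is therefore independent of both $\eta$ and the ratio $P_c/P_{\max}$. Hence, as either of these two parameters varies, the set of candidate values $\{ee^\star_k\}$ is fixed while the scheduling threshold $EE^*_{\rm{PWPCN}}$ shifts. Each part of the corollary reduces to showing that the threshold moves in the predicted direction.

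To pin down the direction, rewrite (\ref{eq15}) by absorbing $-\sum_{k}\eta h_k$ into the denominator: for any candidate set $S\subseteq\Phi_{\mathcal{P}}$, define
\begin{equation*}
F(S;\eta,P_c/P_{\max})=\frac{\sum_{k\in S}ee^\star_k\, h_k}{\dfrac{1}{\eta}\!\left(\dfrac{P_c}{P_{\max}}+\dfrac{1}{\xi}\right)-\sum_{k\notin S}h_k}.
\end{equation*}
By Theorem \ref{theorem4}, $EE^*_{\rm{PWPCN}}=\max_{S\subseteq\Phi_{\mathcal{P}}} F(S;\eta,P_c/P_{\max})$. For every fixed $S$, a direct differentiation shows that the denominator is strictly increasing in $P_c/P_{\max}$ and strictly decreasing in $\eta$, while the numerator is independent of both parameters. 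Therefore each $F(S;\cdot,\cdot)$ is strictly decreasing in $P_c/P_{\max}$ and strictly increasing in $\eta$, and since the pointwise maximum of a family of functions that are monotone in the same direction is itself monotone in that direction, $EE^*_{\rm{PWPCN}}$ inherits both monotonicities.

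I then conclude by invoking Lemma \ref{theorem3}, which sandwiches the optimal scheduled set as $\{k:ee^\star_k>EE^*_{\rm{PWPCN}}\}\subseteq S^*\subseteq\{k:ee^\star_k\geq EE^*_{\rm{PWPCN}}\}$. Because the candidates $\{ee^\star_k\}$ are fixed while the threshold $EE^*_{\rm{PWPCN}}$ is decreasing in $P_c/P_{\max}$, the number of users exceeding the threshold is non-decreasing in $P_c/P_{\max}$, giving part~1; analogously the threshold is increasing in $\eta$, so the number of scheduled users is non-increasing in $\eta$, giving part~2. The main subtlety I anticipate is bookkeeping around the tie case of Lemma \ref{theorem3}(2), where users with $ee^\star_k = EE^*_{\rm{PWPCN}}$ may be scheduled or not without changing the optimum; this simply means the corollary is naturally understood in the weak monotone sense. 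Beyond this, the argument is entirely driven by the explicit closed form in (\ref{eq15}).
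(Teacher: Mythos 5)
Your proposal is correct and follows essentially the same route as the paper: Lemma \ref{theorem3} turns scheduling into the threshold test $ee^\star_k \geq EE^*_{\rm{PWPCN}}$, and the claim reduces to showing that $EE^*_{\rm{PWPCN}}$ decreases in $P_c/P_{\max}$ and increases in $\eta$ while the $ee^\star_k$ stay fixed. The only difference is that the paper merely asserts this monotonicity in its proof sketch, whereas you actually establish it via the rewritten closed form $F(S;\cdot,\cdot)$ and the pointwise-maximum argument (implicitly using that the denominator is positive, which the paper guarantees since $\frac{1}{\xi}-\sum_k \eta h_k>0$), so your version is a rigorous completion of the paper's argument rather than a different one.
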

\begin{proof}
{Due to space limitation, we only provide a sketch of the proof here.  From Lemma \ref{theorem3}, we know that the condition for scheduling user $k$ is $EE^*_{\rm{PWPCN}}\leq ee^{\star}_k$.  Since a larger  $\frac{P_c}{P_{\mathop{\max}}}$ or a lower  $\eta$ leads to a lower system EE, $EE^*_{\rm{PWPCN}}$, i.e., more users satisfy the scheduling condition, more users are scheduled.}
\end{proof}

{Corollary \ref{corollary1} generally reveals the relationship between the number of scheduled users and the physical system parameters of the power station ($P_c$, $P_{\mathop{\max}}$) and/or  user terminals ($\eta$) in the energy efficient PWPCN.} %It is interesting to note that a larger $P_c$ is beneficial for scheduling more users. This is because the larger $P_c$ leads to higher system energy consumption and thus more users are involved in the UL WIT to improve the system EE. }

{In the next subsection, we investigate the EE of IELCN and characterize its properties.}
\subsection{Properties of Energy-Efficient IELCN}
\begin{theorem}\label{theorem01}
Problem (\ref{eq133}) is equivalent to the following optimization problem
  \begin{align}\label{eq134}
\mathop {\max }\limits_{ k\in\Phi_{\mathcal{I}}}~~ \mathop {\max }\limits_{p_{k},\tau_{k}}~~ &  \frac{ \tau_kW\log_2\left(1+p_k\gamma_k\right)}
{\frac{p_k}{\varsigma}\tau_k+p_{c}\tau_k}  \nonumber \\
\text{s.t.}   ~~~
& \frac{p_{k}}{\varsigma}\tau_{k}+p_{c}\tau_k \leq Q_k, ~~  k\in\Phi_{\mathcal{I}},  \nonumber\\
& \tau_k\leq T_{\rm{\max}}, ~~  k\in\Phi_{\mathcal{I}}, \nonumber\\
& \text{C4},~\text{C5},
\end{align}
and the corresponding optimal solution is given by
\begin{align}
p^*_k= &\left\{
\begin{array}{lcl}
p^{\star}_k,&& \text{if} ~ k=\arg\mathop {\max }\limits_{i\in\Phi_{\mathcal{I}}} ~ee^{\star}_{i},\\
0,&& \text{otherwise},~ \forall i,
\end{array}\right.  \label{eq135}\\
\tau^*_k& \left\{
\begin{array}{lcl}
\in\left(0, \max (\frac{Q_k}{\frac{p^*_k}{\varsigma}+p_c},T_{\max})\right],&&\text{if} ~ k=\arg\mathop {\max }\limits_{i\in\Phi_{\mathcal{I}}} ~ee^{\star}_{i},\\
=0,&& \text{otherwise},~ \forall i.
\end{array}\right.\label{eq136}
\end{align}
%Moreover, under the condition of maximizing the system EE, the maximum throughput supported of (\ref{eq134})   is
%\begin{align}
%R_{\max}= \min \left\{\frac{Q_k}{p_k+p_c},T_{\max}\right\}\log_2(1+p_kg_k), ~ k=\arg\mathop {\max }\limits_{i =1,...,K} ~ee^*_{i}.
%\end{align}
\end{theorem}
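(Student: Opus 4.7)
The plan is to reduce the multiuser optimization in (\ref{eq133}) to a single-user problem via the classical mediant inequality: for any finite collection of non-negative numerators $a_k$ and strictly positive denominators $b_k$, one has $\sum_k a_k / \sum_k b_k \leq \max_k a_k/b_k$, with equality iff all the active ratios $a_k/b_k$ coincide, or only one index is active. Applying this to the objective of (\ref{eq133}) with $a_k = \tau_k W\log_2(1+p_k\gamma_k)$ and $b_k = \tau_k(\frac{p_k}{\varsigma}+p_c)$, restricting the sums to indices with $\tau_k>0$, the factor $\tau_k$ cancels in each ratio, so $a_k/b_k$ is exactly the per-user EE $ee_k(p_k)$ introduced in (\ref{eq11}).

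Consequently, $EE^*_{\rm{IELCN}} \leq \max_{k\in\Phi_\mathcal{I}} \max_{p_k}\, ee_k(p_k)$. Since the constraints in (\ref{eq133}) decouple across users (the $k$-th energy-causality constraint involves only $(p_k,\tau_k)$, and the time-budget constraint is a loose aggregate), the inner maximum in the right-hand side is precisely the per-user EE optimum $ee_k^\star$ from (\ref{eq12}), with $\tau_k$ restricted to the interval dictated by C2 and C3 after plugging in $p_k=p_k^\star$. This establishes that (\ref{eq133}) is upper bounded by the max-max problem in (\ref{eq134}). The reverse inequality holds because (\ref{eq134}) is simply (\ref{eq133}) restricted to a single-user schedule, which is a feasible restriction; thus the two problems are equivalent.

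To produce the explicit optimizer, I would let $k^\star = \arg\max_{k\in\Phi_\mathcal{I}} ee_k^\star$ and verify that the assignment $p_{k^\star}=p_{k^\star}^\star$, $\tau_j=p_j=0$ for $j\neq k^\star$, together with any $\tau_{k^\star}>0$ simultaneously satisfying $\tau_{k^\star}(p_{k^\star}^\star/\varsigma + p_c)\leq Q_{k^\star}$ and $\tau_{k^\star}\leq T_{\max}$, is feasible for (\ref{eq133}) and attains objective value $ee_{k^\star}^\star$, thereby saturating the upper bound. The admissible range for $\tau_{k^\star}$ stated in (\ref{eq136}) is then the intersection of these two upper bounds (the symbol $\max$ printed there should be read as $\min$, since both constraints must hold simultaneously).

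The main, and essentially only, subtlety lies in the equality case of the mediant inequality: if more than one user attains $ee_k^\star = \max_i ee_i^\star$, splitting the schedule among these users preserves the same system EE, so the optimal active set is not unique. I would close the argument by noting this non-uniqueness and adopting the single-user convention written in (\ref{eq135})-(\ref{eq136}), which gives the maximum EE $\max_{k\in\Phi_\mathcal{I}} ee_k^\star$ without loss of generality.
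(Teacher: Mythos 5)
Your proof is correct and follows essentially the same route as the paper's Appendix F: the mediant inequality (the paper's Lemma 4, applied as inequality ``$c$'') collapses the sum-ratio to the best per-user EE after the $\tau_k$ factors cancel, the optimality of $p_k^{\star}$ gives inequality ``$d$'', and a single-user schedule attains the bound. Your observation that the $\max$ in (\ref{eq136}) should read $\min$ (since the energy-causality and time-budget bounds must hold simultaneously) is a valid correction of a typo in the statement, and your explicit treatment of the tie case is a minor but sound refinement of the paper's argument.
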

\begin{proof}
Please refer to  Appendix F.
\end{proof}

Theorem \ref{theorem01} indicates that the optimal transmission strategy for EE maximization in IELCN is to schedule only the user with the highest user EE. Thus, based on Theorem \ref{theorem01},  $EE^*_{\rm{IELCN}}$ can be easily obtained with the user EE introduced in Section II-C.

\textcolor{blue}{In summary,  we have obtained the optimal solutions of problems (11) and (12) in Section III-B and Section III-C, respectively. Thus,  as shown in Theorem \ref{theorem0},  the optimal solution of problem (10) is achieved by the one which results in larger system EE.}

\section{Energy-Efficient Resource Allocation  for WPCN with a QoS Constraint}
Since practical systems may have to fulfill certain QoS requirements, in this section, we  investigate energy-efficient time allocation and power control for WPCN guaranteeing a minimum system throughput. In this case, the EE maximization problem can be formulated as
\begin{align}\label{eq16}
\mathop {\max }\limits_{P_{0}, \tau_0,\{p_{k}\},\{\tau_{k}\}}~~ &\frac{\sum_{k=1}^{K} \tau_kW\log_2\left(1+p_k\gamma_k\right)}
{{P_{0}}\tau_0(\frac{1}{\xi}-\sum_{k=1}^{K}\eta h_k )+P_c\tau_0 + \sum_{k=1}^{K}(\frac{p_k}{\varsigma}\tau_k+p_{c}\tau_k)} \nonumber \\
\text{s.t.} ~~~~~& \text{C1, C2, C3, C4, C5},  \nonumber \\
%\text{C1:}~~ P_{0}\leq P_{\mathop{\max}}, \nonumber \\
%& \text{C2:}~~p_{k}\tau_{k}+p_{c}\tau_k \leq \eta P_{0}\tau_{0}h_k+Q_k, ~ \forall\, k,  \nonumber \\
%&\text{C3:}~~\tau_{0}+\sum_{k=1}^{K}\tau_k\leq T_{\mathop{\max}}, \nonumber \\
%& \text{C4:}~~\tau_{0}\geq0, ~  \tau_k\geq  0, ~\forall\, k, \nonumber\\
%& \text{C5:}~~P_{0}\geq0, ~ p_k\geq  0, ~\forall\, k,\nonumber\\
& \text{C6:}~~\sum_{k=1}^{K}\tau_kW\log_{2}\left(1+p_{k}\gamma_{k}\right) \geq R_{\mathop{\min}},
\end{align}
where $R_{\min}$ denotes the minimum required system throughput and all other parameters and constraints are identical to those in (\ref{eq10}).  We note that different priorities and fairness among the users could be realized by adopting the weighted sum rate instead of the system throughput. However, since the weights are constants and do not affect the algorithm design, without loss of generality, we assume all users are equally weighted  in this paper \cite{derrick_wipt}.

{\subsection{ Feasibility of Problem (\ref{eq16})} }
{Before proceeding to solve problem (\ref{eq16}), we first investigate the feasibility condition for a given QoS requirement, $R_{\min}$. The following theorem provides the necessary and sufficient condition for the feasibility of problem (\ref{eq16}).
\begin{theorem}\label{feasibility}
Problem (\ref{eq16}) is feasible if  $R^*\geq R_{\mathop{\min}}$,  where $R^*$ is the maximum objective value  of the following concave optimization problem
\begin{align}\label{eq_feasi}
R^* \triangleq \mathop {\max }\limits_{\tau_0,\{\tau_{k}\}} &~~~ \sum_{k=1}^{K}\tau_kW\log_{2}\left(1+
\frac{\tau_0P_{\mathop{\max}}\eta h_{k}+Q_k}{\tau_k}\varsigma\gamma_{k}-p_{\rm{c}}\varsigma\gamma_k\right)   \nonumber\\
\text{s.t.}~ &~~~ \tau_{0}+\sum_{k=1}^{K}\tau_k= T_{\max}, \nonumber\\
&~~~ \tau_{0}\geq0, ~  \tau_k\geq  0, ~\forall\, k.
\end{align}
\end{theorem}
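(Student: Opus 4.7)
The plan is to show that $R^*$ equals the maximum system throughput achievable under C1--C5, so that problem (\ref{eq16}) is feasible if and only if $R_{\min}\le R^*$. To this end, I would consider the auxiliary throughput-maximization problem obtained by replacing the objective of (\ref{eq16}) with $B_{\rm tot}$ and dropping C6, and denote its optimum by $\tilde R$. Feasibility of (\ref{eq16}) is equivalent to $\tilde R\ge R_{\min}$, so the entire task reduces to establishing $\tilde R=R^*$.

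To pass from the auxiliary problem to (\ref{eq_feasi}), I would apply three monotonicity reductions to any candidate optimum. (i) Raising $P_0$ to $P_{\max}$ only relaxes C2 and thus permits a strict increase in some $p_k$ and in throughput, so $P_0=P_{\max}$ is without loss of optimality. (ii) If $\tau_0+\sum_k\tau_k<T_{\max}$, enlarging $\tau_0$ by the slack $T_{\max}-\tau_0-\sum_k\tau_k$ increases the harvested energy of every user, loosens C2, and again permits a strict rate improvement; hence C3 may be assumed tight. (iii) For each $k$, if C2 is slack, increasing $p_k$ until equality is attained strictly raises user $k$'s rate, so at optimum
\begin{equation*}
p_k=\frac{\varsigma(\eta P_{\max}h_k\tau_0+Q_k)}{\tau_k}-\varsigma p_c,
\end{equation*}
with the convention that users having $\tau_k=0$ contribute zero rate and consume zero energy. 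Substituting this $p_k$ into $B_{\rm tot}$ produces exactly the objective of (\ref{eq_feasi}), so the auxiliary problem is equivalent to (\ref{eq_feasi}) and $\tilde R=R^*$.

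I would then verify that (\ref{eq_feasi}) is itself a concave maximization, so that $R^*$ is well defined and attained. The $k$-th summand can be written as $\tau_k W\log_2(1+Y_k/\tau_k)$ with $Y_k=\varsigma\gamma_k(\eta P_{\max}h_k\tau_0+Q_k)-\varsigma\gamma_k p_c\tau_k$. The map $(Y,\tau)\mapsto\tau\log(1+Y/\tau)$ is the perspective of the concave function $\log(1+Y)$ and is therefore jointly concave for $\tau>0$; since $Y_k$ is affine in $(\tau_0,\tau_k)$, composition with this affine map preserves concavity. Summing concave summands over a linear feasible set gives a convex program, so $R^*$ is attained.

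Combining these pieces delivers both directions. If $R^*\ge R_{\min}$, the maximizer of (\ref{eq_feasi}) together with $P_0=P_{\max}$ and the $\{p_k\}$ above is a feasible point of (\ref{eq16}); conversely, infeasibility forces $\tilde R=R^*<R_{\min}$. The step I expect to be most delicate is the boundary behavior at $\tau_k=0$, where the closed-form $p_k$ is ill-defined and the perspective function must be extended by continuity. I would handle this either by excluding such users from the sum or by a limiting argument along a sequence $\tau_k\to 0^+$, taking care that the reformulation does not alter the supremum and that the resulting feasible set remains closed.
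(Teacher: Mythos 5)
Your proposal is correct and follows essentially the same route as the paper: reduce the feasibility question to a throughput-maximization problem by arguing that C1--C3 can be taken tight at the optimum, substitute the resulting $p_k$ to obtain (\ref{eq_feasi}), and invoke concavity of the perspective-type objective. You simply supply more detail than the paper's sketch (notably the explicit perspective-function argument for concavity and the care at $\tau_k=0$), which is welcome but not a different method.
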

\begin{proof}
Due to the space limitation, we only provide a sketch of the proof. It can be shown that the maximum throughput of problem (\ref{eq16}) is achieved when C1-C3 are all satisfied with equality, which leads to problem (\ref{eq_feasi}). If the energy of some user is not used up, the system throughput can always be improved by increasing its transmit power while keeping its transmission time unchanged, thus C2 holds with strict equality. Similar considerations can also be made for C1 and C3, respectively.   The objective function in (\ref{eq_feasi}) is concave and all constraints are affine, thus problem (\ref{eq_feasi}) is a standard concave optimization problem.
\end{proof}}

{
In fact, problem (\ref{eq_feasi}) falls into the category of throughput maximization problems in WPCN and can be solved by standard optimization techniques, such as the interior point method \cite{Boyd}. The feasibility of problem (\ref{eq16}) can thereby be verified based on Theorem \ref{feasibility}.
}
{If it is infeasible, $R_{\min}$ can be decreased and/or $T_{\max}$ ($P_{\max}$) can be increased until the problem becomes feasible.} {In the following, we assume that problem (\ref{eq16}) is feasible.}

\subsection{Transformation of the Objective Function}
It is  intuitive that when   $R_{\mathop{\min}}$ is sufficiently large, both power transfer and the initial energy are needed to meet the system throughput requirement. Thus, problem (\ref{eq16}) cannot be simply cast into PWPCN or IELCN.
%, and their respective properties may also not hold here.
Moreover,  problem (\ref{eq16}) is neither convex nor quasi-convex due to the fractional form of the objective function and the non-convexity of
inequality constraints C2 and C6. %Generally, there is no standard method for solving non-convex problems efficiently.
Next, we study the transmit power of the power station.
\begin{theorem}\label{theorem6}
For problem (\ref{eq16}), the maximum system EE can always be achieved for $P^*_0=P_{\max}$.
\end{theorem}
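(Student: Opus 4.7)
The plan is to adapt the scaling argument used for Lemma \ref{theorem1} to the QoS-constrained problem (\ref{eq16}). The key structural observation, namely that the harvested energy $\eta P_0\tau_0 h_k$ depends on the downlink variables only through the product $P_0\tau_0$, carries over verbatim; the only new ingredient is verifying that the minimum-throughput constraint C6 is not violated by the scaling. I would therefore argue by contradiction: assume an optimal tuple $(P_0^*,\tau_0^*,\{p_k^*\},\{\tau_k^*\})$ has $0<P_0^*<P_{\max}$, construct a strictly better feasible tuple, and derive a contradiction.

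The construction is $\hat{P}_0=P_{\max}$ and $\hat{\tau}_0=\tau_0^*\,P_0^*/P_{\max}<\tau_0^*$, with $\{p_k^*\},\{\tau_k^*\}$ kept fixed. Feasibility is checked step by step: (i) C1 holds with equality; (ii) C2 is preserved because $\hat{P}_0\hat{\tau}_0=P_0^*\tau_0^*$, so every $\eta\hat{P}_0\hat{\tau}_0 h_k$ on the right-hand side matches its original value while the left-hand side is unchanged; (iii) C3 actually slackens, since $\hat{\tau}_0+\sum_k\tau_k^*<\tau_0^*+\sum_k\tau_k^*\le T_{\max}$; (iv) C4 and C5 are trivial; and (v) crucially, C6 depends only on $\{p_k,\tau_k\}$, which are untouched, so the throughput requirement remains met. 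For the objective: the numerator $\sum_k\tau_k^* W\log_2(1+p_k^*\gamma_k)$ is unchanged; in the denominator, the channel-loss term $P_0\tau_0(1/\xi-\sum_k\eta h_k)$ is unchanged because the product $P_0\tau_0$ is preserved, the WIT term $\sum_k(p_k\tau_k/\varsigma+p_c\tau_k)$ is unchanged, and the station-circuit term $P_c\tau_0$ strictly decreases to $P_c\tau_0^*\,P_0^*/P_{\max}<P_c\tau_0^*$. The EE therefore strictly increases, contradicting optimality. The boundary case $P_0^*=0$ is handled separately: no energy is harvested, so one may take $\tau_0^*=0$ without loss of optimality and then redefine $P_0^*:=P_{\max}$ without changing any term in the objective or any constraint.

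I do not expect a genuine obstacle in this proof. The only conceptual point that distinguishes it from Lemma \ref{theorem1} is that a QoS constraint could, in principle, couple the downlink variables $(P_0,\tau_0)$ to the uplink choice of $\{p_k,\tau_k\}$ and thereby forbid the scaling; but a direct inspection of C6 shows no such coupling, so freezing the uplink variables suffices. The intuition parallels Remark 1: compressing DL WET into the shortest possible duration at the largest allowed transmit power minimises the ``idle'' circuit energy $P_c\tau_0$ expended by the power station, while delivering exactly the same amount of energy to each user and leaving the throughput machinery untouched.
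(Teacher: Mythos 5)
Your proposal is correct and follows essentially the same route as the paper: the paper's proof also reduces to the energy-preserving rescaling $\hat{P}_0\hat{\tau}_0 = P_0^*\tau_0^*$ of Lemma \ref{theorem1}, observing that C6 involves only the untouched uplink variables, and separately dispatches the degenerate case where DL WET is inactive. The one small point to tidy up is that your case split should be on $\tau_0^*$ rather than on $P_0^*$: if $\tau_0^*=0$ with $0<P_0^*<P_{\max}$, the circuit term $P_c\tau_0$ does not \emph{strictly} decrease and no contradiction arises, but then $P_0$ is simply irrelevant and may be set to $P_{\max}$ without loss of optimality, which is exactly the paper's second case.
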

\begin{proof}
As  the power transfer may not be activated due to the initial energy of the users,
  we discuss the following two cases. First, if the power transfer is activated for the optimal solution, i.e., $\tau^*_0>0$, then we can show that $P^*_0=P_{\max}$ following a similar proof as for Lemma \ref{theorem1}.   Second, if $\tau^*_0=0$  holds, then  the value of the  power station's  transmit power  $P^*_0$ does not affect the maximum system EE, and thus $P_0=P_{\max}$ is also an optimal solution.
\end{proof}

 It is worth noting that Lemma 1 is in fact a special case of Theorem \ref{theorem6}. Considering Theorem \ref{theorem6},  we only have to optimize $\tau_0$, $\{p_k\}$, and $\{\tau_k\}$, $\forall\, k$,  for solving problem (\ref{eq16}).
 %In the next,  we employ fractional programming theory to facilitate the transformation of  the original problem into a convex one, which can be efficiently solved.
%\begin{align}\label{eq17}
%\mathop {\max }\limits_{\tau_0,\{p_{k}\}, \{\tau_{k}\}}~~ &\frac{\sum_{k=1}^{K} \tau_kW\log_2\left(1+p_k\gamma_k\right)}
%{P_{\max}\tau_0(1-\sum_{k=1}^{K}\eta h_k )+P_c\tau_0 + \sum_{k=1}^{K}(p_k\tau_k+p_{c}\tau_k)} \nonumber \\
%\text{s.t.} ~~~~~~~
%& \text{C2:}~~p_{k}\tau_{k}+p_{c}\tau_k \leq \eta P_{\max}\tau_{0}h_k+Q_k, ~ \forall\, k,  \nonumber \\
%&\text{C3:}~~\tau_{0}+\sum_{k=1}^{K}\tau_k\leq T_{\mathop{\max}}, \nonumber \\
%& \text{C4:}~~\sum_{k=1}^{K}\tau_kW\log_{2}\left(1+p_{k}\gamma_{k}\right) \geq R_{\mathop{\min}},\nonumber\\
%& \text{C5},~~\text{C6}.
%\end{align}
According to nonlinear fractional programming theory \cite{dinkelbach1967nonlinear}, for a problem of the form,
\begin{equation}\label{eq11c}
q^*=\max \limits_{\tau_{0},\{p_{k}\},\{\tau_{k}\}\in \mathcal{F}} \frac{B_{\rm{\rm{tot}}}(p_k,\tau_k)}{E_{\rm{\rm{tot}}}(\tau_0,p_k,\tau_k)},
\end{equation}
where $\mathcal{F}$ is the feasible set,
there exists an equivalent problem in  subtractive form, which satisfies
\begin{equation}\label{eq12c}
T(q^*)=\max \limits_{\tau_{0},\{p_{k}\},\{\tau_{k}\}\in \mathcal{F}}\{B_{\rm{\rm{tot}}}(p_k,\tau_k)-q^*E_{\rm{\rm{tot}}}(\tau_0,p_k,\tau_k)\}=0.
\end{equation}
The equivalence of (\ref{eq11c}) and (\ref{eq12c}) can be easily verified at the optimal point $(\tau^*_0,p^*_k,\tau^*_k)$ with the corresponding maximum value $q^*$ which  is  the optimal system EE to be determined. Dinkelbach
 provides an iterative  method in \cite{dinkelbach1967nonlinear} to obtain  $q^*$. In each iteration, a subtractive-form maximization problem (\ref{eq12c}) is solved for a given $q$. The value of $q$ is updated and problem (\ref{eq12c}) is solved again in the next iteration until convergence is achieved.
 %However, due to the
  By applying this transformation to (\ref{eq11c}), we obtain the following problem for a given $q$ in each iteration
\begin{align}\label{eq13c}
\max \limits_{\tau_{0},\{p_{k}\},\{\tau_{k}\}}~~~ &\sum_{k=1}^{K}\tau_kW\log_{2}\left(1+p_{k}\gamma_{k}\right)-q\left(P_{\max}\tau_0\left(\frac{1}{\xi}-\sum_{k=1}^K\eta h_k\right) +P_{c}\tau_0\textcolor{white}{\frac{1^1}{()}} \right.\nonumber\\
&\left. +\sum_{k=1}^{K}\left(\frac{p_k}{\varsigma}\tau_k  + p_{c}\tau_k\right)\right) \nonumber\\
\text{s.t.} ~~~~~& \text{C2, C3, C4, C5, C6}.
\end{align}

 Although problem (\ref{eq13c}) is  more tractable than the original problem (\ref{eq16}), it is still a non-convex optimization problem since it involves products of optimization variables. Hence,  we further introduce a set of auxiliary variables, i.e., $E_k=p_k\tau_k$, for $\forall \,k$, which can be interpreted as the actual energy consumed by user  $k$. Replacing  $p_k$ with $\frac{E_k}{\tau_k}$,  problem (\ref{eq13c}) can be written as
\begin{align}\label{eq14c}
\max \limits_{\tau_{0},\{E_{k}\},\{\tau_{k}\}}~~ &\sum_{k=1}^{K}\tau_kW\log_{2}\left(1+\frac{E_{k}}{\tau_k}\gamma_{k}\right)- q\left(P_{\max}\tau_0\left(\frac{1}{\xi}-\sum_{k=1}^K\eta h_k\right) +P_{c}\tau_0\textcolor{white}{\frac{1^1}{()}} \right.\nonumber\\
&\left. +\sum_{k=1}^{K}\left( \frac{E_k}{\varsigma}+p_{c}\tau_k\right)\right) \nonumber\\
\text{s.t.}~~~~ &\text{C3},~~\text{C4},  ~~\text{C5:}~~E_k\geq  0, ~\forall\, k, \nonumber\\%~~~~\nonumber\\
&\text{C2:}~~\frac{E_k}{\varsigma}+p_{c}\tau_k \leq \eta P_{\max}\tau_0h_k+Q_k, ~\forall\, k,  \nonumber\\
&\text{C6:}~~\sum_{k=1}^{K}\tau_kW\log_{2}\left(1+\frac{E_{k}}{\tau_k}\gamma_{k}\right) \geq R_{\mathop{\min}}.
\end{align}

After this substitution, it is easy to show that  problem (\ref{eq14c}) is a standard convex optimization problem, which can be solved by standard convex optimization techniques, e.g., the  interior-point method \cite{Boyd}. However, this
method neither exploits the particular structure of the problem itself nor does it provide any useful insights into the solution.
Hence, in the following, we employ the KKT conditions to analyze problem (\ref{eq14c}), which results in an optimal and  efficient solution.

\subsection{Iterative Algorithm for Energy Efficiency Maximization}
The partial  Lagrangian function of  problem (\ref{eq14c}) can be written as
\begin{align}\label{eq161}
~\mathcal{L}(\tau_0,E_k,\tau_k, \bm{\mu}, \delta, \vartheta)
=\,&(1+\vartheta)\sum_{k=1}^{K}\tau_kW\log_{2}\left(1+\frac{E_{k}}{\tau_k}\gamma_{k}\right) +\delta\left(T_{\max}-\tau_0-\sum_{k=1}^{K}\tau_k\right) \nonumber\\
-\,&q \left(P_{\max}\tau_0\left(\frac{1}{\xi}-\sum_{k=1}^{K}\eta h_k\right)+P_{c}\tau_0+ \sum_{k=1}^{K}\left( \frac{E_k}{\varsigma}+p_{c}\tau_k\right)\right) -\vartheta R_{\mathop{\min}}\nonumber \\ +\,&\sum_{k=1}^{K}\mu_k\left(Q_k+ \eta P_{\max}\tau_0h_k-\frac{E_k}{\varsigma}-p_{c}\tau_k\right),
\end{align}
where  $\bm{\mu}$, $\delta$, and  $\vartheta$ are the non-negative Lagrange multipliers associated with  constraints C2, C3,  and C6, respectively.  The boundary constraints C4 and C5 are absorbed into the optimal solution in the following. Then, the optimal solution can be obtained from the following theorem.
 \begin{theorem}\label{theorem70}
Given $\bm{\mu}$, $\delta$, and $\vartheta$,  the maximizer of $\mathcal{L}(\tau_0,E_k,\tau_k, \bm{\mu}, \delta, \vartheta)$ is given by
\begin{align}
\tau^{*}_0& \left\{
\begin{array}{lcl}
\in[\, 0, T_{\max}),&\text{if} ~f_0(\bm{\mu})=0,\\
=0,&\text{if} ~f_0(\bm{\mu})<0,
\end{array}\right.\label{solu_1}\\
E^*_k&=\tau^{*}_kp_k, \forall\, k,\label{solu_2}\\
\tau^{*}_k& \left\{
\begin{array}{lclcl}
=\frac{\eta P_{\max}\tau^{*}_0h_k+Q_k}{\frac{p^*_k}{\varsigma}+p_c}, &\text{if} ~\gamma_k>x^*,\\
 \in \left[0,\frac{\eta P_{\max}\tau^{*}_0h_k+Q_k}{\frac{p^*_k}{\varsigma}+p_c}\right],&\text{if} ~\gamma_k=x^*,\\
=0,&\text{if} ~\gamma_k<x^*,
\end{array}\right. \label{solu_3}
\end{align}
where $p_k$ and $f_0(\bm{\mu})$ are  given by
\begin{align}
p^*_k&=\left[\frac{W(1+\vartheta)\varsigma}{(q+\mu_k)\ln2}-\frac{1}{\gamma_k}\right]^+, \forall \, k,\label{solu_4}\\
f_0(\bm{\mu}) &= P_{\max}\left(\sum_{k=1}^{K}\mu_kh_k-q\left(1-\sum_{k=1}^{K}\eta h_k\right)\right)-qP_c-\delta.\label{solu_5}
%f_k({\mu_k})& = (1+\vartheta)W\log_2\left(1+\gamma_kp_k\right)-(q+\mu_k)\left(p_k+p_{c}\right)-\delta, \label{solu_6}
\end{align}
In (\ref{solu_3}), $x^*$ denotes the solution of
\begin{align}\label{solu_50}
aq(\ln2)\log_2(ax)\frac{1}{\varsigma}+\frac{q}{x}-q(a+p_c)-\delta=0,
\end{align}
where $a\triangleq \frac{W(1+\vartheta)\varsigma}{q\ln2}$.
%Moreover,  $\tau^{*}_0$ and $\tau_k^{*}$, for $k=1,...,K$,  satisfy
%\begin{align}
%\tau^{*}_0+\sum_{k=1}^{K}\tau^{*}_k& \left\{
%\begin{array}{lcl}
%= T_{\max},&\delta>0\\
%\leq T_{\max},&\delta=0.
%\end{array}\right.\label{solu_7}\\
%%%
%%%
%\sum_{k=1}^{K}\tau^{*}_kW\log_2(1+p_kg_k)& \left\{
%\begin{array}{lcl}
%=R_{\max},&\vartheta>0\\
%\leq R_{\max},&\vartheta=0.
%\end{array}\right.\label{solu_8}
%\end{align}
 \end{theorem}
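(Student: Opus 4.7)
The plan is to apply the Karush-Kuhn-Tucker (KKT) conditions to problem (\ref{eq14c}), which, as noted right after its statement, is a standard convex program, so stationarity together with primal/dual feasibility and complementary slackness is both necessary and sufficient for optimality. The partial Lagrangian in (\ref{eq161}) already incorporates C2, C3, and C6 via the multipliers $\boldsymbol{\mu}$, $\delta$, and $\vartheta$; what remains is to jointly maximize $\mathcal{L}$ over the primal variables $\tau_0$, $E_k$, $\tau_k$, absorbing the boundary constraints C4 and C5 into the solution.

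First I would handle the power level. Setting $\partial \mathcal{L}/\partial E_k = 0$ gives
\begin{align*}
\frac{(1+\vartheta) W \gamma_k}{(1 + \gamma_k E_k/\tau_k)\ln 2} = \frac{q+\mu_k}{\varsigma},
\end{align*}
which, after introducing $p_k^{*} := E_k^{*}/\tau_k^{*}$, rearranges to the water-filling form (\ref{solu_4}); the $[\,\cdot\,]^{+}$ operator then enforces C5. Relation (\ref{solu_2}) is just the definition $E_k^{*} = p_k^{*}\tau_k^{*}$. Next I would exploit the fact that $\mathcal{L}$ is linear in $\tau_0$ with a coefficient that depends only on the duals; collecting terms yields exactly $f_0(\boldsymbol{\mu})$ in (\ref{solu_5}). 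Boundedness of the Lagrangian maximum forces $f_0(\boldsymbol{\mu}) \leq 0$; when $f_0(\boldsymbol{\mu}) < 0$ the maximizer collapses to $\tau_0^{*} = 0$, and when $f_0(\boldsymbol{\mu}) = 0$ any $\tau_0 \in [0,T_{\max})$ achieves the same value, giving (\ref{solu_1}).

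The heart of the argument is the treatment of $\tau_k$. Substituting $E_k = p_k^{*}\tau_k$ into $\mathcal{L}$ makes it linear in $\tau_k$, with coefficient
\begin{align*}
(1+\vartheta) W \log_2(1+p_k^{*} \gamma_k) - \delta - (q + \mu_k)\left(\tfrac{p_k^{*}}{\varsigma} + p_c\right).
\end{align*}
By linearity, positivity of this coefficient drives $\tau_k^{*}$ up until C2 binds (yielding the closed-form ratio in (\ref{solu_3})), negativity forces $\tau_k^{*} = 0$, and a zero coefficient leaves $\tau_k^{*}$ free in the admissible interval. Plugging (\ref{solu_4}) back into this coefficient and setting it equal to zero produces, after clearing $\varsigma$ and using the abbreviation $a = (1+\vartheta)W\varsigma/(q\ln 2)$, the transcendental break-even equation (\ref{solu_50}); declaring its root $x^{*}$ gives the three-case characterization in (\ref{solu_3}) via $\gamma_k \gtreqless x^{*}$.

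The main obstacle I foresee is precisely this last step: the coefficient of $\tau_k$ involves $\mu_k$, yet the threshold $x^{*}$ in the statement depends only on $\gamma_k$ and the duals $\delta,\vartheta$ (through $a$). Reconciling this requires invoking complementary slackness for C2: $\mu_k > 0$ occurs exactly in the case where $\tau_k^{*}$ saturates the upper bound in (\ref{solu_3}), while the indifference case that defines the threshold corresponds to $\mu_k = 0$, so that the coefficient evaluates along the curve $p_k^{*} = a - 1/\gamma_k$ and the break-even condition becomes a single equation in $\gamma_k$ alone, namely (\ref{solu_50}). Once this coupling is resolved, combining the three stationarity analyses with (\ref{solu_2}) completes the characterization.
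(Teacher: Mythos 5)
Your proposal follows essentially the same route as the paper's Appendix G: stationarity in $E_k$ to get (\ref{solu_4}), linearity of $\mathcal{L}$ in $\tau_0$ and $\tau_k$ with boundedness forcing the coefficients to be nonpositive, and complementary slackness for C2 to resolve the apparent $\mu_k$-dependence of the scheduling threshold — which is exactly how the paper pins down $x^*$ via the $\mu_k=0$ curve. The only piece left implicit in your sketch is the monotonicity of the $\tau_k$-coefficient in $\gamma_k$ and $\mu_k$ (the paper's Lemma 6), which is the routine calculus fact needed to guarantee that $x^*$ exists and is unique and that $\gamma_k<x^*$ yields a strictly negative coefficient for every admissible $\mu_k$.
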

 \begin{proof}
 Please refer to Appendix G.
\end{proof}
\begin{algorithm}[!hbpt]
 \caption{Energy-Efficient Transmission Algorithm for WPCN}
  \begin{algorithmic}[1]
\item Initialize $q=0$ and the maximum tolerance $\epsilon$;
\item  {\bf{Repeat}}  \\%$U(\rho,p)< \epsilon$ and $n< L_{max}$
           ~~~~Initialize $\hat{\vartheta}$ and $\hat{\delta}$;\\
           ~~~~Set Lagrange multipliers  $\vartheta_{\mathop{\max}}=\hat{\vartheta}$,   $\vartheta_{\mathop{\min}}=0$,  $\delta_{\mathop{\max}}=\hat{\delta}$, and $\delta_{\mathop{\min}}=0$; \\
           ~~~~{\bf{While}} $\vartheta_{\mathop{\max}}-\vartheta_{\mathop{\min}} \geq  \epsilon$
\item
          ~~~~~~~~$\vartheta=\frac{1}{2}(\vartheta_{\mathop{\max}}+\vartheta_{\mathop{\min}})$;\\
          ~~~~~~~ ~~~~{\bf{While}} $\delta_{\mathop{\max}}-\delta_{\mathop{\min}} \geq  \epsilon$\\
          ~~~~~~~~~~~~~~~~$\delta=\frac{1}{2}(\delta_{\mathop{\max}}+\delta_{\mathop{\min}})$;\\
          ~~~~~~~~~~~~~~~~Compute $x^*$  from  (\ref{solu_50}) for given $q$, $\vartheta$, and $\delta$;  \\
          ~~~~~~~~~~~~~~~~Compute $\mu_k$ from (\ref{apdx_eq24}) and  (\ref{apdx_eq25}) with $\gamma_k>x^*$; otherwise, $\mu_k=0$;\\
          ~~~~~~~~~~~~~~~~Obtain  $p_{k}$ for each user from  (\ref{solu_4}); \\ %with $q$ and $\mu_k$;
          ~~~~~~~~~~~~~~~~Obtain $\tau_0$ and $\tau_k$ from (\ref{solu_1}) and (\ref{solu_3}), respectively;\\
          ~~~~~~~~~~~~~~~~{\bf{If}} there exist $\tau_0$ and $\tau_k, \forall \, k$, satisfying (\ref{solu_7}), then,  {\bf{break}};     \\
          ~~~~~~~~~~~~~~~~{\bf{elseif}} $  \tau_{0}+\sum_{k=1}^{K}\tau_k > T_{\max} $       \\
                        ~~~~~  ~~~~~~~~~~~~~$\delta_{\mathop{\min}}=\delta$;  ~~{\bf{else}}
                      ~~ $\delta_{\mathop{\max}}=\delta$; \\
                                  ~~~~~~~~~~~~~~~~{\bf{end}}     \\
                                            ~~~~~~~ ~~~~{\bf{end while}}  \\
                                            ~~~~~~~ ~~~~{\bf{If}} power allocation variables $p_k,\, \forall\, k $, satisfying (\ref{solu_8}), then,  {\bf{break}};     \\
                                            ~~~~~~~ ~~~~{\bf{elseif}}  $ \sum_{k=1}^{K}\tau_kW\log_{2}\left(1+p_kg_{k}\right) < R_{\mathop{\min}} $   \\
          ~~~~~~~~~~~~~~~$\vartheta_{\mathop{\min}}=\vartheta$;  ~~{\bf{else}}
                      ~~ $\vartheta_{\mathop{\max}}=\vartheta$; \\
                                            ~~~~~~~ ~~~~{\bf{end}}     \\
           ~~~~{\bf{end while}}  \\
                       % \ \ \ \ $n=n+1$;\\
           \ \ \ \ Update  $q=\frac{B_{\rm{\rm{tot}}}({p_k},{\tau_k})}{E_{\rm{\rm{tot}}}(\tau_0,  {p_k},{\tau_k})}$;    \\
           {\bf{until}}       $ T(q^*)< \epsilon$
%     ~       ~   \ \ \ \   $\bm{p^{opt}}=\bm{p^*}$; ~ \%  $\bm{p^{opt}}$ denotes the optimal power allocation,\\
%        ~  ~     \ \ \ \   $\bm{\rho^{opt}}= \bm{\rho^*}$; ~ \%  $\bm{\rho^{opt}}$ denotes the optimal pairing indicator,\\
%        ~  ~     \ \ \ \   $EE^{opt}=q^*$;   \%   $EE^{opt}$ denotes the optimal EE.\\
%  ~~~    {\bf{end} }\\
%          {\bf{end} }
\end{algorithmic}
\end{algorithm}

{By exploiting Theorem \ref{theorem70}, the optimal solution of (\ref{eq14c}) can be obtained with Algorithm 1 given on the next page. In Algorithm 1, we first initialize the Lagrange multipliers $\vartheta$ and $\delta$. Line 9 calculates $x^*$ from (\ref{solu_50}), where $x^*$ is the threshold to determine whether a user is scheduled or not. It is interesting to note that since the parameters $a$, $q$, $\varsigma$, $p_c$, and $\delta$ in (\ref{solu_50}) are independent of the user index $k$, the threshold $x^*$ is thereby identical for all users.
 Then,  based on (\ref{solu_3}), we determine the users that should be scheduled by comparing $x^*$ with $\gamma_k$.  Thus, for an unscheduled user $k$, its corresponding $\mu_k$ is zero since constraint C2 is met with strict inequality. In contrast, for a scheduled user $k$ with $\gamma_k$, line 10 calculates its corresponding $\mu_k$ by setting $f(\mu_k, \gamma_k)=0$ in (\ref{apdx_eq25}), where $f(\mu_k, \gamma_k)$ is given by (\ref{apdx_eq24}).
% and then, $\tau_0$ can be obtained from (\ref{solu_1}).
  With given $\vartheta$, $\delta$, and $\mu_k$, the power allocation variable $p_k$ can be immediately computed from (\ref{solu_4})  in line 11. Then, from (\ref{solu_1}) and (\ref{solu_3}), the region with respect to $\tau_0$ and $\tau_k$ is easily obtained as in line 12. Since it has been shown in (\ref{apdx_eq20}) and (\ref{apdx_eq24}) that the Lagrangian function $\mathcal{L}$ is a linear function with respect to $\tau_0$ and $\tau_k$, the optimal solution that maximizes $\mathcal{L}$ can always be found at the vertices of the region created by $\tau_0$ and $\tau_k$.  }
 %  Then, if (\ref{solu_7}) and (\ref{solu_8}) are both satisfied for the above obtained $\tau_0$, $\tau_k$, and $p_k$, the optimal solution for given $q$ has been found.}
  It is worth noting that in the case that all users have sufficient energy, it follows that $\mu_k=0$ due to the complementary slackness condition (\ref{apdx_eq200}). Then, from (\ref{solu_5}), this leads to $f_0(\bm{\mu})<0$  which implies that activating the power transfer is not beneficial for achieving the highest system EE.
%Then, based on (\ref{solu_50}),  we compute $\gamma^*_k$ and determine the users that should be scheduled.
%For each scheduled user $k$, we compute $\mu_k$ from (\ref{apdx_eq24}) and  (\ref{apdx_eq25}) with $\gamma_k$,  and determine $\tau_0$ from (\ref{solu_1}).
%With $\vartheta$, $\delta$, and $\mu_k$, the power variable $p_k$ is computed, and then the relationship between $\tau_0$ and $\tau_k$ is obtained. Finally, based on (\ref{solu_7}) and (\ref{solu_8}), we decide whether a possible solution exists. If a solution satisfying the above conditions exists, this is the optimal solution to problem (\ref{eq14c}).%, which is guaranteed by the optimality of the KKT conditions for convex problems.
Otherwise, $\vartheta$ and $\delta$ are updated iteratively until they converge. %Since the outer-layer Dinkelbach method and the inner-layer bisection method are both convergent \cite{dinkelbach1967nonlinear,Boyd }, the proposed algorithm is guaranteed to converge to the optimal solution.
 %Note that since problem (\ref{eq14c}) is a more general form of  problem  (\ref{eq10}), there may exist multiple optimal solutions.
% in Algorithm 1.

{ The computational complexity of Algorithm 1 can be analyzed as follows. The complexity of lines  8-11 in Algorithm 1 is linear in the number of users, $K$. Furthermore, the complexity of the Dinkelbach method  \cite{ng2012energy3} for updating  $q$ and  the bisection method \cite{Boyd} for updating $\vartheta$ and $\delta$ are both independent of $K$. Therefore, the total complexity of the proposed algorithm is $\mathcal{O}(K)$.}

%{The following two corollaries characterizes the properties of the energy-efficient WPCN.}
In the following, we reveal some properties of energy-efficient WPCN with a throughput constraint.
\begin{corollary}\label{theorem7}
If the total available transmission time is not used up, i.e., $\tau_0+\sum_{k=1}^K\tau_k<T_{\max}$, then each scheduled user $k$ transmits with the power that achieves the maximum user EE, i.e., $p_{k}=p^{\star}_{k}$ in (\ref{eq11}). In contrast, if the total available transmission time is used up, then the optimal transmit power of each scheduled user $k$ satisfies $p_{k} \geq p^{\star}_k$.
\end{corollary}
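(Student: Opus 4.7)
The plan is to read both parts of the corollary off the KKT stationarity conditions of the Lagrangian $\mathcal{L}$ in (\ref{eq161}) already used in Theorem \ref{theorem70}. The central idea is to combine the stationarity in $E_k$ with the stationarity in $\tau_k$ into a single identity in which the multiplier $\delta$ of constraint C3 controls the sign of the derivative $d\,ee_k/dp_k$; the strict quasiconcavity of $ee_k$ noted after (\ref{eq11}) then pinpoints the optimal $p_k$ relative to $p^{\star}_k$.

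Concretely, I would first set $\partial\mathcal{L}/\partial E_k = 0$, which reproduces (\ref{solu_4}) and can be rewritten as
\begin{equation*}
\frac{W\varsigma}{(p_k+1/\gamma_k)\ln 2} \;=\; \frac{q+\mu_k}{1+\vartheta}.
\end{equation*}
Next I would set $\partial\mathcal{L}/\partial \tau_k = 0$ using $E_k=p_k\tau_k$; after eliminating the arising $(1+\vartheta)Wp_k\gamma_k/[(1+p_k\gamma_k)\ln 2]$ term via the identity above, dividing through by $p_k/\varsigma+p_c>0$, and invoking the definition (\ref{eq11}) of $ee_k$, the condition reduces to
\begin{equation*}
(1+\vartheta)\,ee_k(p_k)-(q+\mu_k)\;=\;\frac{\delta}{p_k/\varsigma+p_c}\;\geq\;0,
\end{equation*}
with equality precisely when $\delta=0$.

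Combining these two identities with the derivative formula for $ee_k$ stated after (\ref{eq11}), a routine manipulation shows that $d\,ee_k/dp_k$ has the same sign as $W\varsigma/[(p_k+1/\gamma_k)\ln 2]-ee_k(p_k)$; substituting the first identity converts this quantity into $(q+\mu_k)/(1+\vartheta)-ee_k(p_k)=-\delta/[(1+\vartheta)(p_k/\varsigma+p_c)]\leq 0$. By the strict quasiconcavity of $ee_k$ with unique stationary point (which is also its global maximizer) at $p^{\star}_k$, the condition $d\,ee_k/dp_k\leq 0$ forces $p_k\geq p^{\star}_k$, and $d\,ee_k/dp_k=0$ forces $p_k=p^{\star}_k$. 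If $\tau_0+\sum_{k=1}^K\tau_k<T_{\max}$, complementary slackness for C3 gives $\delta=0$, hence $p_k=p^{\star}_k$; if the transmission time is used up, only $\delta\geq 0$ is guaranteed and one obtains $p_k\geq p^{\star}_k$, as claimed.

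The main obstacle I expect is simply the algebraic bookkeeping, since the four multipliers $\vartheta,\delta,q,\mu_k$ appear together in both stationarity equations and must combine cleanly into the displayed $\delta$-identity. A minor boundary point is that (\ref{solu_4}) contains a $[\,\cdot\,]^+$ truncation at $p_k=0$, but this is irrelevant for scheduled users, for which $\gamma_k>x^*$ and $p_k>0$ by (\ref{solu_3})--(\ref{solu_4}), so the derivative argument on the open interval $p_k>0$ suffices.
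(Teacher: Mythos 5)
Your proposal is correct and follows essentially the same route as the paper's Appendix H: both combine the KKT stationarity conditions in $E_k$ and $\tau_k$ to eliminate the rate-derivative term, arrive at a single identity controlled by the multiplier $\delta$ of C3, and invoke complementary slackness to distinguish the two cases. The only (cosmetic) difference is in the closing step, where the paper argues via monotonicity of the auxiliary function $\mathcal{D}(p_k)$ and its unique zero, while you read off the sign of $d\,ee_k/dp_k$ and appeal to the strict quasiconcavity of $ee_k$ --- these are equivalent ways of locating $p_k$ relative to $p^{\star}_k$.
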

\begin{proof}
Please refer to Appendix H.
%Due to space limitation, the proof is shown in a longer version of this paper.
\end{proof}

Corollary \ref{theorem7}  reveals that as long as the total available transmission time is sufficiently long,  letting each user independently maximize its own maximum EE is the most energy-efficient power control strategy for the whole system, which also coincides with the conclusion in Theorem \ref{theorem4} for best-effort PWPCN. On the other hand, if the available transmission time is not sufficient, users can only meet the required system throughput by increasing their transmit power at the expense of sacrificing user EE and also system EE.
{Furthermore, users that are not scheduled in the problem without $R_{\rm{\min}}$, i.e., problem (\ref{eq10}), may have to be scheduled in order to meet $R_{\rm{\min}}$, although scheduling them is detrimental to the system EE. Thus, it is likely that some of these users only consume just enough of their energy to satisfy $R_{\min}$.}
 The following corollary sheds some light on how an energy-efficient WPCN meets the QoS requirement.
\begin{corollary}\label{theorem8}
If WET is used, i.e., $\tau_0>0$,  and a scheduled user $m$ does not use up all of its available energy, then the transmit powers of all scheduled users remain constant until user $m$'s energy is used up. Moreover, as the required system throughput increases, the energy transfer time $\tau_0$  and the  transmission  time $\tau_k$ of any scheduled user $k\neq m$ decrease, respectively, while the transmission time $\tau_m$ of user $m$ increases.
\end{corollary}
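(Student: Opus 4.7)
The plan is to exploit the complementary slackness conditions from the KKT analysis of problem (\ref{eq14c}) (as summarized in Theorem~\ref{theorem70}) and track how the primal--dual optimum responds to an infinitesimal increase of $R_{\min}$.

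First, I would isolate user $m$. Because user $m$ does not exhaust its available energy, constraint C2 for $m$ holds with strict inequality, so complementary slackness forces $\mu_m=0$. Combined with $\tau^{*}_m>0$, the case analysis in (\ref{solu_3}) rules out both $\gamma_m>x^{*}$ (which would make C2 tight) and $\gamma_m<x^{*}$ (which would give $\tau_m=0$), so we must sit at the borderline $\gamma_m=x^{*}$. Then (\ref{solu_4}) with $\mu_m=0$ yields $p^{*}_m=\bigl[\tfrac{W(1+\vartheta)\varsigma}{q\ln 2}-\tfrac{1}{\gamma_m}\bigr]^{+}$, which depends on the dual pair $(q,\vartheta)$ only, not on any quantity coupled to $R_{\min}$ through the primal variables directly.

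Second, I would argue that, for a small increase in $R_{\min}$, none of the active-set identities change. Since $\tau_0>0$, (\ref{solu_1}) forces $f_0(\boldsymbol{\mu})=0$, which, by (\ref{solu_5}), ties $\delta$ to $(q,\boldsymbol{\mu})$. Equation (\ref{solu_50}), evaluated at $x^{*}=\gamma_m$, then ties $(q,\vartheta,\delta)$ to the fixed channel $\gamma_m$. For every other scheduled user $k\neq m$, (\ref{apdx_eq24})--(\ref{apdx_eq25}) with $\gamma_k>x^{*}$ pin $\mu_k$ to $(q,\vartheta,\delta,\gamma_k)$. This gives a closed system of equations whose solution is determined purely by the fixed channel gains and the scheduling pattern. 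As long as the slack in user $m$'s energy constraint is not yet exhausted, a perturbation of $R_{\min}$ does not alter this active set, so $q$, $\vartheta$, $\delta$, and all $\mu_k$ remain frozen. By (\ref{solu_4}), the transmit power $p^{*}_k$ of every scheduled user is therefore unchanged.

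Third, with powers held fixed, the extra throughput required by a larger $R_{\min}$ must be absorbed by re-allocating time. For each scheduled $k\neq m$ the energy constraint is tight, so
\begin{equation*}
\tau_k\Bigl(\tfrac{p^{*}_k}{\varsigma}+p_c\Bigr)=\eta P_{\max}\tau_0 h_k+Q_k,
\end{equation*}
giving $\tau_k$ as an affine increasing function of $\tau_0$. The time budget (tight, as in Lemma~\ref{theorem2}) and the now-active C6 provide two linear equations in $(\tau_0,\tau_m)$ once the other $\tau_k$ are substituted out. Implicit differentiation with respect to $R_{\min}$, using $W\log_2(1+p^{*}_m\gamma_m)>0$ and the positivity of the affine coefficients $\eta P_{\max} h_k/(p^{*}_k/\varsigma+p_c)$, yields $d\tau_m/dR_{\min}>0$ and $d\tau_0/dR_{\min}<0$, and the affine link then gives $d\tau_k/dR_{\min}<0$ for every $k\neq m$.

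The main obstacle is the second step: rigorously certifying that the active set, and hence the dual variables, stay put under the perturbation. One must exclude the possibility that another user enters or leaves the scheduled set, or that a currently-tight energy constraint becomes slack; this is valid locally, for $R_{\min}$-increments small enough that user $m$'s residual energy has not been fully absorbed. Once the multipliers are frozen, the monotonicity of the time variables is a routine sign check on a $2\times2$ linear system.
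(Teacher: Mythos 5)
Your overall plan mirrors the paper's Appendix I: use the KKT conditions of (\ref{eq14c}), deduce $\mu_m=0$ from the slack energy constraint of user $m$, argue that the scheduled users' transmit powers are invariant to $R_{\min}$, and then read off the monotonicity of $\tau_0$, $\tau_m$, and $\tau_k$ from the tight time budget together with the tight energy constraints $\tau_k(p_k/\varsigma+p_c)=\eta P_{\max}\tau_0 h_k+Q_k$. Your first and third steps are sound; the third is essentially the paper's argument made slightly more explicit as a $2\times 2$ linear system in $(\tau_0,\tau_m)$.

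The gap is in your second step. You claim that the stationarity identities $f_0(\bm{\mu})=0$, $f(\gamma_m,0)=0$, and $f(\gamma_k,\mu_k)=0$ for the scheduled $k\neq m$ form a closed system that freezes $q$, $\vartheta$, $\delta$, and all $\mu_k$. Count them: with $N$ scheduled users besides $m$ you have $N+2$ equations in the $N+3$ unknowns $q,\vartheta,\delta,\{\mu_k\}$; the missing equation is precisely the one that depends on $R_{\min}$ (the active constraint C6 together with the Dinkelbach root condition $T(q)=0$), so the multipliers are \emph{not} pinned down by the channel gains alone. Indeed they cannot be frozen: once C6 is active, $q$ is the optimal system EE, which strictly decreases as $R_{\min}$ grows (cf.\ Fig.~\ref{fig70}), and $\vartheta$ moves with it. What is invariant is not the multipliers themselves but the ratios $(q+\mu_k)/(1+\vartheta)$ and $\delta/(1+\vartheta)$ through which the powers in (\ref{solu_4}) are defined. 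The paper obtains this by elimination: $\mu_m=0$ gives $q/(1+\vartheta)=W\varsigma/\bigl((p_m+1/\gamma_m)\ln 2\bigr)$, the condition $f_0(\bm{\mu})=0$ expresses $\delta/(1+\vartheta)$ through the powers, and substitution into $f(\gamma_k,\mu_k)=0$ yields (\ref{apdx_eq10}), a system in the powers alone with no dependence on $R_{\min}$ or on any multiplier. Your argument needs this elimination (equivalently, the observation that the powers depend on the duals only through the above ratios) to go through; as written, the intermediate claim it rests on is false even though its consequence happens to be true.
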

\begin{proof}
Please refer to Appendix I.
%Due to space limitation, the proof is shown in a longer version of this paper.
\end{proof}
%\begin{table}[!t]
%\centering
%\caption{\label{table1}  SYSTEM PARAMETERS} \label{tb1}
%\renewcommand\arraystretch{1.05}
%\begin{tabular}{|c|c|}
%\hline
%{Parameter} & {Description} \\
%\hline
%Carrier frequency &      $470$ MHz     \\
%\hline
%System bandwidth, $B$&     $ 15$ kHz      \\
%\hline
%Thermal noise power&    $-131$ dBm     \\
%\hline
%Circuit power of the power station, $P_c$ &    $500$ mW       \\
%\hline
%Circuit power of user terminals, $p_c$ &    $5$ mW       \\
%\hline
%Energy harvesting efficiency, $\eta$   &    $0.9$     \\
%\hline
%Maximum wireless-powered distance, $d_{\mathop{\max}}$&     $ 15$ m      \\
%\hline
%Reference distance, $d_0$ &    $2$ m    \\
%\hline
%Path loss model  &    TGn path loss model      \\
%\hline
%Lognormal Shadowing  &    $0$ dB    \\
%\hline
%Multipath fading distribution  &    Rician  fading with Rician factor 7 dB   \\
%\hline
%\end{tabular}
%\end{table}

{Corollary \ref{theorem8} suggests that  if some scheduled user has a large amount of initial energy available,  it is preferable to utilize this energy  instead of prolonging  the DL WET time if the required throughput is high.  This is because DL WET not only causes circuit energy consumption but also reduces the time for UL WIT.}

\section{Numerical Results }

In this section, we present simulation results to validate our theoretical findings, and to demonstrate the system EE of WPCN.  Five users  are randomly and uniformly distributed on the right hand side of the power station with a reference distance of  2 meters and a maximum service distance of 15 meters. The information receiving  station is located 300 meters away from the power station. The system bandwidth is set as 20 kHz and the SNR gap is $\Gamma=0$ dB.  The path loss exponent is 2.8 and the thermal noise power is -110 dBm. The small scale fading for WET and WIT is Rician fading with Rician factor 7 dB and Rayleigh fading, respectively.  The circuit power consumptions at the power station and the user terminals are set to 500 mW and 5 mW \cite{li2014wireless}, respectively. The PA efficiencies of the power station and the user terminals, i.e., $\xi$ and $\varsigma$, are set to unity, without loss of generality. Unless specified otherwise, the remaining system parameters are  set to $\eta =0.9$,  $T_{\max}=1$s, and $P_{\max}=43$ dBm.  {In Figs. 3-6, best-effort communication WPCN are considered, whereas in Figs. 2 and 7, a minimum system throughput requirement is imposed.}
%The main system parameters are listed in Table I  according to \cite{xiong2012,xu2013energy} unless specified otherwise.
 \begin{figure}[!th]
\centering
\includegraphics[width=3in]{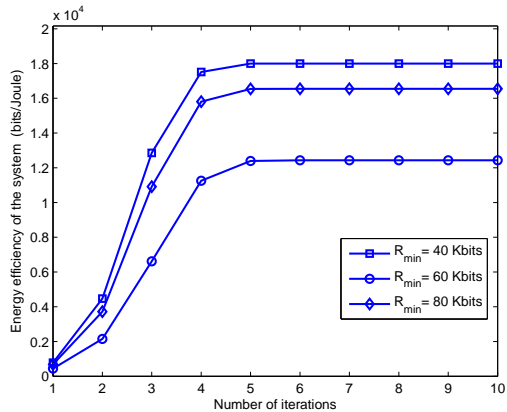}  %2.1
{\caption{ System EE versus the number of outer-layer iterations of the proposed algorithm for different minimum system requirements, $R_{\min}$. }\label{fig00}}
\end{figure}
{\subsection{ Convergence of Proposed Algorithm}
Fig. \ref{fig00} depicts the achieved system EE of the proposed Algorithm 1 versus the number of outer-layer iterations using the Dinkelbach method for different configurations. As can be observed, on average at most six iterations are needed to reach the optimal solution in the outer-layer optimization.  Since the time allocation and power control by the bisection method also results in a fast convergence in the inner-layer optimization \cite{Boyd}, the proposed algorithm is guaranteed to converge quickly.}
\subsection{ System EE of WPCN: PWPCN versus IELCN}
We provide a concrete example to illustrate Theorem \ref{theorem0} for best-effort communication. %We assume that $P_c$  and $p_c$ are 500 mW and 5 mW, respectively.
Specifically, we set {$\bm{Q}\triangleq [Q_1,Q_2,Q_3,Q_4,Q_5]=[0,  0, 1, 1, 1]$ (Joule),  $\bm{h}\triangleq [h_1,h_2,h_3,h_4,h_5]=[0.1, 0.1, 0.1, 0.1, 0.1]$, and $\bm{\gamma}\triangleq [\gamma_1,\gamma_2,\gamma_3,\gamma_4,\gamma_5]=[8,6,\gamma_3,0.3,0.2]$, respectively.} Note that only the  last three users have initial energy available. Therefore, from Theorem \ref{theorem01} for  IELCN, we know that only the third user is scheduled if $\gamma_3>0.3$, and its EE is independent of $P_{\max}$ and increasing with $\gamma_3$. However, from Theorem \ref{theorem4}, we know that the EE of PWPCN is increasing in $P_{\max}$. Therefore, we can vary $\gamma_3$ and $P_{\max}$ to observe  the system switching from IELCN to PWPCN in terms of system EE, which is  shown in Fig. \ref{fig30}. In the low transmit power regime, the system is in the IELCN mode,  but as  $P_{\max}$ increases, when the EE of PWPCN surpasses that of IELCN, the system switches to the PWPCN mode.
 \begin{figure}[!th]
\centering
\includegraphics[width=3in]{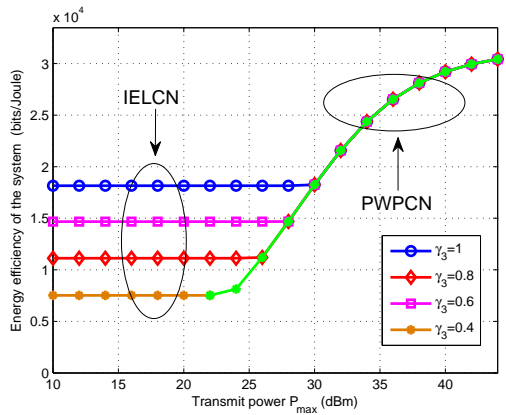}  %2.1
\caption{Illustration of the system switching from IELCN to PWPCN as $P_{\max}$ increases. The green curve corresponds to PWPCN and the horizontal portion of curves corresponds to IELCN.}\label{fig30}
\end{figure}

%It provides an useful insight for the practical system design:
%As we know that in practical systems, lots of advanced chip technologies and new types of materials are employed to improve the efficiency of harvesting devices.  Although higher energy harvesting efficiency results in higher system EE, it may not be beneficial for the user scheduling  since it allows less and less users involved in the communication.
%  \begin{figure}[!th]
%\centering
%\includegraphics[width=3.5in]{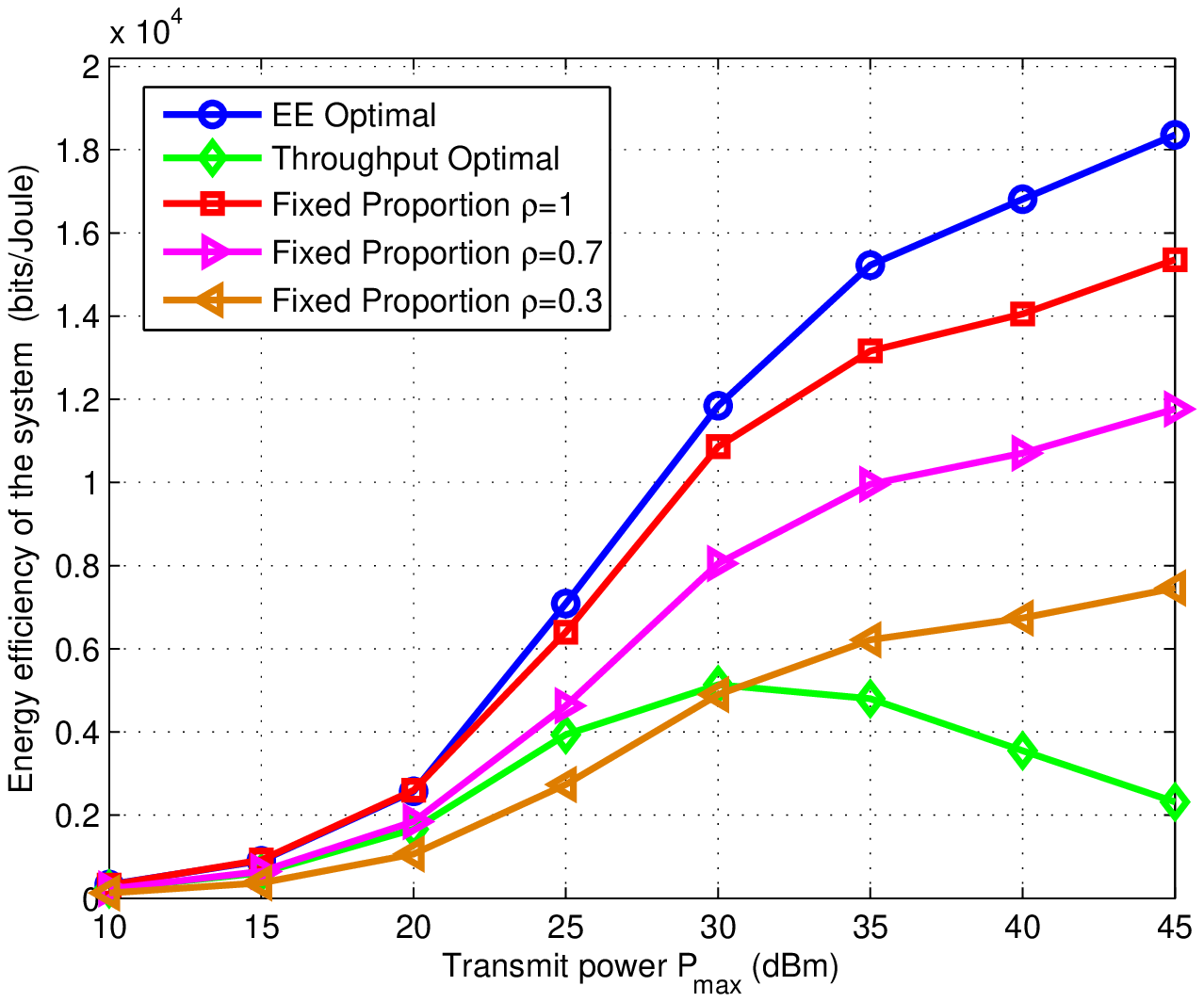}  %2.1
%\caption{System EE versus the maximum transmit power of the power station.}
%\end{figure}
\subsection{System EE  versus Transmit Power of  Power Station and Path Loss Exponent of WET Channel}

% \begin{figure}[!th]
%\centering
%\includegraphics[width=3.5in]{compare_pathloss.eps}
%\caption{The system EE  versus the path loss exponent.}
%\end{figure}

We compare the EE of the following schemes: 1) EE Optimal: proposed approach; 2) Throughput Optimal: based on conventional throughput
maximization \cite{ju14_throughput};  3) Fixed Proportion: let each user consume a fixed proportion of its harvested energy, denoted as $\rho$, which can be adjusted to balance the energy consumed and stored.
 In Fig. \ref{fig40}, as $P_{\mathop{\max}}$  increases, we  observe that the performance of the EE Optimal scheme first sharply increases and then experience a moderate increase while the EE of the Throughput Optimal scheme first increases and then strictly decreases, which is due to its greedy use of  power. Moreover, for the Fixed Proportion schemes, as $\rho$ increases, the system EE also increases. However, even for $\rho=1$, the EE Optimal scheme still outperforms the Fixed Proportion scheme. The proposed scheme has a superior performance as it only schedules users which are beneficial for the system EE while the Fixed Proportion scheme imprudently schedules all users without any selection.

  In Fig. \ref{fig50}, the  system EE of all schemes decreases with increasing path loss exponent $\alpha$.  Moreover, the performance gap between the different schemes decreases as $\alpha$ increases.  A larger path loss exponent leads to more energy loss in signal propagation, which forces the energy-efficient designs to schedule more users and to utilize  more energy to increase the system throughput so as to improve the system EE. Hence, the proposed algorithm behaves similar to the Throughput Optimal scheme for very high path loss exponents.
% \begin{figure}[!th]
%\centering
%\includegraphics[width=4in]{compare_3_pmax.eps}  %2.1
%\caption{System EE versus the maximum transmit power.}
%\end{figure}
%
%
% \begin{figure}[!th]
%\centering
%\includegraphics[width=4in]{compare_pathloss.eps}  %2.1
%\caption{System EE  versus the path loss exponent. }
%\end{figure}

\begin{figure*}[!t]
\begin{minipage}[t]{0.5\linewidth}
\centering
\includegraphics[width=3in, height=2.3in]{compare_3_pmax.eps}  %2.1
\caption{System EE versus the maximum transmit power.}\label{fig40}
\label{fig:side:a}
\end{minipage}%
\begin{minipage}[t]{0.5\linewidth}
\centering
\includegraphics[width=3in, height=2.3in]{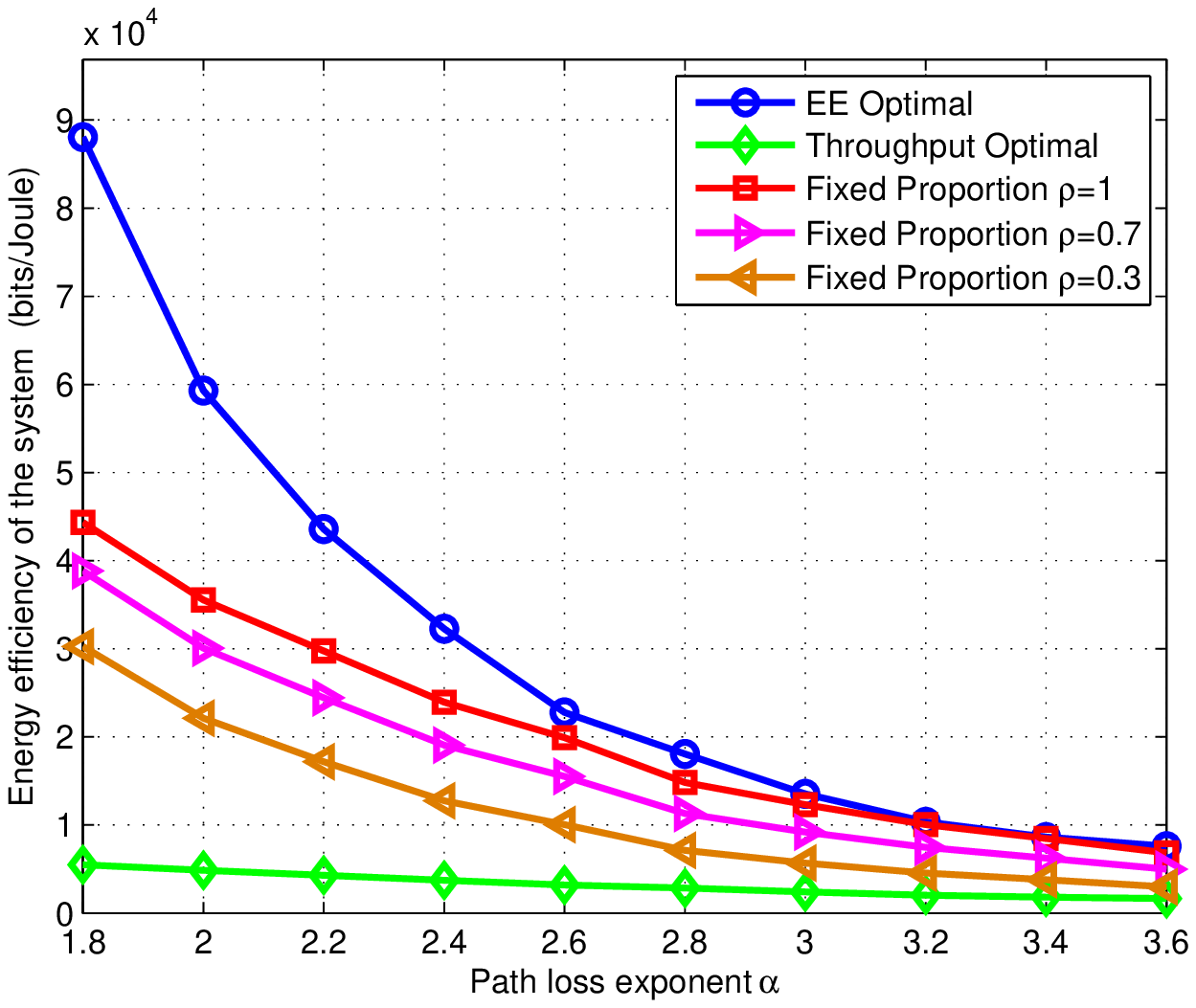}
\caption{System EE  versus the path loss exponent. }\label{fig50}
\end{minipage}%
\end{figure*}

\subsection{Number of Scheduled Users versus Energy Harvesting Efficiency}
% \begin{figure}[!th]
%\centering
%\includegraphics[width=3.5in]{}  %2.1
%\caption{The relationship between the number of scheduled users and the energy harvesting efficiency. In each point, we round the averaged number of scheduled users to the nearest integer.}
%\end{figure}
In Fig. \ref{fig60}, we show the number of scheduled users versus the energy harvesting efficiency of the user terminal, $\eta$.  An interesting observation is that  the number of scheduled users is non-decreasing with increasing $\eta$.
%, which verifies our theoretical findings in Corollary \ref{corollary1}.
This is because as the energy harvesting efficiency increases, the energy loss decreases which leads to a higher system EE. This further forces the system to be more conservative in scheduling users so as to maintain higher EE. Moreover, for a larger $P_c$, more users are scheduled.
 %which further confirms Corollary \ref{corollary1}.

\subsection{System EE  versus Minimum Throughput Requirement}

Fig. \ref{fig70} shows the system EE versus the minimum required system throughput, $R_{\min}$, for different numbers of user terminals.  We observe that as $R_{\min}$ increases, the system EE first remains constant and then gradually decreases, which is due to the fundamental trade-off between EE and spectral efficiency (SE). % That is: the higher throughput is achieved at the cost of sacrificing the EE.
As expected, the EE increases with the number of users $K$. The reasons for this are twofold. First, for DL WET, if more users participate in energy harvesting, the energy loss due to signal propagation decreases. Second, for UL WIT, a larger number of users results in a higher multiuser diversity gain, which in turn leads to a higher system throughput.

%\begin{figure}[h]
%\begin{minipage}[t]{.5\linewidth}
%\centering
%        \includegraphics[height=4.2cm, width=4.45cm]{number_eta.eps}
%        %\vspace{-0.45cm}
%        \centering
%\caption{{Throughput versus $R_s$.$N_r=30$ packets.}}\label{fig:fixedbuffer}
%\vspace{-0.1cm}
%\end{minipage}%
%\begin{minipage}[t]{.5\linewidth}
%\centering
%\includegraphics[height=4.2cm, width=4.45cm]{number_eta.eps}
%        %\vspace{-0.45cm}
%\caption{{Throughput versus $N_r$. $R_s=4$.}}\label{fig:fixedrate}
%\end{minipage}
%\end{figure}

% \begin{figure}[!th]
%\centering
%\includegraphics[width=3in]{}
%\caption{System EE  versus the minimum system throughput requirement.}
%\end{figure}
 %\begin{figure}[!th]
%\centering
%\includegraphics[width=4in]{number_eta.eps}  %2.1
%\caption{Number of scheduled users versus the efficiency $\eta$.}
%\end{figure}
%
%
% \begin{figure}[!th]
%\centering
%\includegraphics[width=4in]{compare_rmin.eps}  %2.1
%\caption{System EE  versus the minimum required throughput.}
%\end{figure}

\begin{figure*}[!t]
\begin{minipage}[t]{0.5\linewidth}
\centering
\includegraphics[width=3in, height=2.3in]{number_eta.eps}  %2.1
\caption{Number of scheduled users versus the efficiency $\eta$.}\label{fig60}
\label{fig:side:a}
\end{minipage}%
\begin{minipage}[t]{0.5\linewidth}
\centering
\includegraphics[width=3in, height=2.3in]{compare_rmin.eps}
\caption{System EE  versus the minimum required throughput.}\label{fig70}
\end{minipage}%
\end{figure*}
Another interesting observation is that for larger  $K$, the system EE decreases more rapidly than for smaller $K$. This is mainly because for larger $K$, more energy is harvested and thus the  energy loss in DL WET is relatively less dominant in the total energy consumption compared to the energy consumed for UL WIT. Therefore, for high throughput requirements, the energy consumption is more sensitive to changes in the throughput requirements for larger $K$, which leads to a faster decrease in the system EE.

\section{Conclusions}
{In this paper, we have  investigated the joint time allocation and power control of DL WET and  UL WIT to maximize the system EE of the WPCN. For the WPCN with best-effort communication, we have shown that the EE maximization problem is equivalent to the EE maximization in two different simplified systems, i.e., PWPCN and IELCN. For the PWPCN,  we have reduced the EE maximization problem to a multiuser scheduling problem where the number of scheduled users  increases with the circuit power but decreases with the energy conversion efficiency at the user side. On the other hand, for the IELCN, only the user with the highest user EE is scheduled.
 Furthermore, we have studied the EE maximization problem under a minimum required system throughput constraint and proposed an efficient algorithm for obtaining the optimal solution. In addition, we have shown that when the available transmission time is sufficiently long, the most energy-efficient strategy for the system is to let each user achieve its own maximum user EE. In contrast, if the transmission time is too short, the system EE has to be sacrificed to achieve the system throughput requirement.}

{
There are several interesting research directions that could be pursued based on the results in this paper:
1) %In this paper, perfect CSI is assumed in order to explore the maximum system EE. In practice, the CSI estimation errors will lead to a performance loss.
While the throughput in UL WIT improves with the quality of the CSI, this comes at the expense of energy and time needed for CSI estimation which reduces the system EE.
Therefore, the design of the optimal CSI acquisition strategy for maximizing the system EE  is an interesting topic.
2) Beyond the system EE, maximizing the user EE may be desirable in practice, for example,  to extend the lifetime of some specific battery. Thus, the user EE tradeoff is worth studying so that different transmission strategies can be employed to strike the balance among EEs of different users.
3) Finally,  maximizing the system EE while guaranteeing minimum individual user throughputs is also an interesting problem.}

\appendices
\section*{Appendix A: Proof of Theorem \ref{theorem0}}\label{apdix0}
%\section{Proof of Theorem \ref{theorem0}}\label{apdix0}
%c
{We first introduce a lemma to facilitate our proof.
\begin{lemma}
Assume that  $a$, $b$, $c$, and $d$ are arbitrary positive numbers. Then, we have
%\begin{align}\label{adx_eq1}
$\frac{a+c}{b+d}\leq\max{\{\frac{a}{b},\frac{c}{d}\}}$ where ``=" holds if and only if $\frac{a}{b}=\frac{c}{d}$.
\end{lemma}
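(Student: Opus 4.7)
The plan is to prove this classical mediant inequality by a straightforward cross-multiplication argument, after reducing to a canonical case using symmetry. Since the statement is symmetric in the pairs $(a,b)$ and $(c,d)$, I would assume without loss of generality that $\frac{a}{b} \leq \frac{c}{d}$, so that $\max\{\frac{a}{b},\frac{c}{d}\} = \frac{c}{d}$. The goal then reduces to showing $\frac{a+c}{b+d} \leq \frac{c}{d}$.

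The main step is to observe that because $b, d > 0$, the inequality $\frac{a+c}{b+d} \leq \frac{c}{d}$ is equivalent (after multiplying both sides by the positive quantity $d(b+d)$) to $d(a+c) \leq c(b+d)$. Expanding both sides and cancelling the common term $cd$ yields $da \leq cb$, which is precisely the assumption $\frac{a}{b} \leq \frac{c}{d}$ rewritten after clearing denominators. Hence the inequality holds, and the symmetric case (when $\frac{c}{d} \leq \frac{a}{b}$) follows by interchanging the roles of the two pairs.

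For the equality characterization, the same chain of equivalences shows that $\frac{a+c}{b+d} = \frac{c}{d}$ exactly when $da = cb$, i.e.\ when $\frac{a}{b} = \frac{c}{d}$. Conversely, if $\frac{a}{b} = \frac{c}{d} = r$, then $a = rb$ and $c = rd$, so $\frac{a+c}{b+d} = \frac{r(b+d)}{b+d} = r$, which matches the maximum. This gives the ``if and only if'' condition.

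I do not expect any serious obstacle here; the lemma is elementary and the proof is essentially a one-line cross-multiplication. The only care needed is to justify that the direction of the inequality is preserved when multiplying through, which is guaranteed by the positivity hypothesis on $a$, $b$, $c$, $d$. This lemma will then be invoked in the subsequent proof of Theorem~\ref{theorem0} to argue that the EE ratio of the joint WPCN is bounded by the maximum of the EE ratios of the two decoupled subsystems (PWPCN and IELCN), with equality characterizing when both subsystems achieve the same EE.
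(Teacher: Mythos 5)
Your proof is correct: the reduction by symmetry to the case $\frac{a}{b}\leq\frac{c}{d}$, the cross-multiplication $d(a+c)\leq c(b+d)\Leftrightarrow ad\leq bc$, and the equality characterization are all sound, and positivity of $b$ and $d$ justifies every step. The paper itself omits the proof of this lemma as ``straightforward,'' and your argument is precisely the standard mediant-inequality computation the authors had in mind, so there is nothing to compare beyond noting that you have filled in the omitted details correctly.
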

\begin{proof}
The proof is straightforward and thus omitted due to the space limitation.
\end{proof}}
{{Let $\mathcal{S}=\{P_0, \tau_0,\{{p}_{k}\}, \{{\tau}_k\}\}$ denote an \emph{arbitrary} solution of problem (\ref{eq10}) and its corresponding system EE is denoted as $EE$. Let $\mathcal{\widehat{S}}=\{\widehat{P}_0, \hat{\tau}_0,$  $\{\hat{p}_{k}\},\{\hat{\tau}_k\}\}$ and $\mathcal{\check{S}}=\{{\check{P}}_0, 0,\{\check{p}_{k}\},\{\check{\tau}_k\}\}$ denote the optimal solutions of problem (\ref{eq132}) or problem (\ref{eq133}), respectively.} The energy consumptions corresponding to $S$, $\widehat{S}$, and $\check{S}$ during DL WET are $E_{\rm{WET}}$, $\widehat{E}_{\rm{WET}}$, and 0, respectively.} The feasible sets of problems (\ref{eq10}),  (\ref{eq132}), and (\ref{eq133}) are denoted as $D$,  $D_{\mathcal{P}}$,  and $D_{\mathcal{I}}$, respectively, {and  $r_k(p_k)\triangleq W \log_2(1+{p_k\gamma_k})$.} {Note that if $\tau_k=0$ holds for $\forall \,  k\in\Phi_{\mathcal{P}}$ and  $\forall\, k\in\Phi_{\mathcal{I}}$, the system EE of WPCN is zero which is obviously not the maximum value of problem (\ref{eq10}).  Therefore, the maximum EE of problem (\ref{eq10}), $EE^*$, can only be achieved for one of the following three cases:\\
1) \{$\tau_0>0; \exists\,k\in\Phi_{\mathcal{P}}, \tau_k>0$;  $\forall \, k\in\Phi_{\mathcal{I}}, \tau_k=0$\}: In this case, as $\tau_0>0$ while $\forall \, k\in\Phi_{\mathcal{I}}, \tau_k=0$, the maximum EE of WPCN is achieved by PWPCN, i.e., problem (\ref{eq10}) simplifies to problem (\ref{eq132}) and $EE^*=\max{\{EE^*_{\rm{PWPCN}},0\}}=EE^*_{\rm{PWPCN}}.$\\
2) \{$\tau_0=0; \forall \,k\in\Phi_{\mathcal{P}}, \tau_k=0$;  $\exists\, k\in\Phi_{\mathcal{I}}, \tau_k>0$\}: In this case, as $\tau_0=0$ and  $\forall \,k\in\Phi_{\mathcal{P}}, \tau_k=0$, the maximum EE of WPCN is achieved by IELCN, i.e., problem (\ref{eq10}) simplifies to problem (\ref{eq133}) and $EE^*=\max{\{0,EE^*_{\rm{IELCN}}\}}=EE^*_{\rm{IELCN}}.$\\
 3) \{$\tau_0>0; \exists\, k\in\Phi_{\mathcal{P}}, \tau_k>0$;  $\exists\, k\in\Phi_{\mathcal{I}}, \tau_k>0$\}: In this case, by exploiting the fractional structures of (\ref{eq10})-(\ref{eq133}), we have the following inequalities
\begin{eqnarray}\label{apdx_eq01}
EE
&=& \frac{\sum_{k=1}^{K} \tau_kr_{k}(p_k)}
{E_{\rm{WET}}+ \sum_{k=1}^{K}\tau_k(\frac{p_k}{\varsigma}+p_{c})} \nonumber \\
&=&\frac{\sum_{k\in \Phi_{\mathcal{P}}} \tau_kr_k(p_k)+\sum_{ k\in\Phi_{\mathcal{I}}}\tau_kr_k(p_k)}
{E_{\rm{WET}} + \sum_{k\in\Phi_{\mathcal{P}}}\tau_k(\frac{p_k}{\varsigma}+p_{c})+ \sum_{k\in\Phi_{\mathcal{I}}}\tau_k(\frac{p_k}{\varsigma}+p_{c})} \nonumber \\
&\overset{a}\leq& \max\left\{\frac{\sum_{k\in \Phi_{\mathcal{P}}} \tau_kr_k(p_k)}{E_{\rm{WET}} + \sum_{k\in\Phi_{\mathcal{P}}}\tau_k(\frac{p_k}{\varsigma}+p_{c})},\frac{ \sum_{ k\in\Phi_{\mathcal{I}}}\tau_kr_k(p_k)}{\sum_{k\in\Phi_{\mathcal{I}}}\tau_k(\frac{p_k}{\varsigma}+p_{c})} \right\}    \nonumber \\
&\overset{b}\leq& \max\left\{\frac{\sum_{k\in \Phi_{\mathcal{P}}} \hat{\tau}_kr_k(\hat{p}_k)}{\widehat{E}_{\rm{WET}} + \sum_{k\in\Phi_{\mathcal{P}}}\hat{\tau}_k(\frac{\hat{p}_k}{\varsigma}+p_{c})},\frac{ \sum_{ k\in\Phi_{\mathcal{I}}}\check{\tau}_kr_k(\check{p}_k)}{\sum_{k\in\Phi_{\mathcal{I}}}\check{\tau}_k(\frac{\check{p}_k}{\varsigma}+p_{c})}\right\}    \nonumber \\
&=& \max\left\{EE^*_{\rm{PWPCN}}, EE^*_{\rm{IELCN}}\right\},
\end{eqnarray}
{where inequality ``$a$'' holds due to Lemma 4 and the strict equality ``='' represents the special case when the system EE of PWPCN is the same as that of IELCN.}
 Inequality ``$b$'' holds since  ${\widehat{S}}$ and $\check{S}$ are the optimal solutions corresponding to $EE^*_{\rm{PWPCN}}$ and  $EE^*_{\rm{IELCN}}$, respectively.  {Therefore,  in (35), if and only if the \emph{maximum} system EE of PWPCN is the same as the \emph{maximum} system EE of IELCN, the strict equality in ``$a$''  can hold together with the strict equality in ``$b$''. In this case, there exists a solution that satisfies $\{\tau_0>0; \exists\, k\in\Phi_{\mathcal{P}}, \tau_k>0$;  $\exists\, k\in\Phi_{\mathcal{I}}, \tau_k>0\}$ and achieves
the maximum system EE of WPCN. It thus follows that $EE^*=EE^*_{\rm{PWPCN}}=EE^*_{\rm{IELCN}}$ and without loss of generality, we assume that the maximum system EE for this case is achieved by PWPCN in order to preserve the initial energy of users belonging to $\Phi_{\mathcal{I}}$.
Otherwise, the strict equality in ``$a$''  can not hold together with the strict equality in ``$b$''.  This means that the system EE achieved by any solution that satisfies \{$\tau_0>0; \exists\, k\in\Phi_{\mathcal{P}}, \tau_k>0$;  $\exists\, k\in\Phi_{\mathcal{I}}, \tau_k>0$\} will be strictly smaller than the maximum EE of  either  PWPCN or IELCN, i.e., $EE^*=\max{\{EE^*_{\rm{PWPCN}},EE^*_{\rm{IELCN}}\}}$, which suggests that either PWPCN or IELCN is optimal.}}
{Next, we investigate under what conditions ``$b$'' holds with strict equality, i.e.,  $EE^*_{\rm{PWPCN}}$  and  $EE^*_{\rm{IELCN}}$ are achieved without violating the feasible domain of the original problem (\ref{eq10}). This leads to the following two cases:
\begin{itemize}
  \item For $k\in\Phi_{\mathcal{P}}$, it is easy to verify the equivalence between  $\{{P}_0, \tau_0,\{{p}_{k}\},\{{\tau}_k\}\}\in {D}$ and $\{{P}_0, \tau_0, \{{p}_{k}\},$ $\{{\tau}_k\}\}\in {D_\mathcal{P}}$. As $\{\widehat{P}_0, \hat{\tau}_0,\{\hat{p}_{k}\},\{\hat{\tau}_k\}\}$ maximizes $EE_{\rm{PWPCN}}$, ``$b$'' holds true for the first term inside the  bracket.
%  that for any $\{{P}^*_0, \tau_0^*, {p}^*_{k},{\tau}^*_k\}\in {D_\mathcal{P}}$, it follows that $\{{P}^*_0, \tau_0^*, {p}^*_{k},{\tau}^*_k\}\in {D}$ always holds true.
  \item For $k\in\Phi_{\mathcal{I}}$, the optimal solution, denoted as $\{{p}^*_{k}, {\tau}^*_k\}\in D$, implies that $\tau^*_k(\frac{p^*_k}{\varsigma}+p_c)\leq \eta P^*_0\tau^*_0 h_k+Q_k$ and $\tau^*_0+\sum_{k=1}^{K}\tau^*_k\leq T_{\max}$.  Then, we can construct another solution $\{\widetilde{P}_0,0,\{{\widetilde{p}}_{k}\},$ $\{ {\widetilde{\tau}}_k\}\}$ with $\widetilde{P}_0=P_0^*$,  $\widetilde{p}_k=p_k^*$, and $\widetilde{\tau}_k=\alpha\tau_k^*$, where $\alpha= \min\limits_{k \in \Phi_{\mathcal{I}}} \frac{Q_k}{Q_k+\eta P^*_0\tau^*_0h_k}\leq1$ such that
       $\widetilde{\tau}_k(\frac{\widetilde{p}_k}{\varsigma}+p_c)\leq Q_k$ for $\forall\, k$.  It can be verified that $\{\widetilde{P}_0,0,\{{\widetilde{p}}_{k}\}, \{ {\widetilde{\tau}}_k\}\}$ is a feasible point in $D_{\mathcal{I}}$, and can achieve the same EE as $\{P^*_0, \tau_0^*,\{{p}^*_{k}\}, \{{\tau}^*_k\}\}\in D$, i.e.,
   %    \begin{align}
   $
 \frac{ \sum_{ k\in\Phi_{\mathcal{I}}}{\widetilde{\tau}}_kr_k({\widetilde{p}}_k)}{\sum_{k\in\Phi_{\mathcal{I}}}{\widetilde{\tau}}_k(\frac{{\widetilde{p}}_k}{\varsigma}+p_{c})}=
  \frac{ \sum_{ k\in\Phi_{\mathcal{I}}}\alpha \tau^*_kr_k({p}^*_k)}{\sum_{k\in\Phi_{\mathcal{I}}}\alpha\tau^*_k(\frac{{p}^*_k}{\varsigma}+p_{c})}
  =\frac{ \sum_{ k\in\Phi_{\mathcal{I}}}\tau^*_kr_k(p^*_k)}{\sum_{k\in\Phi_{\mathcal{I}}}\tau^*_k(\frac{p^*_k}{\varsigma}+p_{c})}.$
 %      \end{align}
On the other hand, since $\{{\check{P}}_0, 0,\{\check{p}_{k}\}, \{\check{\tau}_k\}\}\in D_{\mathcal{I}}$ maximizes $EE_{\rm{IELCN}}$, ``$b$'' holds true for the second term inside the  bracket.
%      Note that when $\tau_0^*>0$, $\{{p}^*_{k}, {\tau}^*_k\}\in D$ may violate the domain of $D_\mathcal{I}$, i.e., $\tau_k(p_k+p_c)\leq Q_k$. We show that we can construct a solution in $D_\mathcal{I}$ but have the same value of $EE_{\rm{IELCN}}$ as that of $\{{p}^*_{k}, {\tau}^*_k\}\in D$.  Construct a solution $\{{\widetilde{p}}_{k}, {\widetilde{\tau}}_k\}\in D$ with $\widetilde{p}_k=p_k^*$ and $\tau_k$
\end{itemize}}

{The above analysis proves Theorem \ref{theorem0}.}

\section*{Appendix  B: Proof of Lemma \ref{theorem1}}\label{apdix1}
%\section{Proof of Lemma \ref{theorem1}}\label{apdix1}
 We prove Lemma \ref{theorem1} by contradiction. Suppose that $\{P^*_{0},\{p^*_{k}\}, \tau^*_0,\{\tau^*_{k}\}\}$  is the optimal solution  to problem (\ref{eq132}) where $P^*_{0}<P_{\rm{\max}}$ holds for any $P^*_{0}$, and the optimal system EE is denoted as $EE^*$. Let $E^*_0\triangleq P^*_{0}\tau^*_0$ where $E^*_0$ can be interpreted as the actual energy transmitted by the power station. Then, we can construct another solution $\{\widetilde{P}_{0},\{{\widetilde{p}}_{k}\}, \widetilde{\tau}_0,\{{\widetilde{\tau}}_{k}\}\}$ satisfying
%  \begin{align}
 $\widetilde{P}_{0} =P_{\rm{\max}}$, $\widetilde{P}_{0} \widetilde{\tau}_0=E^*_0$, $\widetilde{p}_{k}=p^*_{k}$, and $\widetilde{\tau}_k=\tau^*_{k}$, respectively.
%  \end{align}
 The corresponding system EE is denoted as $\widetilde{EE}$. It is easy to check that $\{\widetilde{P}_{0},\{{\widetilde{p}}_{k}\}, \widetilde{\tau}_0,\{{\widetilde{\tau}}_{k}\}\}$  is  a  feasible solution given $\{P^*_{0},\{p^*_{k}\}, \tau^*_0,\{\tau^*_{k}\}\}$. Moreover, since $\widetilde{P}_{0} =P_{\rm{\max}} >P^*_{0}$, it follows that $\widetilde{\tau}_0< \tau_0^*$ and hence $P_c\widetilde{\tau}_0<P_c\tau_0^* $ always holds true.
 Therefore, we always have
%  \begin{align}
${ {\widetilde{P}}_{0}}\widetilde{\tau}_0\left(\frac{1}{\xi}-\sum_{k=1}^{K}\eta h_k \right)+P_c\widetilde{\tau}_0 <{P^*_{0}}\tau^*_0\left(\frac{1}{\xi}-\sum_{k=1}^{K}\eta h_k \right)+P_c\tau^*_0.$
%  \end{align}
   Since neither $\{p^*_{k}\}$ nor $\{\tau^*_{k}\}$ are changed in the constructed solution, based on  problem  (\ref{eq132}),  it follows that  $\widetilde{EE}>EE^*$, which contradicts  the assumption that $\{P^*_{0},\{p^*_{k}\}, \tau^*_0,\{\tau^*_{k}\}\}$ is the optimal solution. Lemma \ref{theorem1} is thus proved.

\section*{Appendix C: Proof of Lemma \ref{theorem2}}\label{apdix2}
%\section{Proof of Lemma \ref{theorem2}}\label{apdix2}
 Suppose that $\{P^*_{0},\{p^*_{k}\}, \tau^*_0,\{\tau^*_{k}\}\}$ yields the maximum system EE,  $EE^*$,  and satisfies $0 \leq\tau^*_{0}+\sum_{k\in \Phi_{\mathcal{P}}}\tau^*_k< T_{\rm{\max}}$.
Then, we can construct another solution $\{\widetilde{P}_{0},\{\widetilde{p}_{k}\}, \widetilde{\tau}_0,\{\widetilde{\tau}_{k}\}\}$  with  $\widetilde{P}_{0}=P^*_{0}$, $\widetilde{p}_{k}=p^*_{k}$, $\widetilde{\tau}_0=\alpha  \tau^*_0$, $\widetilde{\tau}_{k}=\alpha \tau^*_{k}$, respectively, where $\alpha= \frac{T_{\rm{\max}}}{\tau^*_{0}+\sum_{k\in \Phi_{\mathcal{P}}}\tau^*_k}>1$ such that  $\widetilde{\tau}_{0}+\sum_{k\in \Phi_{\mathcal{P}}}\widetilde{\tau}_k= T_{\rm{\max}}$. The corresponding system EE is denoted as $\widetilde{EE}$. First,  it is easy to check that $\{\widetilde{P}_{0},\{\widetilde{p}_{k}\}, \widetilde{\tau}_0,\{\widetilde{\tau}_{k}\}\}$ still satisfies constraints C1-C5. Then, substituting  $\{\widetilde{P}_{0},\{\widetilde{p}_{k}\}, \widetilde{\tau}_0,\{\widetilde{\tau}_{k}\}\}$ into problem (\ref{eq132}) yields  $\widetilde{EE}= EE^*$, which means that the optimal system EE can always be achieved by using up all the available time, i.e., $T_{\rm{\max}}$. Lemma \ref{theorem2} is thus proved.

%\begin{align}
%\sum_{k=1}^{K} \widetilde{\tau}_kB\log_2\left(1+\frac{\widetilde{p}_kg_k}{\Gamma\sigma^2}\right)=\alpha \sum_{k=1}^{K} {\tau}^*_kB\log_2\left(1+\frac{{p}^*_kg_k}{\Gamma\sigma^2}\right)\\
%  \widetilde{P}_{0}\widetilde{\tau}_0(1-\sum_{k=1}^{K}\eta h_k )+P_c\widetilde{\tau}_0 + \sum_{k=1}^{K}(p_k\tau_k+p_{c}\tau_k)=\alpha P^*_{0}\tau^*_0(1-\sum_{k=1}^{K}\eta h_k )+P_c\tau^*_0+ \sum_{k=1}^{K}(p_k\tau_k+p_{c}\tau_k).
%\end{align}
\section*{Appendix D: Proof of Lemma \ref{theorem3}}\label{apdix3}
%\section{Proof of Lemma \ref{theorem3}}\label{apdix3}
{First, if $EE^*_{\rm{PWPCN}}<ee_m^{\star}$, we proved that user $m$ will be scheduled in our previous work [Theorem 1]\cite{qing15_scheduling}}. Second, we prove that the scheduled user will use up all of its energy by contradiction. Suppose that $\{P^*_{0},\{p^*_{k}\}, \tau^*_0,\{\tau^*_{k}\}\}$  is the optimal solution  to problem (\ref{eq132})  and  there  exists a $U_m$, $\forall\,  m \in \Phi_{\mathcal{P}}$, such that $EE^*_{\rm{PWPCN}}<ee^*_m$,  but its harvested energy is not used up, i.e., $(\frac{p^*_m}{\varsigma}+p_c)\tau^*_m <\eta P_{\rm{\max}}\tau^*_0h_m$ and $(\frac{p^*_k}{\varsigma}+p_c)\tau^*_k \leq \eta P_{\rm{\max}}\tau^*_0h_k$ for $k\neq m$. The corresponding system EE, $EE^*_{\rm{PWPCN}}$, is given by
\begin{align}
EE^*_{\rm{PWPCN}}= \frac{\sum_{k\neq m} \tau^*_kW\log_2\left(1+{p^*_k\gamma_k}\right)+ \tau^*_mW\log_2\left(1+{p^*_m\gamma_m}\right)}
{{P^*_{0}}\tau^*_0\left(\frac{1}{\xi}-\sum_{k=1}^{K}\eta h_k \right)+P_c\tau^*_0 + \sum_{k\neq m}\tau^*_k\left(\frac{p^*_k}{\varsigma}+p_{c}\right) +\tau^*_m \left(\frac{p^*_m}{\varsigma}+p_{c}\right)}.
\end{align}
Then, we can construct another solution $\{\widetilde{P}_{0},\{\widetilde{p}_{k}\}, \widetilde{\tau}_0,\{\widetilde{\tau}_{k}\}\}$ with $\widetilde{P}_{0}=P^*_{0}$, $\widetilde{p}_{k}=p^*_{k}$ for $\forall\, k$, $\widetilde{\tau}_0=\beta \tau^*_0$, $\widetilde{\tau}_{k}=\beta \tau^*_{k}$ for $k\neq m $,  and  $\widetilde{\tau}_{m}=\alpha \tau^*_{m}$, respectively, where $0<\beta<1$ and $\alpha>1$. Note that as $\beta\rightarrow 0$, it follows that $\eta P_{\rm{\max}}\widetilde{\tau}_0h_m=\beta \eta P_{\rm{\max}}\tau^*_0h_m \rightarrow 0$, and as $\alpha $ increases, it follows that  $(\frac{\widetilde{p}_m}{\varsigma}+p_c)\widetilde{\tau}_m=\alpha(\frac{p^*_m}{\varsigma}+p_c)\tau^*_m$ increases. Therefore, there always exist $\alpha$ and $\beta$ such that $\alpha(\frac{p^*_m}{\varsigma}+p_c)\tau^*_m = \beta \eta P_{\rm{\max}}\tau^*_0h_m$ holds.
{It is also easy to check that for $k\neq m$, $\beta(\frac{p^*_k}{\varsigma}+p_c)\tau^*_k \leq\beta\eta P_{\rm{\max}}\tau^*_0h_k$ still holds.}
Consequently, the corresponding system EE, denoted as $\widetilde{EE}_{\rm{PWPCN}}$, is given by
\begin{align}
\widetilde{EE}_{\rm{PWPCN}}&= \frac{\sum_{k\neq m} \widetilde{\tau}_kW\log_2\left(1+{\widetilde{p}_k\gamma_k}\right)+ \widetilde{\tau}_mW\log_2\left(1+{\widetilde{p}_m\gamma_m}\right)}
{{\widetilde{P}_{0}}\widetilde{\tau}_0\left(\frac{1}{{\xi}}-\sum_{k=1}^{K}\eta h_k \right)+P_c\widetilde{\tau}_0 + \sum_{k\neq m}\widetilde{\tau}_k(\frac{\widetilde{p}_k}{\varsigma}+p_{c}) +\widetilde{\tau}_m \left(\frac{\widetilde{p}_m}{\varsigma}+p_{c}\right)}\\
&= \frac{\beta \sum_{k\neq m} \tau^*_kW\log_2\left(1+{p^*_k\gamma_k}\right)+ \alpha \tau^*_mW\log_2\left(1+{p^*_m\gamma_m}\right)}
{\beta \left({P^*_{0}}\tau^*_0\left(\frac{1}{{\xi}}-\sum_{k=1}^{K}\eta h_k \right)+P_c\tau^*_0 + \sum_{k\neq m}\tau^*_k\left(\frac{p^*_k}{\varsigma}+p_{c}\right) \right)+\alpha \tau^*_m \left(\frac{p^*_m}{\varsigma}+p_{c}\right)}.\nonumber
\end{align}
In order to compare $EE^*_{\rm{PWPCN}}$ and $\widetilde{EE}_{\rm{PWPCN}}$, we introduce Lemma \ref{lma1}.
%Now, we provide a lemma to facilitate our proof.
\begin{lemma}\label{lma1}
Assume that  $a$, $b$, $c$, and $d$ are arbitrary positive numbers which satisfy $\frac{a+c}{b+d}<\frac{c}{d}$. Then, for any $0<\beta<\alpha$, we always have
%\begin{align}\label{adx_eq1}
$\frac{a+c}{b+d}<\frac{\beta a+ \alpha c}{\beta b+\alpha d}.$
%\end{align}
\end{lemma}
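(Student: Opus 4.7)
The plan is to reduce both the hypothesis and the target inequality to clean algebraic conditions via cross-multiplication, and then combine them.

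First, I would rewrite the hypothesis in a more convenient form. Since $b, d > 0$, the assumption $\frac{a+c}{b+d} < \frac{c}{d}$ is equivalent, by cross-multiplication, to $(a+c)d < (b+d)c$. Expanding and canceling the common $cd$ term on both sides, this reduces simply to
\begin{equation*}
ad < bc, \qquad \text{equivalently} \qquad \tfrac{a}{b} < \tfrac{c}{d}.
\end{equation*}
This reformulation will be the workhorse of the proof.

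Next, I would attack the target inequality $\frac{a+c}{b+d} < \frac{\beta a + \alpha c}{\beta b + \alpha d}$ the same way. Both denominators are positive, so cross-multiplying gives the equivalent statement $(a+c)(\beta b + \alpha d) < (\beta a + \alpha c)(b+d)$. I would expand both sides and cancel the common terms $\beta a b$ and $\alpha c d$, which leaves
\begin{equation*}
\alpha a d + \beta b c < \beta a d + \alpha b c,
\end{equation*}
i.e., $(\alpha - \beta)(bc - ad) > 0$. Since $\alpha > \beta$ by hypothesis and $bc > ad$ by the rewritten assumption, this inequality holds, completing the proof.

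As an alternative viewpoint worth mentioning, both fractions are \emph{mediants} (weighted averages) of the two ratios $\frac{a}{b}$ and $\frac{c}{d}$: one can write $\frac{a+c}{b+d} = \frac{b\cdot(a/b) + d\cdot(c/d)}{b+d}$ and $\frac{\beta a + \alpha c}{\beta b + \alpha d} = \frac{\beta b\cdot(a/b) + \alpha d\cdot(c/d)}{\beta b + \alpha d}$. Because $\alpha > \beta$, the second mediant places proportionally more weight on the larger ratio $\frac{c}{d}$ and less on the smaller $\frac{a}{b}$, so it exceeds the first. I do not expect any real obstacle here since the claim is essentially an elementary fact about mediants; the only care needed is to keep track of the sign of $bc - ad$ from the rewritten hypothesis.
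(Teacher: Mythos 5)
Your proof is correct: the cross-multiplication reduces the hypothesis to $ad<bc$ and the conclusion to $(\alpha-\beta)(bc-ad)>0$, which follows immediately. The paper itself omits the proof of this lemma as ``straightforward,'' so there is no authorial argument to compare against; your computation is exactly the elementary verification the authors presumably had in mind.
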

\begin{proof}
The proof is straightforward and thus omitted due to the space limitation.
\end{proof}
%\begin{proof}
%From (\ref{adx_eq1}), it follows that   $(a+c)(\beta b+\alpha d)<(b+d)( \beta a+ \alpha c) \Longleftrightarrow \beta(bc-ad)< \alpha  (bc-ad) \Longleftrightarrow (\alpha-\beta)(bc-ad)>0$.  Since $\beta<\alpha$ and  $\frac{a+c}{b+d}<\frac{c}{d}$, i.e., $\frac{a}{b}<\frac{c}{d}$, Lemma \ref{lma1} is thus proved.
%\end{proof}

Let $a=\sum_{k\neq m} \tau^*_kW\log_2\left(1+{{p}^*_k\gamma_k}\right)$, $b={P^*_{0}}\tau^*_0(\frac{1}{\xi}-\sum_{k=1}^{K}\eta h_k )+P_c\tau^*_0 + \sum_{k\neq m}\tau^*_k(\frac{p^*_k}{\varsigma}+p_{c})$, $c=\tau^*_mW\log_2\left(1+{{p}^*_m\gamma_m}\right)$, and $d=\tau^*_m (\frac{p^*_m}{\varsigma}+p_{c})$, respectively. %Thus, $EE^*_{\rm{PWPCN}}= \frac{a+c}{b+d}$ and $ee^*_k=\frac{c}{d}$.
 {Since user $m$ is scheduled, we have $EE^*_{\rm{PWPCN}}<ee_m^*$, i.e., $\frac{a+c}{b+d}<\frac{c}{d}$, otherwise, $EE^*_{\rm{PWPCN}}$ can be further increased by letting $\tau^*_m=0$.  Based on Lemma \ref{lma1}, we obtain $EE^*_{\rm{PWPCN}}<\widetilde{EE}_{\rm{PWPCN}}$, which contradicts the assumption, and 1) in Lemma \ref{theorem3} is thus proved. The proofs of  2) and 3) can be obtained easily following a similar procedure as above, and thus are omitted here for brevity.}  %while the maximum system EE is achieved
\section*{Appendix E: Proof of Theorem \ref{theorem4}}\label{apdix4}
%\section{Proof of Theorem \ref{theorem4}}\label{apdix4}
Denote $S^*\subseteq \Phi_{\mathcal{P}}$ as the set of users which are scheduled. Substituting  $P_0=P_{\rm{\max}}$ and $\tau_k=\frac{\eta P_{\rm{\max}}h_k\tau_0}{\frac{p_k}{\varsigma}+p_c}$  into the objective function of problem (\ref{eq132}), we have
\begin{align}
EE&= \frac{\sum_{k\in S^*} \frac{\eta P_{\rm{\max}}h_k\tau_0}{\frac{p_k}{\varsigma}+p_c}W\log_2\left(1+p_k\gamma_k\right)}
{{P_{\rm{\max}}}\tau_0(\frac{1}{{\xi}}-\sum_{k=1}^{K}\eta h_k )+P_c\tau_0 + \sum_{k\in S^*}\frac{\eta P_{\rm{\max}}h_k\tau_0}{\frac{p_k}{\varsigma}+p_c}(\frac{p_k}{\varsigma}+p_{c})} \nonumber\\
%&=\frac{\eta P_{\rm{\max}} \sum_{k\in S^*} \frac{h_k}{p_k+p_c}W\left(1+p_k\gamma_k\right)}
%{P_{\rm{\max}}(1-\sum_{k=1}^{K}\eta h_k )+P_c + \eta P_{\rm{\max}}\sum_{k\in S^*}h_k} \nonumber \\
&=\frac{\eta P_{\rm{\max}} \sum_{k\in S^*} h_kee_k}
{{P_{\rm{\max}}}(\frac{1}{{\xi}}-\sum_{k=1}^{K}\eta h_k )+P_c + \eta P_{\rm{\max}}\sum_{k\in S^*}h_k},
\end{align}
where $ee_k$ is the user EE defined in (\ref{eq11}). Given $S^*$, in order to maximize $EE$, we only have to maximize each $ee_k$, which is solely determined by $p_k$,  and the maximum value $ee^{\star}_k$ can be computed from (\ref{eq11}) and (\ref{eq12}). After some manipulations, we obtain
\begin{align}\label{apdx}
EE^*=\frac{\sum_{k\in S^*}h_k ee^{\star}_k}{\frac{1}{\eta\xi} \left(\frac{P_{c}}{ P_{\mathop{\max}}}\xi+1-\sum_{k=1}^{K}\xi\eta h_k\right)+ \sum_{k\in S^*}h_k}.
\end{align}
{Since the transmit power of each scheduled user $k$ is $p^{\star}_k$ given by (\ref{apdx}), $\tau^*_0$ and $\tau^*_k$ can be easily obtained from Lemma 2 and Lemma 3.} It is worth noting that there only exists a relationship between $\tau_0$ and $\tau_k$, $\forall\, k$ as in (\ref{eq15_3}). The value of $\tau_0$ can be scaled without affecting the system EE of PWPCN in the feasible region.
Theorem \ref{theorem4} is thus proved.
\section*{Appendix F: Proof of Theorem \ref{theorem01}}\label{apdix01}
%\section{Proof of Theorem \ref{theorem01}}\label{apdix01}
From (\ref{eq133}),  we have
\begin{align}\label{apdx_eq02}
EE^*_{\rm{IELCN}}&= \frac{ \sum_{ k\in\Phi_{\mathcal{I}}}\tau^*_kW\log_2\left(1+{p^*_k\gamma_k}\right)}{\sum_{ k\in\Phi_{\mathcal{I}}}\tau^*_k(\frac{p^*_k}{\varsigma}+p_{c})}  \overset{c}\leq \max\limits_{k\in \Phi_{\mathcal{I}}}\frac{ \tau^*_kW\log_2\left(1+{p^*_k\gamma_k}\right)}{\tau^*_k(\frac{p^*_k}{\varsigma}+p_{c})}    \nonumber\\
&\overset{d}\leq \max\limits_{k\in \Phi_{\mathcal{I}}} \frac{ W\log_2\left(1+{p^{\star}_k\gamma_k}\right)}{\frac{p^{\star}_k}{\varsigma}+p_{c}}=ee^{\star}_k%\nonumber
\end{align}
where inequality ``$c$'' holds due to the same argument as inequality ``$a$'' in (\ref{apdx_eq01}), and ``$d$''  follows from the optimality of $p^{\star}_k$ for $ee^{\star}_k$.
From (\ref{apdx_eq02}), we observe that the maximum system EE is always achieved by scheduling a single user. Then, applying the optimal power $p^{\star}$ in the time and energy harvesting constraints, we obtain (\ref{eq135}) and (\ref{eq136}). Similarly, the value of $\tau_k$ $\forall\, k$ does not affect the system EE of IELCN.
\section*{Appendix G: Proof of Theorem \ref{theorem70}}\label{apdix70}
%\section{Proof of Theorem \ref{theorem70}}\label{apdix70}
By taking the partial derivative of $\mathcal{L}$ with respect to $\tau_0$, $E_k$, and $\tau_k$, respectively, we  obtain
\begin{align}
\frac{\partial\mathcal{L}}{\partial \tau_0}&= P_{\max}\left(\sum_{k=1}^{K}\mu_kh_k-q\left(\frac{1}{\xi}-\sum_{k=1}^{K}\eta h_k\right)\right)-qP_c-\delta, \label{apdx_eq20}\\
\frac{\partial\mathcal{L}}{\partial E_k}&=\frac{W(1+\vartheta)\tau_k\gamma_k}{(\tau_k+E_k\gamma_k)\ln2}-\frac{q+\mu_k}{\varsigma}, \label{apdx_eq21}\\
\frac{\partial\mathcal{L}}{\partial \tau_k}&=W(1+\vartheta)\log_2\left(1+\frac{E_k}{\tau_k}\gamma_k\right)-
                                                                                              \frac{W(1+\vartheta)E_k\gamma_k}{(\tau_k+E_k\gamma_k)\ln2}-(q+\mu_k)p_c-\delta, \label{apdx_eq22}
\end{align}
and the complementary slackness conditions are given by
\begin{align}
\mu_k\left(Q_k+ \eta P_{\max}\tau_0h_k-\frac{E_k}{\varsigma}-p_{c}\tau_k\right)&=0,   \label{apdx_eq200} \\
\delta\left( T_{\max}-\tau_0-\sum_{k=1}^{K}\tau_k\right)& =0,                \label{apdx_eq201}\\
\vartheta\left(\sum_{k=1}^{K}\tau_kW\log_{2}\left(1+\frac{E_{k}}{\tau_k}\gamma_{k}\right)-R_{\min} \right)&=0.     \label{apdx_eq202}
\end{align}

 Let $f_0(\bm{\mu})\triangleq\frac{\partial\mathcal{L}}{\partial \tau_0}$ and $f(\gamma_k,\mu_k)\triangleq\frac{\partial\mathcal{L}}{\partial \tau_k}$.  From (\ref{apdx_eq20}), we know that $\mathcal{L}$ is a linear function of $\tau_0$. Since $\tau_0\geq 0$, to make sure that the Lagrangian function $\mathcal{L}$ is bounded above \cite{Boyd}, we have $f_0(\bm{\mu})\leq0$. Specifically, when $f_0(\bm{\mu})<0$, it follows that $\tau_0=0$, otherwise if $f_0(\bm{\mu})=0$, $\tau_0\geq0$, which results in (\ref{solu_1}). From $\frac{\partial\mathcal{L}}{\partial E_k}=0$, we can obtain the relationship between $E_k$ and $\tau_k$ as
 \begin{align}\label{apdx_eq23}
 p_k=\frac{E_k}{\tau_k}=\left[\frac{W(1+\vartheta)\varsigma}{(q+\mu_k)\ln2}-\frac{1}{\gamma_k}\right]^+, \forall \, k.
 \end{align}
 Substituting  (\ref{apdx_eq23}) into (\ref{apdx_eq22}) and after some manipulations, $f(\gamma_k,\mu_k)$ can be expressed as
 \begin{align}\label{apdx_eq24}
f(\gamma_k,\mu_k)\nonumber &= (1+\vartheta)W\log_2\left(1+\gamma_k\left[\frac{W(1+\vartheta)\varsigma}{(q+\mu_k)\ln2}-\frac{1}{\gamma_k}\right]^+\right) \nonumber \\
&-(q+\mu_k)\left(\left[\frac{W(1+\vartheta)\varsigma}{(q+\mu_k)\ln2}-\frac{1}{\gamma_k}\right]^{+}+p_{c}\right)-\delta.
 \end{align}
 Since $\tau_k\geq 0$, using a similar analysis as for $\tau_0$, the optimal solution of $\tau_k$ must satisfy
 \begin{align}\label{apdx_eq25}
 \frac{\partial\mathcal{L}}{\partial \tau_k}= f(\gamma_k,\mu_k) \left\{
\begin{array}{lcl}
< 0,&\tau_k=0,\textcolor{white}{\,\,~\forall k}\\
= 0,&\tau_k\geq 0, ~\forall k.
\end{array}\right.
 \end{align}
  To facilitate our derivation, we next introduce a lemma related to  $f(\gamma_k,\mu_k)$.
 \begin{lemma}\label{lemma5}
$f(\gamma_k,\mu_k)$ is an increasing function of $\gamma_k$ and   a decreasing function of $\mu_k$ under the condition that $\frac{W(1+\vartheta)\varsigma}{(q+\mu_k)\ln2}>\frac{1}{\gamma_k}$.
 \end{lemma}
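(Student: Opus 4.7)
The plan is to prove the monotonicity claims by direct differentiation after first exploiting the stated condition to remove the awkward $[\cdot]^+$ operator. Under $\frac{W(1+\vartheta)\varsigma}{(q+\mu_k)\ln 2} > \frac{1}{\gamma_k}$, the $[\cdot]^+$ in (\ref{apdx_eq24}) is inactive, so I would replace it by its argument and observe that the logarithm's argument collapses neatly to $\frac{W(1+\vartheta)\varsigma\gamma_k}{(q+\mu_k)\ln 2}$. This lets me rewrite
\begin{equation*}
f(\gamma_k,\mu_k) = (1+\vartheta)W\log_2\!\left(\tfrac{W(1+\vartheta)\varsigma\gamma_k}{(q+\mu_k)\ln 2}\right) - \tfrac{W(1+\vartheta)\varsigma}{\ln 2} + \tfrac{q+\mu_k}{\gamma_k} - (q+\mu_k)p_c - \delta,
\end{equation*}
a form in which both partial derivatives are elementary.

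For monotonicity in $\gamma_k$, I would compute $\partial f/\partial \gamma_k = \frac{(1+\vartheta)W}{\gamma_k \ln 2} - \frac{q+\mu_k}{\gamma_k^2}$, which is positive iff $\gamma_k > \frac{(q+\mu_k)\ln 2}{(1+\vartheta)W}$. The hypothesis gives $\gamma_k > \frac{(q+\mu_k)\ln 2}{(1+\vartheta)W\varsigma}$, and since $\varsigma \in (0,1]$ (as stated after (\ref{eq8_1})) the hypothesized bound is at least as strong as the required one, yielding $\partial f/\partial \gamma_k > 0$.

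For monotonicity in $\mu_k$, I would compute $\partial f/\partial \mu_k = -\frac{(1+\vartheta)W}{(q+\mu_k)\ln 2} + \frac{1}{\gamma_k} - p_c$. Here I would again invoke the hypothesis together with $\varsigma \le 1$ to get $\frac{(1+\vartheta)W}{(q+\mu_k)\ln 2} \ge \frac{(1+\vartheta)W\varsigma}{(q+\mu_k)\ln 2} > \frac{1}{\gamma_k}$, and then subtract $p_c > 0$ on the right to conclude $\partial f/\partial \mu_k < 0$.

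The only subtlety I anticipate is the consistent use of $\varsigma \le 1$ to convert the hypothesized bound into the form needed for the $\gamma_k$-derivative argument; without that observation, the sign of $\partial f/\partial \gamma_k$ under the stated condition would not be immediate. Everything else is routine calculus, and the structure of $f$ after removing the $[\cdot]^+$ makes both partial derivatives transparent, so no further tricks should be required.
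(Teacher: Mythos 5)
Your proof is correct and takes essentially the same route as the paper: the paper disposes of this lemma with the one-line remark that it follows ``by taking the derivative of $f(\gamma_k,\mu_k)$ with respect to $\gamma_k$ and $\mu_k$,'' and you simply carry out that computation in full, correctly handling the removal of the $[\cdot]^+$, the collapse of the logarithm's argument, and the sign analysis. The only trivial caveat is that $\varsigma\in(0,1]$ is implied by its role as a PA efficiency (mirroring the explicit statement for $\xi$) rather than stated verbatim after (5), but this does not affect the validity of your argument.
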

 \begin{proof}
Lemma 6 can be easily proved by taking the derivative of $f(\gamma_k,\mu_k)$ with respect to $\gamma_k$ and $\mu_k$, respectively. The proof is thus omitted due to the space limitation.
\end{proof}

 Based on Lemma \ref{lemma5}, we know that the maximum value of $f(\gamma_k,\mu_k)$ in terms of $\mu_k$ is achieved at $\mu_k=0$, i.e., $ f(\gamma_k,0)$. Moreover,  when $\gamma_k= \frac{q\ln2}{W(1+\vartheta)\varsigma}$, $f(\gamma_k,0)=-qp_c-\delta<0$ holds, and when $\gamma_k\rightarrow +\infty$, $f(\gamma_k,0)\rightarrow +\infty$ holds. { From (\ref{apdx_eq24}), since $f(\gamma_k,0)$ is an increasing function of $\gamma_k$, there always  exists a $x^*$ such that $f(x^*,0)=0$, i.e.,
\begin{align} \label{eq4.9}
f(x^*,0)= (1+\vartheta)W\log_2\left(\frac{W(1+\vartheta)x^*}{q\ln2}\right)+\frac{q}{x}
-\frac{W(1+\vartheta)\varsigma}{\ln2}-qp_{c}-\delta=0,
\end{align}
 which results in (\ref{solu_50}).}  Note that since the parameters $\vartheta$, $W$, $q$, $\varsigma$, $p_c$, and $\delta$ in (\ref{eq4.9}) do not depend on the user index $k$, the threshold $x^*$ is thereby identical for all users.
 %It is worth noting that in (\ref{eq4.9}), all other parameters are the same for $\forall k$, which means that $x^*$ is a common threshold for all users.}
Now, we analyze the following three cases:
  \begin{itemize}
    \item  For $\gamma_k<x^*$, it follows that  $f(\gamma_k,\mu_k)\leq f(\gamma_k,0)<0$. According to (31), we know that a user with UL channel gain $\gamma_k$ less than $x_k$ is allocated zero transmission time, i.e., $\tau_k=0$.
         %which suggests that users with severely degraded channel conditions, should reserve all of their harvested energy  to avoid decreasing the system EE;
    \item  For $\gamma_k>x^*$, there always exists a $\mu_k>0$ such that  $f(\gamma_k,\mu_k)=0 <f(\gamma_k,0)$ since $f(\gamma_k,\mu_k)$ is a decreasing function with respect to $\mu_k$. However, there may exist  $\mu_k>0$ such that $f(\gamma_k,\mu_k)<0$. Then, according to (29), it follows that  $\tau_k=0$ and $\frac{E_k}{\varsigma}+p_c\tau_k=0<\eta P_{\max}\tau^{*}_0h_k+Q_k$, which contradicts  (\ref{apdx_eq200}), i.e., $\mu_k(\eta P_{\max}\tau^{*}_0h_k+Q_k-\frac{E_k}{\varsigma}-p_{c}\tau_k)=0$, and this is thereby not the optimal solution. Nevertheless, for users with $\gamma_k$ larger than $x^*$, $\mu_k>0$ implies that they utilize all of their energy. Thus, from (\ref{apdx_eq200}), we have $\tau_k= \frac{\eta P_{\max}\tau^{*}_0h_k+Q_k}{\frac{p_k}{\varsigma}+p_c}$. {Correspondingly, as $\tau_k>0$, the value of $\mu_k$ can be calculated from the second case in (\ref{apdx_eq25}), where $f(\gamma_k,\mu_k)$ is given by (\ref{apdx_eq24}), i.e.,  $f(\gamma_k,\mu_k)=0$.
        }
    \item For $\gamma_k=x^*$, if $\mu_k>0$, then $f(\gamma_k,\mu_k)=0 <f(\gamma_k,0)=0$ and $\tau_0=0$, which contradicts  (\ref{apdx_eq200}). Therefore, $\mu_k=0$ follows from  (\ref{apdx_eq200}), this means that user $k$ can utilize any portion of its energy, i.e., $\tau_k \in\left[0,\frac{\eta P_{\max}\tau^{*}_0h_k+Q_k}{\frac{p_k}{\varsigma}+p_c}\right]$.
  \end{itemize}
Based on the above three cases, we obtain the region of time allocation variables given in (\ref{solu_1}) and (\ref{solu_3}). As the  Lagrangian function $\mathcal{L}$ is a linear function of $\tau_0$ and $\tau_k$, the maximum value of $\mathcal{L}$ can always be obtained at the vertices of the region created by (\ref{solu_1}) and (\ref{solu_3}). Moreover,  $\tau^{*}_0$ and $\tau_k^{*}$, for $k=1,...,K$,  satisfy the complementary slackness conditions (\ref{apdx_eq201}) and (\ref{apdx_eq202}).
%\begin{align}
%\tau^{*}_0+\sum_{k=1}^{K}\tau^{*}_k& \left\{
%\begin{array}{lcl}
%= T_{\max},&\delta>0\\
%\leq T_{\max},&\delta=0.
%\end{array}\right.\label{solu_7}\\
%%%
%%%
%\sum_{k=1}^{K}\tau^{*}_kW\log_2(1+p_kg_k)& \left\{
%\begin{array}{lcl}
%=R_{\max},&\vartheta>0\\
%\leq R_{\max},&\vartheta=0.
%\end{array}\right.\label{solu_8}
%\end{align}
%Note that for the case $\gamma_k=x^*$ and the case $ f_0(\bm{\mu})=0$,
%there are no clear expressions for $\tau_k$ and $\tau_0$ where only possible regions are available.  However,
%the complementary slackness conditions (\ref{apdx_eq201}) and (\ref{apdx_eq202}) still have to be satisfied. 
{Therefore, if $\delta>0$, then the time  constraint should be strictly met with equality, otherwise, we obtain an associated inequality for limiting the range of  time variables $\tau_k$ and $\tau_0$. % as in (\ref{solu_7}).} 
The same interpretation also applies to $\vartheta$.% which results in (\ref{solu_8}),

% \begin{align}
% f_0(\bm{\mu})\triangleq\frac{\partial\mathcal{L}}{\partial \tau_0}
% \end{align}

\section*{Appendix H: Proof of Corollary \ref{theorem7}}\label{apdix7}
%\section{Proof of Corollary \ref{theorem7}}\label{apdix7}
%%In the following, we employ the Karush-Kuhn-Tucker (KKT) conditions to analyze problem (\ref{eq14c}) which is already convex.  The partial  Lagrangian function of  (\ref{eq14c}) can be written as
%%\begin{align}\label{eq161}
%%~\mathcal{L}(\tau_0,E_k,\tau_k, \bm{\mu}, \delta, \vartheta)
%%=\,&(1+\vartheta)\sum_{k=1}^{K}\tau_kW\log_{2}\left(1+\frac{E_{k}}{\tau_k}\gamma_{k}\right) +\delta(1-\tau_0-\sum_{k=1}^{K}\tau_k) \nonumber\\
%%-\,&q \left(P_{\max}\tau_0(1-\sum_{k=1}^{K}\eta h_k )+P_{c}\tau_0+ \sum_{k=1}^{K}( E_k+p_{c}\tau_k)\right) -\vartheta R_{\mathop{\min}}\nonumber \\ +\,&\sum_{k=1}^{K}\mu_k(Q_k+ \eta P_{\max}\tau_0h_k-E_k-p_{c}\tau_k)
%%\end{align}
%%where  $\bm{\mu}$, $\delta$, and  $\vartheta$ are the non-negative Lagrange multipliers associated with  C2, C3,  and C4, respectively.  The boundary constraints C5 and C6 will be absorbed into the optimal solution in the following.
From (\ref{apdx_eq24}) and (\ref{apdx_eq25}), we know that for each scheduled user $k$, we have
\begin{align}\label{apdx_eq4}
 (1+\vartheta)W\log_2\left(1+\gamma_kp_k\right) &-(q+\mu_k)\left(\frac{p_k}{\varsigma}+p_{c}\right)-\delta=0.
\end{align}
Note that from (\ref{apdx_eq23}),  $\frac{W(1+\vartheta)\varsigma}{(q+\mu_k)\ln2}= p_k+\frac{1}{\gamma_k}$ also holds for user $k$. Substituting this relation into (\ref{apdx_eq4}) and after some manipulations, we  obtain
\begin{align}\label{apdx_eq6}
\mathcal{D}(p_k)\triangleq {W\log_2\left(1+p_k\gamma_k\right) }-\frac{W\varsigma}{\left(p_k+\frac{1}{\gamma_k}\right)\ln2}\left(\frac{p_k}{\varsigma}+p_{c}\right)- \frac{\delta}{1+\vartheta}=0.
\end{align}
If the total available transmission time is not used up, i.e., $\tau_0+\sum_{k=1}^{K}\tau_k< T_{\max}$, it follows from (\ref{apdx_eq201}) that $\delta=0$.
Note that $\mathcal{D}(p_k)$ is increasing in $p_k$. Moreover, when $p_k=0$,   $\mathcal{D}(p_k)=-\frac{W\varsigma\gamma_k}{\ln2}p_c<0$, and when $p_k\rightarrow +\infty$,  $\mathcal{D}(p_k)\rightarrow +\infty$.  Therefore, there is  always a unique solution $p_k$ for (\ref{apdx_eq6}).  Combining (\ref{apdx_eq6}) with (\ref{eq11}) and (\ref{eq12}), after some manipulations,  we conclude $p_k=p^{\star}_k$. On the other hand, if the  total available transmission time is used up, i.e., $\tau_0+\sum_{k=1}^{K}= T_{\max}$, it follows that $\delta\geq 0$. {Hence,  we conclude that $p_{k} \geq p^{\star}_k$ since $\mathcal{D}(p_k)$ is monotonically increasing with respect to $p_k$.}  Corollary \ref{theorem7} is thus proved.

\section*{Appendix I: Proof of Corollary \ref{theorem8}}\label{apdix8}
%\section{Proof of Corollary \ref{theorem8}}\label{apdix8}
If WET is activated, i.e., $\tau_0>0$, from (\ref{solu_1}), we obtain
\begin{align}\label{apdx_eq7}
\delta= P_{\max}\sum_{k=1}^{K}(q+\mu_k)h_k-q\left(\frac{P_{\max}}{\xi}+P_c\right).
\end{align}
Meanwhile, for any scheduled user $k$, it follows that  (\ref{apdx_eq6}) also holds true.
Combining   (\ref{apdx_eq7}) and (\ref{apdx_eq6}), and after some manipulations, we obtain
\begin{align}\label{apdx_eq8}
W\log_2(1+p_k\gamma_k)-\frac{W(p_k+p_c\varsigma)}{(p_k+\frac{1}{\gamma_k})\ln2}-\sum_{k=1}^K\frac{WP_{\max}\varsigma h_k}{(p_k+\frac{1}{\gamma_k})\ln2}+\frac{q\left(\frac{P_{\max}}{\xi}+p_c\right)}{1+\vartheta}=0.
\end{align}

If the energy of any user $m$ is not used up, $\mu_m=0$ holds due to the associated
complementary slackness condition in (\ref{apdx_eq200}). Thus, from (\ref{apdx_eq23}), we know that
%\begin{align}\label{apdx_eq9}
$p_m=\frac{E_m}{\tau_m}=\left[\frac{W(1+\vartheta)\varsigma}{q\ln2}-\frac{1}{\gamma_m}\right]^+, \forall\, m.$
%\end{align}
Therefore, substituting $p_m$ into (\ref{apdx_eq8}), we have
\begin{align}\label{apdx_eq10}
\log_2(1+p_k\gamma_k)-\frac{(p_k+p_c\varsigma)}{(p_k+\frac{1}{\gamma_k})\ln2}-\sum_{k=1}^K\frac{P_{\max}\varsigma h_k}{(p_k+\frac{1}{\gamma_k})\ln2}+\frac{\varsigma\left(\frac{P_{\max}}{\xi}+p_c\right)}{(p_m+\frac{1}{\gamma_m})\ln2}=0.
\end{align}
From (\ref{apdx_eq10}), we observe that the transmit powers of the scheduled users depend only on the system parameters $W$, $\xi$, $\varsigma$, $P_{\max}$,  $p_c$, and $\gamma_k$.  Moreover, the left hand side of (\ref{apdx_eq10}) is a monotonically increasing  function of $p_k$. Therefore, as long as the energy of user $m$ is not used up, (\ref{apdx_eq10}) holds true and $p_k$ remains constant. Note that if the energy of user $m$ is used up, i.e., $\mu_m>0$, $p_m=\left[\frac{W(1+\vartheta)\varsigma}{(q+\mu_m)\ln2}-\frac{1}{\gamma_m}\right]^+$ and $\mu_m$ is thereby introduced in (55).  Then, the value of $p_k$ varies with $\mu_m$.
On the other hand, since WET is used, i.e., $\tau_0>0$, and the energy of user $m$ is not used up, it can be further  shown that the total available time must be used up, i.e., $\tau_0+\sum_{k=1}^{K}\tau_k=T_{\max}$. At the same time, the required system throughput has to be satisfied, i.e., $\sum_{k=1}^{K}\tau_k\log_2(1+p_kg_k)\geq R_{\min}$.
Therefore, as $R_{\min}$ increases, the information transmission time $\sum_{k=1}^{K}\tau_k$ has to be increased since  $p_k$ remains constant. Thus, it follows that $\tau_{0}$ decreases due to the more stringent time constraint. Then, the energy harvested at each user $\eta P_{\max}h_k\tau_0$ decreases and the transmission time for any user $k\neq m$ also decreases as $\tau_k =\frac{P_{\max}h_k\tau_0+Q_k}{\frac{p_k}{\varsigma}+p_c}$.

\bibliographystyle{IEEEtran}
\bibliography{IEEEabrv,mybib}

\end{document}